\documentclass{imsart}%[ejs]{imsart}

\usepackage{amsthm,amsmath,amssymb}
\RequirePackage[numbers]{natbib}
\RequirePackage[colorlinks,citecolor=blue,urlcolor=blue]{hyperref}

% will be filled by editor:
%\doi{10.1214/154957804100000000}
%\pubyear{0000}
%\volume{0}
%\firstpage{0}
%\lastpage{0}
%\arxiv{}

\usepackage{enumerate}
\usepackage{float}
\usepackage{color}
\usepackage{booktabs,rotating}
\usepackage[nolists]{endfloat}
\usepackage{endrotfloat}
\usepackage{bm}
%\usepackage{refcheck}
%\usepackage{ulem}

%%%%%%%%%%%%%%%%%%%%%%%%%%%%%%%%%%%%%%%%%%%%%%%%%%%%%%%%%%%%%%%%
\startlocaldefs

\def \E{\mathbb{E}}
\def \P{\mathbb{P}}
\def \G{\mathbb{G}}
\def \R{\mathbb{R}}

\def\vec{\bm}
\newcommand{\1}{\mathbf{1}}
\newcommand{\as}{\xrightarrow{\text{a.s.}}}
\newcommand{\p}{\rightarrow_p}
\newcommand{\dd}{\mathrm{d}}
\newcommand{\CC}{\mathcal{C}}
\newcommand{\DD}{\mathcal{D}}
\newcommand{\FF}{\mathcal{F}}
\newcommand{\GG}{\mathcal{G}}
\newcommand{\XX}{\mathcal{X}}
\newcommand{\NN}{\mathcal{N}}
\newcommand{\var}{\mathrm{V}}
\newcommand{\cov}{\mathrm{cov}}
\renewcommand{\Pr}{\mathrm{Pr}}

\newtheorem{prop}{Proposition}[section]

\newtheorem{lem}{Lemma}[section]

\newcount\Comments  % 0 suppresses notes to selves in text
\Comments=1   % TODO: set to 0 for final version
%\kibitz{color}{comment} inserts a colored comment in the text
\newcommand{\kibitz}[2]{\ifnum\Comments=1\textcolor{#1}{#2}\fi}
% add yourself here:
%\newcommand{\lb}[1]{\kibitz{blue}   {[LB: #1]}}
%\newcommand{\ik}[1]{\kibitz{red}    {[IK: #1]}}
%\newcommand{\pv}[1]{\kibitz{green}  {[PV: #1]}}

%\renewcommand{\em}{\it}
%\newcommand{\remo}[1]{\textcolor{red}{\sout{#1}}}
%\newcommand{\add}[1]{\textcolor{blue}{{#1}}}

\endlocaldefs

%%%%%%%%%%%%%%%%%%%%%%%%%%%%%%%%%%%%%%%%%%%%%%%%%%%%%%%%%%%%%%

\begin{document}
\begin{frontmatter}

\title{Semiparametric estimation of a two-component mixture of linear regressions in which one component is known}
\runtitle{Mixture of linear regressions in which one component is known}

\begin{aug}
  \author{L. Bordes%\thanksref{t2}
    \ead[label=e1]{laurent.bordes@univ-pau.fr}},
%\and
  \author{I. Kojadinovic\ead[label=e2]{ivan.kojadinovic@univ-pau.fr}}
    %\ead[label=u1,url]{http://ikojadin.perso.univ-pau.fr}}
  
  \address{Universit\'e de Pau et des Pays de l'Adour, Laboratoire de math\'ematiques et de leurs applications, UMR CNRS 5142, B.P. 1155, 64013 Pau Cedex, France\\ \printead{e1,e2}}
  
  \author{P. Vandekerkhove
    \ead[label=e3]{pierre.vandekerkhove@univ-mlv.fr}}
  
  \address{Universit\'e Paris-Est, Marne-la-Vall\'ee, Laboratoire d'analyse et de math\'ematiques appliqu\'ees, UMR CNRS 8050, France, and Georgia Institute of Technology, UMI Georgia Tech CNRS 2958, G.W. Woodruff School of Mechanical Engineering, Atlanta, USA\\
    \printead{e3}}

  %\thankstext{t2}{Footnote to the first author with the `thankstext' command.}

  \runauthor{Bordes et al.}

\end{aug}

\begin{abstract}
A new estimation method for the two-component mixture model introduced in~\cite{Van13} is proposed. This model consists of a two-component mixture of linear regressions in which one component is entirely known while the proportion, the slope, the intercept and the error distribution of the other component are unknown. In spite of good performance for datasets of reasonable size, the method proposed in~\cite{Van13} suffers from a serious drawback when the sample size becomes large as it is based on the optimization of a contrast function whose pointwise computation requires $O(n^2)$ operations. The range of applicability of the method derived in this work is substantially larger as it relies on a method-of-moments estimator free of tuning parameters whose computation requires $O(n)$ operations. From a theoretical perspective, the asymptotic normality of both the estimator of the Euclidean parameter vector and of the semiparametric estimator of the c.d.f.\ of the error is proved under weak conditions not involving zero-symmetry assumptions. In addition, an approximate confidence band for the c.d.f.\ of the error can be computed using a weighted bootstrap whose asymptotic validity is proved. The finite-sample performance of the resulting estimation procedure is studied under various scenarios through Monte Carlo experiments. The proposed method is illustrated on three real datasets of size $n=150$, 51 and 176,343, respectively. Two extensions of the considered model are discussed in the final section: a model with an additional scale parameter for the first component, and a model with more than one explanatory variable. 
\end{abstract} 

\begin{keyword}[class=MSC]
\kwd[Primary ]{62J05}
\kwd[; secondary ]{62G08}
\end{keyword}

\begin{keyword}
\kwd{Asymptotic normality}
\kwd{identifiability}
\kwd{linear regression}
\kwd{method of moments} 
\kwd{mixture} \kwd{multiplier central limit theorem}
\kwd{weighted bootstrap}
\end{keyword}

% history:
% \received{\smonth{1} \syear{0000}}

\tableofcontents

\end{frontmatter}

\section{Introduction}

Practitioners are frequently interested in modeling the relationship between a random response variable $Y$ and a $d$-dimensional random explanatory vector $\vec X$ by means of a linear regression model estimated from a random sample $(\vec X_i,Y_i)_{1 \leq i \leq n}$ of $(\vec X,Y)$. Quite often, the homogeneity assumption claiming that the linear regression coefficients are the same for all the observations $(\vec X_1,Y_1), \dots,(\vec X_n,Y_n)$ is inadequate. To allow different parameters for different groups of observations, a Finite Mixture of Regressions (FMR) can be considered; see~\cite{GruLei06,Lei04} for nice overviews. 

Statistical inference for the fully parametric FMR model was first considered in~\cite{QuaRam78} where an estimation method based on the moment generating function was proposed. An EM estimating approach was investigated in~\cite{DeV89} in the case of two components. Variations of the latter approach were also considered in~\cite{JonMcL92} and~\cite{Tur00}. The problem of determining the number of components in the parametric FMR model was investigated in~\cite{HawAllSto01} using methods derived from the likelihood equation. In~\cite{HurJusRob03}, the authors proposed a Bayesian approach to estimate the regression coefficients and also considered an extension of the model in which the number of components is unspecified. The asymptotics of maximum likelihood estimators of parametric FMR models were studied in~\cite{ZhuZha04}. More recently, an $\ell_1$-penalized method based on a Lasso-type estimator for a high-dimensional FMR model with $d\gg n$ was proposed in~\cite{StaBuhvan10}. 

As an alternative to parametric estimation of a FMR model, some authors suggested the use of more flexible semiparametric approaches. This research direction finds its origin in~\cite{HalZho03} where $d$-variate semiparametric mixture models of random vectors with independent components were considered. The authors showed in particular that, for $d\geq 3$, it is possible to identify a two-component model without parametrizing the distributions of the component random vectors. To the best of our knowledge, Leung and Qin~\cite{LeuQin06} were the first to estimate a FMR model semiparametrically. In the two-component case, they studied the situation in which the components are related by Anderson's exponential tilt model~\cite{And79}. Hunter and Young~\cite{HunYou12} studied the identifiability of an $m$-component semiparametric FMR model and numerically investigated an EM-type algorithm for estimating its parameters. Vandekerkhove~\cite{Van13} proposed an $M$-estimation method for a two-component semiparametric mixture of regressions with symmetric errors in which one component is known. The latter approach was applied to data extracted from a high-density microarray and modeled in~\cite{MarMarBer08} by means of a parametric FMR. 

The semiparametric approach proposed in~\cite{Van13} is of interest for two main reasons. Due to its semiparametric nature, the method allows to detect complex structures in the error of the unknown regression component. It can additionally be regarded as a tool to assess the relevance of results obtained using EM-type algorithms. The approach has however three important drawbacks. First, it is not theoretically valid when the errors are not symmetric. Second, it is very computationally expensive for large datasets as it requires the optimization of a contrast function whose pointwise evaluation requires $O(n^2)$ operations. Third, the underlying optimization method requires the choice of a weight function and initial values for the Euclidean parameters, neither choices being data-driven.  

The object of interest of this paper is the two-component FMR model studied in~\cite{Van13} in which one component is entirely known while the proportion, the slope, the intercept and the error distribution of the other component are unknown. The estimation of the Euclidean parameter vector is achieved through the method of moments. Semiparametric estimators of the c.d.f.\ and the p.d.f.\ of the error of the unknown component are proposed. The proof of the asymptotic normality of the Euclidean and functional estimators is not based on zero-symmetry-like assumptions frequently found in the literature but only involves finite moments of order eight for the explanatory variable and the boundness of the p.d.f.s of the errors and their derivatives. The almost sure uniform consistency of the estimator of the p.d.f.\ of the unknown error is obtained under similar conditions. A consequence of these theoretical results is that, unlike for EM-type approaches, the estimation uncertainty can be assessed through large-sample standard errors for the Euclidean parameters and by means of an approximate confidence band for the c.d.f.\ of the unknown error. The latter is computed using a weighted bootstrap whose asymptotic validity is proved. 

From a practical perspective, it is worth mentioning that the range of applicability of the resulting semiparametric estimation procedure is substantially larger than the one proposed in~\cite{Van13} as its computation only requires $O(n)$ operations  and no tuning of parameters such as starting values or weight functions. As a consequence, very large datasets can be easily processed. For instance, as shall be seen in Section~\ref{illus}, the estimation of the parameters of the model from the ChIPmix data considered in~\cite{MarMarBer08} consisting of $n=176,343$ observations took less than 30 seconds on one 2.4 GHz processor. The estimation of the same model from a subset of $n=30,000$ observations using the method in~\cite{Van13} took more than two days on a similar processor.

The paper is organized as follows. Section~\ref{PN} is devoted to a detailed description of the model, while Section~\ref{ident} is concerned with its identifiability. The estimators of the Euclidean parameter vector and of the functional parameter are investigated in detail in Section~\ref{estimation}. The finite-sample performance of the proposed estimation method is studied for various scenarios through Monte Carlo experiments in Section~\ref{mc}. In Section~\ref{illus}, the proposed method is applied to the tone data analyzed, among others, in~\cite{HunYou12}, to the aphids dataset studied initially in~\cite{BoiSinSinTaiTur98}, and to the ChIPmix data considered in~\cite{MarMarBer08}. Two extensions of the  FMR model under consideration are discussed in the last section: a model with an additional scale parameter for the first component, and a model with more than one explanatory variable.

Note finally that all the computations reported in this work were carried out using the R statistical system~\cite{Rsystem} and that the main corresponding R functions are available on the web page of the second author. 

\section{Problem and notation}
\label{PN}

Let $Z$ be a Bernoulli random variable with unknown parameter $\pi_0 \in [0,1]$, let $X$ be an $\XX$-valued random variable with $\XX \subset \R$, and let $\varepsilon^*,\varepsilon^{**}$ be two absolutely continuous centered real valued random variables with finite variances and independent of $X$. Assume additionally that $Z$ is independent of $X$, $\varepsilon^*$ and $\varepsilon^{**}$. Furthermore, for fixed $\alpha_0^*,\beta_0^*,\alpha_0^{**},\beta_0^{**} \in \R$, let $\tilde Y$ be the random variable defined by
$$
\tilde Y = (1-Z)(\alpha_0^*+\beta_0^* X+\varepsilon^*)+Z(\alpha_0^{**}+\beta_0^{**} X+\varepsilon^{**}),
$$
i.e.,
\begin{equation}
\label{twocomp}
\tilde Y = \left\{
\begin{array}{lll}
\alpha_0^*+\beta_0^*X+\varepsilon^* &\mbox{ if } & Z=0, \\
\alpha_0^{**}+\beta_0^{**}X+\varepsilon^{**} &\mbox{ if } & Z=1.
\end{array}\right. 
\end{equation}
The above display is the equation of a mixture of two linear regressions with $Z$ as mixing variable.

Let $F^*$ and $F^{**}$ denote the c.d.f.s of $\varepsilon^*$ and $\varepsilon^{**}$, respectively. Furthermore, $\alpha_0^*$, $\beta_0^*$ and $F^*$ are assumed known while $\alpha_0^{**}$, $\beta_0^{**}$, $\pi_0$ and $F^{**}$ are assumed unknown. The aim of this work is to propose and study an estimator of $(\alpha_0^{**},\beta_0^{**},\pi_0,F^{**})$ based on $n$ i.i.d.\ copies $(X_i,\tilde Y_i)_{1\leq i\leq n}$ of $(X,\tilde Y)$. Now, define $Y=\tilde Y-\alpha_0^*-\beta_0^* X$, $\alpha_0=\alpha_0^{**}-\alpha_0^*$ and $\beta_0=\beta_0^{**}-\beta_0^*$, and notice that 
\begin{equation}
\label{model}
 Y = \left\{
\begin{array}{lll}
\varepsilon^* &\mbox{ if } & Z=0, \\
\alpha_0+\beta_0 X+\varepsilon &\mbox{ if } & Z=1,
\end{array}\right. 
\end{equation}
where, to simplify the notation, $\varepsilon = \varepsilon^{**}$ and $F= F^{**}$. It follows that the previous estimation problem is equivalent to the problem of estimating $(\alpha_0,\beta_0,\pi_0,F)$ from the observation of $n$ i.i.d.\ copies $(X_i,Y_i)_{1\leq i\leq n}$ of $(X,Y)$.\\

As we continue, the unknown c.d.f.s of $X$ and $Y$ will be denoted by $F_X$ and $F_Y$, respectively. Also, for any $x \in \XX$, the conditional c.d.f.\ of $Y$ given $X=x$ will be denoted by $F_{Y|X}(\cdot|x)$, and we have
\begin{equation}
\label{FYX}
F_{Y|X}(y|x)=(1-\pi_0)F^*(y)+\pi_0 F(y-\alpha_0-\beta_0 x), \qquad y\in\R. 
\end{equation}
It follows that, for any $x \in \XX$, $f_{Y|X}(\cdot|x)$, the conditional p.d.f.\ of $Y$ given $X=x$, can be expressed as
\begin{equation}
\label{fYX}
f_{Y|X}(y|x)=(1-\pi_0)f^*(y)+\pi_0 f(y-\alpha_0-\beta_0 x), \qquad y\in\R,
\end{equation}
where $f^*$ and $f$ are the p.d.f.s of $\varepsilon^*$ and $\varepsilon$, assuming that they exist on $\R$.

Note that, as shall be discussed in Section~\ref{extension}, it is possible to consider a slightly more general version of the model stated in~\eqref{model} involving an unknown scale parameter for the first component. This more elaborate model remains identifiable and estimation through the method of moments is theoretically possible. However, from a practical perspective, estimation of this scale parameter through the method of moments seems quite unstable insomuch as that an alternative estimation method appears to be required. Notice also that another more straightforward extension of the model will be considered in Section~\ref{extension} allowing to deal with more than one explanatory variable.

\section{Identifiability}
\label{ident}

Since~(\ref{model}) is clearly equivalent to
\begin{equation}
\label{Y}
Y = (1-Z) \varepsilon^* + Z (\alpha_0 + \beta_0 X + \varepsilon),
\end{equation}
we immediately obtain that
\begin{equation}
\label{M1}
\E(Y|X) = \pi_0\alpha_0 + \pi_0\beta_0 X \qquad \mbox{a.s.}
\end{equation}
It follows that the coefficients $\lambda_{0,1}=\pi_0\alpha_0$ and $\lambda_{0,2}=\pi_0\beta_0$ can be identified from~(\ref{M1}) if $|\XX| > 1$. In addition, we have
\begin{eqnarray}
\E(Y^2|X)&=&\E[\{(1-Z)\varepsilon^*+Z(\alpha_0+\beta_0 X+\varepsilon)\}^2|X]\quad \mbox{a.s.}\nonumber \\
&=& \E(1-Z)\E\{(\varepsilon^*)^2\}+\E(Z)\E\{(\alpha_0+\beta_0 X)^2+\varepsilon^2|X\}\quad \mbox{a.s.}\nonumber \\
&=& (1-\pi_0)(\sigma_0^*)^2+\pi_0\left(\alpha_0^2+2\alpha_0\beta_0 X+\beta_0^2X^2+\sigma_0^2\right)\quad \mbox{a.s.}\nonumber \\
&=& (1-\pi_0)(\sigma_0^*)^2+\pi_0(\alpha_0^2+\sigma_0^2)+ 2\pi_0\alpha_0\beta_0 X +\pi_0\beta_0^2X^2 \, \mbox{a.s.}, \label{M2}
\end{eqnarray}
where $\sigma_0^*$ and $\sigma_0$ are the standard deviations of $\varepsilon^*$ and $\varepsilon$, respectively. If $\XX$ contains three points ${x_1,x_2,x_3}$ such that the vectors $\{(1,x_1,x^2_1), (1,x_2,x^2_2), (1,x_3,x^2_3)\}$ are linearly independent then, from (\ref{M2}), we can identify the coefficients  $\lambda_{0,3}=(1-\pi_0)(\sigma_0^*)^2+\pi_0(\alpha_0^2+\sigma_0^2)$, $\lambda_{0,4}=2\pi_0\alpha_0\beta_0$ and $\lambda_{0,5}=\pi_0\beta_0^2$. It then remains to identify $\alpha_0$, $\beta_0$ and $\pi_0$ from the equations
\begin{equation}
\label{relations}
\left\{
\begin{array}{lll}
\lambda_{0,1} & = & \pi_0\alpha_0  \\
\lambda_{0,2} & = & \pi_0\beta_0 \\
\lambda_{0,3} & = & (1-\pi_0)(\sigma_0^*)^2+\pi_0(\alpha_0^2+\sigma_0^2) \\
\lambda_{0,4} & = & 2\pi_0\alpha_0\beta_0 = 2\alpha_0\lambda_{0,2}\\
\lambda_{0,5} & = & \pi_0\beta_0^2 = \beta_0\lambda_{0,2}. 
\end{array}\right.
\end{equation}
From the above system, we see that $\alpha_0$, $\beta_0$ and $\pi_0$ can be identified provided $\pi_0\beta_0 \neq 0$. If $\pi_0 = 0$, then $\alpha_0$ and $\beta_0$ cannot be identified, and, as shall become clear in the sequel, neither can $F$. If $\beta_0 = 0$, then the model in~\eqref{model} coincides with the model studied in~\cite{BorDelVan06} where it was shown that identifiability does not necessary hold even if $\varepsilon^*$ is assumed to have a zero-symmetric distribution. It follows that for identifiability to hold it is necessary that the unknown component actually exists ($\pi_0 \in (0,1]$) and that its slope is non-zero ($\beta_0 \neq 0$). The latter conditions will be assumed in the rest of the paper.

Before discussing the identifiability of the functional part of the model, it is important to notice that the conditions on $\XX$ stated above are merely sufficient conditions. For instance, if $\XX = \{-1,1\}$, then $\lambda_{0,1}=\pi_0\alpha_0$ and $\lambda_{0,2}=\pi_0\beta_0$ can be identified from~(\ref{M1}) and $\lambda_{0,4} = 2\pi_0\alpha_0\beta_0$ can be identified from~(\ref{M2}), which is enough to uniquely determine $(\alpha_0,\beta_0,\pi_0)$.

Let us finally consider the functional part $F$ of the model. For any $\vec \eta=(\alpha,\beta) \in \R^2$, denote by $J(\cdot,\vec \eta)$ the c.d.f.\ defined by 
\begin{equation}
\label{J}
J(t,\vec \eta)=\Pr(Y\leq t+\alpha+\beta X), \qquad t\in\R. 
\end{equation}
For any $t \in \R$, this can be rewritten as
\begin{align*}
J(t,\vec \eta)=& \int_{\R}F_{Y|X}(t+\alpha+\beta x|x) \dd F_X(x)\\
=&(1-\pi_0)\int_{\R}F^*(t+\alpha+\beta x) \dd F_X(x)\\ &+\pi_0\int_{\R}F \{ t+(\alpha-\alpha_0)+(\beta-\beta_0) x \} \dd F_X(x).
\end{align*} 
For $\vec \eta = \vec \eta_0 = (\alpha_0,\beta_0)$, we then obtain
$$
J(t,\vec \eta_0) = (1-\pi_0)\int_{\R}F^*(t+\alpha_0+\beta_0 x) \dd F_X(x)+\pi_0 F(t), \qquad t \in \R.
$$
Now, for any $\vec \eta \in \R^2$, let $K(\cdot,\vec \eta)$ be defined by
\begin{equation}
\label{K}
K(t,\vec \eta)=\int_{\R}F^*(t+\alpha+\beta x) \dd F_X(x), \qquad t \in \R.
\end{equation}
It follows that $F$ is identified since
\begin{equation}
\label{F}
F(t)=\frac{1}{\pi_0} \left\{ J(t,\vec \eta_0)-(1-\pi_0) K(t,\vec \eta_0)  \right\}, \qquad t \in \R.
\end{equation}
The above equation is at the root of the derivation of an estimator for $F$.

\section{Estimation}
\label{estimation}

Let $P$ be the probability distribution of $(X,Y)$. For ease of exposition, we will frequently use the notation adopted in the theory of empirical processes as presented in~\cite{Kos08,van98,vanWel96} for instance. Given a measurable function $f:\R^2 \to \R^k$, for some integer $k \geq 1$, $P f$ will denote the integral $\int f \dd P$. Also, the empirical measure obtained from the random sample $(X_i,Y_i)_{1 \leq i \leq n}$ will be denoted by $\P_n=n^{-1}\sum_{i=1}^n\delta_{X_i,Y_i}$, where $\delta_{x,y}$ is the probability distribution that assigns a mass of 1 at $(x,y)$. The expectation of $f$ under the empirical measure is then $\P_n f = n^{-1} \sum_{i=1}^n\ f(X_i,Y_i)$ and the quantity $\G_n f = \sqrt{n} (\P_n f - P f)$ is the {\em empirical process} evaluated at $f$. The arrow~`$\leadsto$' will be used to denote weak convergence in the sense of Definition~1.3.3 in~\cite{vanWel96} and, for any set $S$, $\ell^\infty(S)$ will stand for the space of all bounded real-valued functions on $S$ equipped with the uniform metric. Key results and more details can be found for instance in~\cite{Kos08, van98, vanWel96}.
 
\subsection{Estimation of the Euclidean parameter vector}
\label{euclidean-parameter}

To estimate the Euclidean parameter vector $(\alpha_0,\beta_0,\pi_0) \in \R \times \R \setminus \{0\} \times (0,1]$, we first need to estimate the vector $\vec\lambda_0 = (\lambda_{0,1},\dots,\lambda_{0,5}) \in \R^5$ whose components were expressed in terms of $\alpha_0$, $\beta_0$ and $\pi_0$ in \eqref{relations}. From~(\ref{M1}) and~(\ref{M2}), it is natural to consider the regression function 
$$
d_n(\vec\lambda)=\P_n\varphi_{\vec\lambda}, \qquad \vec\lambda \in \R^5,
$$
where, for any $ \vec\lambda \in \R^5$, $\varphi_{\vec\lambda}:\R^2 \to \R$ is defined by
\begin{equation}
\label{varphi_lambda}
\varphi_{\vec\lambda}(x,y)=(y-\lambda_1-\lambda_2x)^2+(y^2-\lambda_3-\lambda_4x-\lambda_5x^2)^2, \qquad x,y \in \R.
\end{equation}
As an estimator of $\vec\lambda_0=\arg\min_{\vec \lambda} P \varphi_{\vec\lambda}$, we then naturally consider $\vec\lambda_n=\arg\min_{\vec \lambda}d_n(\vec\lambda)$ that satisfies 
$$
\dot{d}_n(\vec\lambda_n)=\P_n\dot \varphi_{\vec\lambda_n}=0,
$$
where $\dot \varphi_{\vec\lambda}$,  the gradient  of $\varphi_{\vec\lambda}$ with respect to $\vec \lambda$, is given by
$$
\dot \varphi_{\vec\lambda}(x,y)=-2\left(
\begin{array}{c}
y-\lambda_1-\lambda_2 x\\
x(y-\lambda_1-\lambda_2 x)\\
y^2-\lambda_3-\lambda_4 x-\lambda_5x^2\\
x(y^2-\lambda_3-\lambda_4 x-\lambda_5x^2)\\
x^2(y^2-\lambda_3-\lambda_4 x-\lambda_5x^2)\\
\end{array}\right), \qquad x,y \in \R.
$$
Now, for any integers $p,q \geq 1$, define
$$
\overline{X^pY^q}=\frac{1}{n}\sum_{i=1}^nX_i^pY_i^q,
$$
and let
$$
\Lambda_n=2\left(
\begin{array}{ccccc}
1  & \overline{X}   & 0  & 0 & 0 \\
\overline{X}  & \overline{X^2} & 0  & 0 & 0 \\
0 & 0 & 1 & \overline{X} & \overline{X^2}  \\
0 & 0 & \overline{X} & \overline{X^2} & \overline{X^3}\\
0 & 0 & \overline{X^2} & \overline{X^3} & \overline{X^4}\\
\end{array}\right)
\quad\mbox{ and }\quad
\vec\Upsilon_n=2\left(
\begin{array}{c}
\overline{Y}\\
\overline{XY}\\
\overline{Y^2}\\
\overline{XY^2}\\
\overline{X^2Y^2}
\end{array}\right),
$$
which respectively estimate
{\small $$
\Lambda_0=2\left(
\begin{array}{ccccc}
1  & \E(X)   & 0  & 0 & 0 \\
 \E(X) & \E(X^2) & 0  & 0 & 0 \\
0 & 0 & 1 &  \E(X) &  \E(X^2)  \\
0 & 0 &  \E(X) & \E(X^2) & \E(X^3)\\
0 & 0 & \E(X^2) & \E(X^3) & \E(X^4)\\
\end{array}\right)
\mbox{ and }
\vec\Upsilon_0=2\left(
\begin{array}{c}
\E(Y)\\
\E(XY)\\
\E(Y^2)\\
\E(XY^2)\\
\E(X^2Y^2)
\end{array}\right).
$$}
The linear equation $\P_n\dot \varphi_{\vec\lambda_n}=0$ can then equivalently be rewritten as $\Lambda_n \vec \lambda_n = \vec\Upsilon_n$. Provided the matrices $\Lambda_n$ and $\Lambda_0$ are invertible, we can write $\vec\lambda_n=\Lambda_n^{-1}\vec\Upsilon_n$ and $\vec\lambda_0=\Lambda_0^{-1}\vec\Upsilon_0$. Notice that, in practice, this amounts to performing an ordinary least-squares linear regression of $Y$ on $X$ to obtain $\lambda_{n,1}$ and $\lambda_{n,2}$, while $\lambda_{n,3}$, $\lambda_{n,4}$ and $\lambda_{n,5}$ are given by an ordinary least-squares linear regression of $Y^2$ on $X$ and $X^2$.

To obtain an estimator of $(\alpha_0,\beta_0,\pi_0)$, we use the relationships induced by~(\ref{M1}) and~(\ref{M2}) and recalled in~(\ref{relations}). Leaving the third equation aside because it involves the unknown standard deviation $\sigma_0$ of~$\varepsilon$, we obtain three possible estimators of~$\alpha_0$:
$$
\alpha_n^{(1)} = \frac{\lambda_{n,1} \lambda_{n,5}}{\lambda_{n,2}^2},  \qquad \alpha_n^{(2)} = \frac{\lambda_{n,4}}{2 \lambda_{n,2}}, \qquad \mbox{or} \qquad \alpha_n^{(3)} = \frac{\lambda_{n,4}^2}{4 \lambda_{n,1} \lambda_{n,5}},
$$
three possibles estimators of $\beta_0$:
$$
\beta_n^{(1)} = \frac{\lambda_{n,5}}{\lambda_{n,2}},  \qquad \beta_n^{(2)} = \frac{\lambda_{n,4}}{2 \lambda_{n,1}}, \qquad \mbox{or} \qquad \beta_n^{(3)} = \frac{\lambda_{n,2} \lambda_{n,4}^2}{4 \lambda_{n,5} \lambda_{n,1}^2},
$$
and, three possibles estimators of $\pi_0$:
$$
\pi_n^{(1)} = \frac{\lambda_{n,2}^2}{\lambda_{n,5}},  \qquad \pi_n^{(2)} = \frac{2 \lambda_{n,1} \lambda_{n,2}}{\lambda_{n,4}}, \qquad \mbox{or} \qquad \pi_n^{(3)} = \frac{4 \lambda_{n,1}^2 \lambda_{n,5}}{\lambda_{n,4}^2}.
$$
There are therefore 27 possible estimators of $(\alpha_0,\beta_0,\pi_0)$. Their asymptotics can be obtained under reasonable conditions similar to those stated in Assumptions A1 and A2 below. Unfortunately, all 27 estimators turned out to behave quite poorly in small samples. This prompted us to look for alternative estimators within the ``same class''. 

We now describe an estimator of $(\alpha_0,\beta_0,\pi_0)$ that was obtained empirically and that behaves significantly better for small samples than the aforementioned ones. The new regression function under consideration is $d_n(\vec\gamma)=\P_n\varphi_{\vec\gamma}$, $\vec\gamma\in\R^8$, where, for any $(x,y)\in\R^2$,
\begin{multline}
\label{varphi_gamma}
\varphi_{\vec \gamma}(x,y)=(y-\gamma_1-\gamma_2x)^2+(y^2-\gamma_3-\gamma_4x^2)^2 \\ +(x-\gamma_5)^2+(x^2-\gamma_6)^2+(x^3-\gamma_7)^2+(x^4-\gamma_8)^2.
\end{multline}
As previously, let $\vec\gamma_n=\arg\min_{\gamma} d_n(\vec\gamma)$ be the estimator of $\vec\gamma_0=\arg\min_{\gamma} P \varphi_{\vec \gamma}$, and notice that the main difference between the approach based on~\eqref{varphi_lambda} and the approach based on~\eqref{varphi_gamma} is that the former involves the linear regression of $Y$ on $X$ and $X^2$, while the latter relies on the linear regression of $Y^2$ on $X^2$ only, which appears to result in better estimation accuracy. Now, let
$$
\Gamma_n=2\left(\begin{array}{cccccccc}
1 & \overline{X} & 0 & 0 & 0 & 0 & 0 & 0 \\
\overline{X} & \overline{X^2} & 0 & 0 & 0 & 0 & 0 & 0 \\
0 & 0 & 1 & \overline{X^2} & 0 & 0 & 0 & 0 \\
0 & 0 & \overline{X^2} & \overline{X^4} & 0 & 0 & 0 & 0 \\
0 & 0 & 0 & 0 & 1 & 0 & 0 & 0 \\
0 & 0 & 0 & 0 & 0 & 1 & 0 & 0 \\
0 & 0 & 0 & 0 & 0 & 0 & 1 & 0 \\
0 & 0 & 0 & 0 & 0 & 0 & 0 & 1 
\end{array}\right)
\qquad
\mbox{and}
\qquad
\vec\theta_n=2\left(\begin{array}{cccccccc}
\overline{Y} \\ \overline{XY} \\ \overline{Y^2} \\ \overline{X^2Y^2} \\ \overline{X} \\ \overline{X^2} \\ \overline{X^3} \\ \overline{X^4} \end{array}\right),
$$
which respectively estimate
{\small $$
\Gamma_0=2\left(\begin{array}{cccccccc}
1 & \E(X) & 0 & 0 & 0 & 0 & 0 & 0  \\
\E(X) & \E(X^2) & 0 & 0 & 0 & 0 & 0 & 0  \\
0 & 0 & 1 & \E(X^2) & 0 & 0 & 0 & 0 \\
0 & 0 & \E(X^2) & \E(X^4) & 0 & 0 & 0 & 0  \\
0 & 0 & 0 & 0 & 1 & 0 & 0 & 0  \\
0 & 0 & 0 & 0 & 0 & 1 & 0 & 0 \\
0 & 0 & 0 & 0 & 0 & 0 & 1 & 0 \\
0 & 0 & 0 & 0 & 0 & 0 & 0 & 1 
\end{array}\right)
\mbox{ and }
\vec\theta_0=2\left(\begin{array}{c}
\E(Y) \\ \E(XY) \\ \E(Y^2)\\ \E(X^2Y^2)\\ \E(X)\\ \E(X^2)\\ \E(X^3)\\ \E(X^4) 
\end{array} \right).
$$} 
Then, proceeding as for the estimators based on~\eqref{varphi_lambda}, we have, provided the matrices $\Gamma_n$ and $\Gamma_0$ are invertible, that $\vec\gamma_n=\Gamma_n^{-1}\vec\theta_n$ and $\vec \gamma_0 = \Gamma_0^{-1}\vec\theta_0$. In practice, $\gamma_{n,1}$ and $\gamma_{n,2}$ (resp.\  $\gamma_{n,3}$ and $\gamma_{n,4}$) merely follow from the ordinary least-squares linear regression of $Y$ on $X$ (resp.\ $Y^2$ on $X^2$), while $\gamma_{n,4+i} = \overline{X^i}$ for $i \in \{1,\dots,4\}$.
 
To obtain an estimator of $(\alpha_0,\beta_0,\pi_0)$, we immediately have from the second term in~\eqref{varphi_gamma} corresponding to the linear regression of $Y^2$ on $X^2$ that
$$
\gamma_{0,4} = \frac{\cov(X^2,Y^2)}{\var(X^2)} =  \frac{\cov(X^2,Y^2)}{\gamma_{0,8} - \gamma_{0,6}^2},
$$
where the second equality comes from the fact that $\gamma_{0,4+i} = \E(X^i)$ for $i \in \{1,\dots,4\}$. Now, using~(\ref{Y}), we obtain
\begin{multline*}
\cov(X^2,Y^2) = \cov [ X^2, \{(1-Z) \varepsilon^* + Z (\alpha_0 + \beta_0 X + \varepsilon) \}^2 ] \\ = \pi_0 \beta_0^2 \var(X^2) + 2 \pi_0 \alpha_0 \beta_0 \cov(X^2,X).
\end{multline*}
From the first term in~\eqref{varphi_gamma} corresponding to the linear regression of $Y$ on $X$ and~\eqref{M1}, we have that $\gamma_{0,1} = \pi_0 \alpha_0$ and $\gamma_{0,2} = \pi_0 \beta_0$. Combining these with the previous display, we obtain
$$
\cov(X^2,Y^2) = \gamma_{0,2} \beta_0 (\gamma_{0,8} - \gamma_{0,6}^2) + 2 \gamma_{0,1} \beta_0 (\gamma_{0,7} - \gamma_{0,5} \gamma_{0,6}).
$$
This leads to the following estimator of $(\alpha_0,\beta_0,\pi_0)$:
\begin{align*}
\beta_n & = g^\beta(\vec\gamma_n)  = \frac{\gamma_{n,4}}{\gamma_{n,2} + 2\gamma_{n,1}(\gamma_{n,7}-\gamma_{n,5}\gamma_{n,6})/(\gamma_{n,8}-\gamma_{n,6}^2)},\\
\pi_n & = g^\pi(\vec\gamma_n)  =\frac{\gamma_{n,2}}{\beta_n}, \\
\alpha_n & =  g^\alpha(\vec\gamma_n) = \frac{\gamma_{n,1}}{\pi_n}.
\end{align*}
As we continue, the subsets of $\R^8$ on which the functions $g^\alpha$, $g^\beta$ and $g^\pi$ exist and are differentiable will be denoted by $\DD^\alpha$, $\DD^\beta$ and $\DD^\pi$, respectively, and $\DD^{\alpha,\beta,\pi}$ will stand for $\DD^\alpha \cap \DD^\beta \cap \DD^\pi$. 

To derive the asymptotic behavior of $(\alpha_n,\beta_n,\pi_n) = (g^\alpha(\vec\gamma_n),g^\beta(\vec\gamma_n),g^\pi(\vec\gamma_n))$, we consider the following assumptions:
\begin{enumerate}[{A}1.]
\item (i)~$X$ has a finite fourth order moment; (ii)~$X$ has a finite eighth order moment. 
\item the variances of $X$ and $X^2$ are strictly positive and finite.

\end{enumerate}
Clearly, Assumption A1~(ii) implies Assumption A1~(i), and Assumption A2 implies that the matrix $\Gamma_0$ defined above is invertible.

The following result, proved in Appendix~\ref{proof_euclidean}, characterizes the asymptotic behavior of the estimator $(\alpha_n,\beta_n,\pi_n)$.

\begin{prop}
\label{euclidean} 
Assume that $\vec \gamma_0 \in \DD^{\alpha,\beta,\pi}$.  
\begin{enumerate}[(i)]
\item Under Assumptions A1~(i) and A2, $(\alpha_n,\beta_n,\pi_n) \as (\alpha_0,\beta_0,\pi_0)$.
\item Suppose that Assumptions A1~(ii) and A2 are satisfied and let $\Psi_{\vec\gamma}$ be the 3 by 8 matrix defined by
$$
\Psi_{\vec\gamma}=\left(
\begin{array}{ccc}
\frac{\partial g^\alpha}{\partial \gamma_1} & \cdots & \frac{\partial g^\alpha}{\partial \gamma_8} \\
\\
\frac{\partial g^\beta}{\partial \gamma_1} & \cdots & \frac{\partial g^\beta}{\partial \gamma_8} \\
\\
\frac{\partial g^\pi}{\partial \gamma_1} & \cdots & \frac{\partial g^\pi}{\partial \gamma_8} 
\end{array}\right)(\vec \gamma), \qquad \vec \gamma \in \DD^{\alpha,\beta,\pi}.
$$
Then,
$$
\sqrt{n}(\alpha_n-\alpha_0,\beta_n-\beta_0,\pi_n-\pi_0) =  - \G_n (\Psi_{\vec \gamma_0} \Gamma_0^{-1}\dot \varphi_{\vec\gamma_0}) + o_P(1),
$$
where $\G_n = \sqrt{n} (\P_n - P)$. As a consequence, $\sqrt{n}(\alpha_n-\alpha_0,\beta_n-\beta_0,\pi_n-\pi_0)$ converges in distribution to a centered multivariate normal random vector with covariance matrix $\Sigma =\Psi_{\vec \gamma_0} \Gamma_0^{-1}  P (\dot \varphi_{\vec\gamma_0}  \dot \varphi_{\vec\gamma_0}^\top ) \Gamma_0^{-1} \Psi_{\vec \gamma_0}^\top$, which can be consistently estimated by $\Sigma_n = \Psi_{\vec \gamma_n} \Gamma_n^{-1} \P_n ( \dot \varphi_{\vec\gamma_n}  \dot \varphi_{\vec\gamma_n}^\top ) \Gamma_n^{-1} \Psi_{\vec \gamma_n}^\top$ in the sense that $\Sigma_n \as \Sigma$.
\end{enumerate}
\end{prop}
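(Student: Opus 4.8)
The plan is to treat $\vec\gamma_n=\Gamma_n^{-1}\vec\theta_n$ as an explicit, essentially linear statistic and then transfer its asymptotics to $(\alpha_n,\beta_n,\pi_n)$ through the smooth maps $g^\alpha,g^\beta,g^\pi$ by the delta method. The decisive structural observation, used throughout, is that $\varphi_{\vec\gamma}$ is a quadratic polynomial in $\vec\gamma$ whose coefficients are monomials $X^pY^q$ of bounded degree; consequently $\dot\varphi_{\vec\gamma}$ is affine in $\vec\gamma$ and its $\vec\gamma$-Hessian is the constant matrix-valued function whose $P$- and $\P_n$-means are exactly $\Gamma_0$ and $\Gamma_n$. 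This removes any need for empirical-process equicontinuity in $\vec\gamma$: the estimating-equation linearization will be \emph{exact}.

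For part~(i), I would first note that, under Assumption~A1(i) together with the standing finite-variance hypotheses on $\varepsilon^*,\varepsilon$ of Section~\ref{PN}, every entry of $\Gamma_n$ and $\vec\theta_n$ is an empirical moment $\overline{X^pY^q}$ with a finite population counterpart, so the strong law of large numbers gives $\Gamma_n\as\Gamma_0$ and $\vec\theta_n\as\vec\theta_0$. Assumption~A2 makes $\Gamma_0$ invertible, and since matrix inversion is continuous at invertible matrices, $\Gamma_n^{-1}\as\Gamma_0^{-1}$ on the a.s.-eventual event that $\Gamma_n$ is invertible; hence $\vec\gamma_n=\Gamma_n^{-1}\vec\theta_n\as\Gamma_0^{-1}\vec\theta_0=\vec\gamma_0$. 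Since $\vec\gamma_0\in\DD^{\alpha,\beta,\pi}$ and $g^\alpha,g^\beta,g^\pi$ are continuous there, the continuous-mapping theorem yields $(\alpha_n,\beta_n,\pi_n)\as(g^\alpha,g^\beta,g^\pi)(\vec\gamma_0)=(\alpha_0,\beta_0,\pi_0)$, the last equality being precisely the content of the moment computations carried out just before the statement.

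For the linearization in part~(ii), I would subtract the two linear identities $\Gamma_n\vec\gamma_n=\vec\theta_n$ and $\P_n\dot\varphi_{\vec\gamma_0}=\Gamma_n\vec\gamma_0-\vec\theta_n$ (the latter being the affine form of $\dot\varphi$ averaged under $\P_n$) to get the exact identity $\P_n\dot\varphi_{\vec\gamma_0}=-\Gamma_n(\vec\gamma_n-\vec\gamma_0)$, that is, $\sqrt n(\vec\gamma_n-\vec\gamma_0)=-\Gamma_n^{-1}\sqrt n\,\P_n\dot\varphi_{\vec\gamma_0}$. Because $\vec\gamma_0$ minimizes the smooth convex map $\vec\gamma\mapsto P\varphi_{\vec\gamma}$, the first-order condition gives $P\dot\varphi_{\vec\gamma_0}=0$, so $\sqrt n\,\P_n\dot\varphi_{\vec\gamma_0}=\G_n\dot\varphi_{\vec\gamma_0}$. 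Under Assumption~A1(ii) --- the order-eight moment being exactly what controls the $X^2Y^2$ entries of $\dot\varphi_{\vec\gamma_0}$ --- together with the finiteness of the relevant error moments, one has $\dot\varphi_{\vec\gamma_0}\in L^2(P)$, so the multivariate central limit theorem gives $\G_n\dot\varphi_{\vec\gamma_0}\leadsto N_8(0,P(\dot\varphi_{\vec\gamma_0}\dot\varphi_{\vec\gamma_0}^\top))$ and in particular $\G_n\dot\varphi_{\vec\gamma_0}=O_P(1)$. Combining this with $\Gamma_n^{-1}\as\Gamma_0^{-1}$ via Slutsky's lemma, and pulling the constant matrix $\Gamma_0^{-1}$ inside the empirical process, yields $\sqrt n(\vec\gamma_n-\vec\gamma_0)=-\G_n(\Gamma_0^{-1}\dot\varphi_{\vec\gamma_0})+o_P(1)$. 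A final application of the delta method at $\vec\gamma_0\in\DD^{\alpha,\beta,\pi}$, with Jacobian $\Psi_{\vec\gamma_0}$, produces the announced representation $\sqrt n(\alpha_n-\alpha_0,\beta_n-\beta_0,\pi_n-\pi_0)=-\G_n(\Psi_{\vec\gamma_0}\Gamma_0^{-1}\dot\varphi_{\vec\gamma_0})+o_P(1)$, whence the limit $N_3(0,\Sigma)$ follows as the continuous linear image of the above CLT.

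It remains to argue $\Sigma_n\as\Sigma$, which I would do by factoring $\Sigma_n$ and treating each piece separately: $\Psi_{\vec\gamma_n}\as\Psi_{\vec\gamma_0}$ because the partials of the $g$'s are continuous on $\DD^{\alpha,\beta,\pi}$ and $\vec\gamma_n\as\vec\gamma_0$; $\Gamma_n^{-1}\as\Gamma_0^{-1}$ as in part~(i); and $\P_n(\dot\varphi_{\vec\gamma_n}\dot\varphi_{\vec\gamma_n}^\top)\as P(\dot\varphi_{\vec\gamma_0}\dot\varphi_{\vec\gamma_0}^\top)$. For this last convergence I would again avoid a uniform law of large numbers by exploiting the polynomial structure: each entry of $\P_n(\dot\varphi_{\vec\gamma_n}\dot\varphi_{\vec\gamma_n}^\top)$ is a fixed polynomial in the components of $\vec\gamma_n$ whose coefficients are empirical moments $\overline{X^pY^q}$ of bounded degree, each converging a.s.\ by the strong law under Assumption~A1(ii); since $\vec\gamma_n\as\vec\gamma_0$, continuity of polynomials gives entrywise a.s.\ convergence to the same polynomial evaluated at $\vec\gamma_0$ with population-moment coefficients, which is exactly $P(\dot\varphi_{\vec\gamma_0}\dot\varphi_{\vec\gamma_0}^\top)$. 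The only genuinely delicate points are the moment bookkeeping that places $\dot\varphi_{\vec\gamma_0}$ in $L^2(P)$, where Assumption~A1(ii) is indispensable, and keeping track of the several a.s.\ null sets; the quadratic-in-$\vec\gamma$ structure makes everything else a delta-method computation rather than a true $M$-estimation equicontinuity argument.
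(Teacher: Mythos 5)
Your proposal is correct and takes essentially the same route as the paper's own proof: the exact identity $\P_n \dot\varphi_{\vec\gamma_0} = \Gamma_n \vec\gamma_0 - \vec\theta_n = -\Gamma_n(\vec\gamma_n - \vec\gamma_0)$ exploiting the affine structure of $\dot\varphi_{\vec\gamma}$, the strong law plus continuity of matrix inversion for part (i), the multivariate CLT combined with $\Gamma_n^{-1} \as \Gamma_0^{-1}$ and the delta method at $\vec\gamma_0 \in \DD^{\alpha,\beta,\pi}$ for part (ii), and the same three-factor decomposition for $\Sigma_n \as \Sigma$. Your explicit polynomial-in-$\vec\gamma_n$ argument for $\P_n(\dot\varphi_{\vec\gamma_n}\dot\varphi_{\vec\gamma_n}^\top) \as P(\dot\varphi_{\vec\gamma_0}\dot\varphi_{\vec\gamma_0}^\top)$ merely spells out what the paper dispatches in one line via $\vec\gamma_n \as \vec\gamma_0$ and the continuous mapping theorem.
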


An immediate consequence of the previous result is that large-sample standard errors of $\alpha_n$, $\beta_n$ and~$\pi_n$ are given by the square root of the diagonal elements of the matrix $n^{-1} \Sigma_n$. The finite-sample performance of these estimators is investigated in Section~\ref{mc} and they are used in the illustrations of Section~\ref{illus}. 

\subsection{Estimation of the functional parameter}

To estimate the unknown c.d.f.\ $F$ of $\varepsilon$, it is natural to start from~(\ref{F}). For a known $\vec \eta=(\alpha,\beta) \in \R^2$, the term $J(\cdot,\vec \eta)$ defined in~(\ref{J}) may be estimated by the empirical c.d.f.\ of the random sample $(Y_i-\alpha-\beta X_i)_{1\leq i \leq n}$, i.e., 
$$
J_n(t,\vec \eta)= \frac{1}{n} \sum_{i=1}^n \1(Y_i-\alpha-\beta X_i\leq t), \qquad t \in \R.
$$
Similarly, since $F^*$ (the c.d.f.\ of $\varepsilon^*$) is known, a natural estimator of the term $K(t,\vec \eta)$ defined in~(\ref{K}) is given by the empirical mean of the random sample $\{F^*(t+\alpha+\beta X_i)\}_{1 \leq i \leq n}$, i.e.,
$$
K_n(t,\vec \eta)= \frac{1}{n} \sum_{i=1}^n F^*(t+\alpha+\beta X_i), \qquad t \in \R.
$$
To obtain estimators of $J(\cdot,\vec \eta_0)$ and $K(\cdot,\vec \eta_0)$, it is then natural to consider the plug-in estimators $J_n(\cdot,\vec \eta_n)$ and $K_n(\cdot,\vec \eta_n)$, respectively, based on the estimator $\vec \eta_n = (\alpha_n,\beta_n) = (g^\alpha,g^\beta)(\vec \gamma_n)$ of $\vec \eta_0$ proposed in the previous subsection. 

We shall therefore consider the following nonparametric estimator of $F$~:
\begin{equation}
\label{Fn}
F_n(t)=\frac{1}{\pi_n} \left\{ J_n(t,\vec \eta_n)- (1-\pi_n) K_n(t,\vec \eta_n) \right\}, \qquad t \in \R.
\end{equation}

Note that $F_n$ is not necessarily a c.d.f.\ as it is not necessarily increasing and can be smaller than zero or greater than one. In practice, we shall consider the partially corrected estimator $(F_n \vee 0) \wedge 1$, where $\vee$ and $\wedge$ denote the maximum and minimum, respectively.

To derive the asymptotic behavior of the previous estimator, we consider the following additional assumptions on the p.d.f.s $f^*$ and $f$ of $\varepsilon^*$ and $\varepsilon$, respectively:
\begin{enumerate}[{A}3.]

\item~(i) $f^*$ and $f$ exist and are bounded on $\R$;~(ii) $(f^*)'$ and $f'$ exist and are bounded on $\R$.

\end{enumerate}
%Using for instance the second-order mean value theorem with $F^*$, and then with $F$, it can be verified that Assumption A3~(ii) implies Assumption A3~(i).

Before stating one of our main results, let us first define some additional notation. Let $\FF^J$ and $\FF^K$ be two classes of measurable functions from $\R^2$ to $\R$ defined respectively by
$$
\FF^J = \left\{(x,y) \mapsto \psi_{t,\vec \eta}^J(x,y) = \1(y - \alpha - \beta x \leq t) : t \in \R, \vec \eta = (\alpha,\beta) \in \R^2\right\}
$$
and
$$
\FF^K = \left\{(x,y) \mapsto \psi_{t,\vec \eta}^K(x,y) = F^*(t + \alpha + \beta x) : t \in \R, \vec \eta = (\alpha,\beta) \in \R^2\right\}.
$$
Furthermore, let $\DD^{\alpha,\beta,\pi}_{\vec \gamma_0}$ be a bounded subset of $\DD^{\alpha,\beta,\pi}$ containing $\vec \gamma_0$, and let $\FF^{\alpha,\beta,\pi}$ be the class of measurable functions from $\R^2$ to $\R^3$ defined by
\begin{multline*}
\FF^{\alpha,\beta,\pi} = \left\{(x,y) \mapsto - \Psi_{\vec \gamma} \Gamma_0^{-1}\dot \varphi_{\vec\gamma}(x,y) \right. \\ \left. = \left( \psi_{\vec \gamma}^\alpha(x,y) , \psi_{\vec \gamma}^\beta(x,y), \psi_{\vec \gamma}^\pi(x,y) \right) : \vec \gamma \in \DD^{\alpha,\beta,\pi}_{\vec \gamma_0} \right\}.
\end{multline*}
With the previous notation, notice that, for any $t \in \R$,
$$
\sqrt{n} \{J_n(t,\vec \eta_0) - J(t,\vec \eta_0) \} = \G_n \psi_{t,\vec \eta_0}^J \quad \mbox{and} \quad \sqrt{n} \{K_n(t,\vec \eta_0) - K(t,\vec \eta_0) \} = \G_n \psi_{t,\vec \eta_0}^K,
$$ 
and that, under Assumptions A1~(ii) and A2, Proposition~\ref{euclidean} states that 
$$
\sqrt{n} \left( \alpha_n - \alpha_0 , \beta_n - \beta_0, \pi_n - \pi_0  \right) = \G_n \left(  \psi_{\vec \gamma_0}^\alpha , \psi_{\vec \gamma_0}^\beta , \psi_{\vec \gamma_0}^\pi \right) + o_P(1).
$$ 

Next, for any $\vec \gamma \in \DD^{\alpha,\beta,\pi}_{\vec \gamma_0}$, let
\begin{equation}
\label{psiF}
\psi_{t,\vec \gamma}^F =  \frac{1}{\pi} \psi_{t,\vec \eta}^J + f(t) \psi_{\vec \gamma}^\alpha + f(t) \E(X) \psi_{\vec \gamma}^\beta  - \frac{1 - \pi}{\pi}  \psi_{t,\vec \eta}^K + \frac{P \psi_{t,\vec \eta}^K  - P \psi_{t,\vec \eta}^J}{\pi^2} \psi_{\vec \gamma}^\pi,
\end{equation}
with $\vec \eta = (\alpha,\beta) = (g^\alpha,g^\beta)(\vec \gamma)$ and $\pi = g^\pi(\vec \gamma)$.

The following result, proved in Appendix~\ref{proof_functional}, gives the weak limit of the empirical process~$\sqrt{n}(F_n - F)$.

\begin{prop}
\label{functional}
Assume that $\vec \gamma_0 \in \DD^{\alpha,\beta,\pi}$ and that Assumptions A1, A2 and A3 hold. Then, for any $t \in \R$,
$$
\sqrt{n} \{ F_n(t) - F(t) \} = \G_n \psi_{t,\vec \gamma_0}^F + Q_{n,t},
$$
where $\sup_{t \in \R} |Q_{n,t}| = o_P(1)$ and the empirical process $t \mapsto \G_n \psi_{t,\vec \gamma_0}^F$ converges weakly to $t \mapsto \G \psi_{t,\vec \gamma_0}^F$ in $\ell^\infty(\overline{\R})$ with $\G$ a $P$-Brownian bridge.
\end{prop}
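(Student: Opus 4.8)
The plan is to view $F_n(t)$ as a smooth function of three estimated quantities and linearise. Writing $h(\pi,u,v)=\{u-(1-\pi)v\}/\pi$, we have $F_n(t)=h\big(\pi_n,J_n(t,\vec\eta_n),K_n(t,\vec\eta_n)\big)$ and $F(t)=h\big(\pi_0,J(t,\vec\eta_0),K(t,\vec\eta_0)\big)$, where $J_n(t,\vec\eta)=\P_n\psi^J_{t,\vec\eta}$, $K_n(t,\vec\eta)=\P_n\psi^K_{t,\vec\eta}$, $J(t,\vec\eta)=P\psi^J_{t,\vec\eta}$ and $K(t,\vec\eta)=P\psi^K_{t,\vec\eta}$. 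A first-order expansion of $h$ about the true point, whose partial derivatives there are $1/\pi_0$, $-(1-\pi_0)/\pi_0$ and $\partial_\pi h=\{K(t,\vec\eta_0)-F(t)\}/\pi_0$, gives
\begin{multline*}
\sqrt{n}\{F_n(t)-F(t)\} = \frac{1}{\pi_0}\sqrt{n}\{J_n(t,\vec\eta_n)-J(t,\vec\eta_0)\} - \frac{1-\pi_0}{\pi_0}\sqrt{n}\{K_n(t,\vec\eta_n)-K(t,\vec\eta_0)\} \\ + \frac{K(t,\vec\eta_0)-F(t)}{\pi_0}\sqrt{n}(\pi_n-\pi_0) + R_{n,t}.
\end{multline*}
Since $J_n,K_n,J,K\in[0,1]$ and $\pi_n\as\pi_0>0$ (Proposition~\ref{euclidean}(i)), $h$ stays in a region where its Hessian is bounded, so once the increments of $J$, $K$ and $\pi$ are shown to be $O_P(n^{-1/2})$ uniformly in $t$, the quadratic remainder satisfies $\sup_t|R_{n,t}|=o_P(1)$. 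Each $J$- and $K$-increment is then split into an empirical-process fluctuation at the frozen index $\vec\eta_0$ and a deterministic drift generated by $\vec\eta_n-\vec\eta_0$.

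\textbf{The main obstacle: fluctuations at the random index.} Writing $\sqrt{n}\{J_n(t,\vec\eta_n)-J(t,\vec\eta_n)\}=\G_n\psi^J_{t,\vec\eta_n}$, the crux is to show $\sup_{t\in\R}|\G_n\psi^J_{t,\vec\eta_n}-\G_n\psi^J_{t,\vec\eta_0}|=o_P(1)$, and similarly for $K$. This rests on two facts. First, $\FF^J$ and $\FF^K$ are $P$-Donsker: the sets $\{y-\beta x\le t+\alpha\}$ are half-planes indexed by $(t+\alpha,\beta)$, hence a VC class with envelope $1$, while $\FF^K$ is a Lipschitz-in-parameter image of the bounded monotone $F^*$ under A3(i), with entropy controlled using the moments of $X$ from A1. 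Second, by consistency of $\vec\eta_n$ and the uniform $L_2(P)$-continuity
$$
\sup_{t}\,P\big(\psi^J_{t,\vec\eta}-\psi^J_{t,\vec\eta_0}\big)^2 \le \|f_{Y|X}\|_\infty\,\E\big|(\alpha-\alpha_0)+(\beta-\beta_0)X\big|\longrightarrow 0
$$
as $\vec\eta\to\vec\eta_0$ (and the analogue for $\FF^K$ via A3(i)), the asymptotic equicontinuity attached to the $P$-Donsker property lets us replace $\vec\eta_n$ by $\vec\eta_0$ uniformly in $t$. This yields the fluctuation terms $\pi_0^{-1}\G_n\psi^J_{t,\vec\eta_0}$ and $-\pi_0^{-1}(1-\pi_0)\G_n\psi^K_{t,\vec\eta_0}$ of $\psi^F_{t,\vec\gamma_0}$.

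\textbf{Drift terms and the $\pi_n$ contribution.} For the drifts I would Taylor-expand $J(t,\cdot)$ and $K(t,\cdot)$ about $\vec\eta_0$, differentiating under the integral thanks to A3. Because $f_{Y|X}(t+\alpha_0+\beta_0 x\mid x)=(1-\pi_0)f^*(t+\alpha_0+\beta_0 x)+\pi_0 f(t)$, one gets $\partial_\alpha J(t,\vec\eta_0)=(1-\pi_0)\partial_\alpha K(t,\vec\eta_0)+\pi_0 f(t)$ and $\partial_\beta J(t,\vec\eta_0)=(1-\pi_0)\partial_\beta K(t,\vec\eta_0)+\pi_0 f(t)\E(X)$. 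When assembled with the coefficients $1/\pi_0$ and $-(1-\pi_0)/\pi_0$, the $\partial K$ pieces cancel, leaving exactly $f(t)(\alpha_n-\alpha_0)+f(t)\E(X)(\beta_n-\beta_0)$; substituting the linearisations of $\alpha_n,\beta_n$ from Proposition~\ref{euclidean}(ii) produces the terms $f(t)\psi^\alpha_{\vec\gamma_0}+f(t)\E(X)\psi^\beta_{\vec\gamma_0}$. The second-order Taylor remainders are $o_P(1)$ uniformly in $t$: the relevant second derivatives of $J,K$ are bounded uniformly in $t$ by $\|(f^*)'\|_\infty,\|f'\|_\infty$ (A3(ii)) together with the moments of $X$ (A1), and boundedness of $f$ (A3(i)) prevents the factor $f(t)$ from spoiling uniformity. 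Finally, the $\pi_n$ coefficient equals $\{K(t,\vec\eta_0)-F(t)\}/\pi_0=\{P\psi^K_{t,\vec\eta_0}-P\psi^J_{t,\vec\eta_0}\}/\pi_0^2$ (using $\pi_0F=J-(1-\pi_0)K$), which multiplied by $\G_n\psi^\pi_{\vec\gamma_0}$ from Proposition~\ref{euclidean}(ii) gives the last term of $\psi^F_{t,\vec\gamma_0}$. Collecting the five pieces yields the stated representation with $\sup_t|Q_{n,t}|=o_P(1)$.

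\textbf{Weak convergence.} It remains to argue that $\{\psi^F_{t,\vec\gamma_0}:t\in\R\}$ is $P$-Donsker. It is a finite sum of terms each of which is Donsker: the $\FF^J$- and $\FF^K$-indexed terms by the analysis above, while $f(t)\psi^\alpha_{\vec\gamma_0}$, $f(t)\E(X)\psi^\beta_{\vec\gamma_0}$ and $\{P\psi^K_{t,\vec\eta_0}-P\psi^J_{t,\vec\eta_0}\}\psi^\pi_{\vec\gamma_0}/\pi_0^2$ are products of the fixed square-integrable influence functions $\psi^\alpha_{\vec\gamma_0},\psi^\beta_{\vec\gamma_0},\psi^\pi_{\vec\gamma_0}$ with deterministic, bounded, uniformly continuous maps of $t$ (A3 guarantees $f$ and $t\mapsto P\psi^J_{t,\vec\eta_0}$, $t\mapsto P\psi^K_{t,\vec\eta_0}$ are bounded and continuous). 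By the permanence of the Donsker property under finite sums and such products, the whole class is $P$-Donsker, so $\G_n\psi^F_{\cdot,\vec\gamma_0}\leadsto\G\psi^F_{\cdot,\vec\gamma_0}$ in $\ell^\infty(\R)$ with $\G$ a $P$-Brownian bridge; since each summand admits limits as $t\to\pm\infty$, the convergence extends to $\ell^\infty(\overline{\R})$.
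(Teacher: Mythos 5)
Your proposal is correct and takes essentially the same route as the paper's proof: the same split of $\sqrt{n}\{J_n(t,\vec\eta_n)-J(t,\vec\eta_0)\}$ and its $K$-analogue into a fluctuation at the frozen index (handled via the $P$-Donsker property of $\FF^J$ and $\FF^K$ and the uniform $L_2(P)$-continuity, exactly the content of Lemmas~\ref{lemDonsker} and~\ref{lemQuad} combined through the asymptotic-equicontinuity result of van der Vaart and Wellner) plus a drift controlled by a second-order expansion under Assumption A3~(ii), with the same cancellation yielding the coefficients $f(t)\psi^\alpha_{\vec\gamma_0}+f(t)\E(X)\psi^\beta_{\vec\gamma_0}$ and the same $\pi$-term $\{P\psi^K_{t,\vec\eta_0}-P\psi^J_{t,\vec\eta_0}\}\psi^\pi_{\vec\gamma_0}/\pi_0^2$. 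The only cosmetic differences are that you hand-roll the smooth map $(\pi,u,v)\mapsto\{u-(1-\pi)v\}/\pi$ with a bounded-Hessian remainder where the paper invokes the functional delta method, and you argue Donskerness of the assembled class where the paper establishes joint weak convergence of the component processes and applies the continuous mapping theorem.
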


Let us now discuss the estimation of the p.d.f.\ $f$ of $\varepsilon$. Starting from~(\ref{F}) and after differentiation, it seems sensible to estimate $\E\left\{f^*(t+\alpha_0+\beta_0 X)\right\}$, $t \in \R$, by the empirical mean of the observable sample $\{f^*(t+\alpha_n+\beta_n X_i)\}_{1 \leq i \leq n}$. Hence, a natural estimator of $f$ can be defined, for any $t \in \R$, by
\begin{multline}
\label{fn}
f_n(t)=\frac{1}{\pi_n} \left\{ \frac{1}{nh_n} \sum_{i=1}^n \kappa \left(\frac{t-Y_i+\alpha_n+\beta_n X_i}{h_n}\right) \right. \\ \left. -\frac{(1-\pi_n)}{n} \sum_{i=1}^n f^*(t+\alpha_n+\beta_n X_i) \right\},
\end{multline}
where $\kappa$ is a kernel function on $\R$ and $(h_n)_{n\geq 1}$ is a sequence of bandwidths converging to zero. 

In the same way that $F_n$ is not necessarily a c.d.f., $f_n$ is not necessarily a p.d.f. In practice, we shall use the partially corrected estimator $f_n \vee 0$. A fully corrected estimator (so that, additionally, the estimated density integrates to one) can be obtained as explained in~\cite{GlaHjoUsh03}.

Consider the following additional assumptions on $(h_n)_{n\geq 1}$, $\kappa$ and $f^*$~:
\begin{enumerate}[{A}4.]
\item (i)~$h_n=cn^{-\alpha}$ with $\alpha\in(0,1/2)$ and $c>0$ a constant;~(ii)~$\kappa$ is a p.d.f.\ with bounded variations on $\R$ and a finite first order moment;~(iii) the p.d.f.\ $f^*$ has bounded variations on $\R$.
\end{enumerate}

The following result is proved in Appendix~\ref{proof_pdf}.

\begin{prop} 
\label{pdf}
If $\vec\gamma_0\in{\cal D}^{\alpha,\beta,\pi}$, and under Assumptions A1~(i), A2, A3 and A4, 
$$
\sup_{t \in \R} |f_n(t) - f(t)| \as 0.
$$
\end{prop}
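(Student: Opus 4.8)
The plan is to reduce everything to the differentiated form of the identifiability identity~\eqref{F}. Writing $j(t,\vec\eta)=\partial_t J(t,\vec\eta)$ and $k(t,\vec\eta)=\partial_t K(t,\vec\eta)=\int_\R f^*(t+\alpha+\beta x)\,\dd F_X(x)$, differentiating~\eqref{F} yields $\pi_0 f(t)=j(t,\vec\eta_0)-(1-\pi_0)k(t,\vec\eta_0)$. Let $g_n(t)=(nh_n)^{-1}\sum_{i=1}^n\kappa\{(t-Y_i+\alpha_n+\beta_n X_i)/h_n\}$ and $k_n(t)=n^{-1}\sum_{i=1}^n f^*(t+\alpha_n+\beta_n X_i)$ denote the two building blocks of $f_n$, so that $\pi_n f_n(t)=g_n(t)-(1-\pi_n)k_n(t)$. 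Using the decomposition
$$
f_n(t)-f(t)=\frac{1}{\pi_n}\big[\{g_n(t)-j(t,\vec\eta_0)\}-\{(1-\pi_n)k_n(t)-(1-\pi_0)k(t,\vec\eta_0)\}\big]+\Big(\frac{1}{\pi_n}-\frac{1}{\pi_0}\Big)\pi_0 f(t),
$$
together with $(\alpha_n,\beta_n,\pi_n)\as(\alpha_0,\beta_0,\pi_0)$ (Proposition~\ref{euclidean}(i), valid under A1(i) and A2), $\pi_0>0$, and the boundedness of $f$ and $k(\cdot,\vec\eta_0)$ (A3(i)), it suffices to establish $\sup_{t}|g_n(t)-j(t,\vec\eta_0)|\as0$ and $\sup_{t}|k_n(t)-k(t,\vec\eta_0)|\as0$.

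The averaging term is the easier of the two. I would split $k_n(t)-k(t,\vec\eta_0)$ into $\{k_n(t)-n^{-1}\sum_i f^*(t+\alpha_0+\beta_0 X_i)\}$ and $\{n^{-1}\sum_i f^*(t+\alpha_0+\beta_0 X_i)-\E f^*(t+\alpha_0+\beta_0 X)\}$. The first is at most $\|(f^*)'\|_\infty(|\alpha_n-\alpha_0|+|\beta_n-\beta_0|\,\overline{|X|})$ uniformly in $t$, by the Lipschitz property of $f^*$ (A3(ii)), and tends to $0$ almost surely because $\overline{|X|}\as\E|X|<\infty$ under A1(i). The second tends to $0$ uniformly in $t$ by the Glivenko–Cantelli theorem: since $f^*$ has bounded variation (A4(iii)) it is a difference of two bounded monotone functions, so the class $\{x\mapsto f^*(t+\alpha_0+\beta_0 x):t\in\R\}$ is $P$-Glivenko–Cantelli, each member being a mixture of half-line indicators composed with the monotone map $x\mapsto\alpha_0+\beta_0 x$.

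The kernel term is the crux, and the step I would spend the most care on. Here the estimated intercept and slope are plugged into the kernel argument, and the naive route—integrating by parts against the residual empirical c.d.f.\ and bounding by $h_n^{-1}\sup_t|J_n(t,\vec\eta_n)-J(t,\vec\eta_0)|$—fails: under A1(i) alone there are no eighth moments and hence no law-of-the-iterated-logarithm rate for the moment-based estimator $\vec\gamma_n$, so one only has $\vec\eta_n\as\vec\eta_0$ without a rate, whereas taming the amplification factor $h_n^{-1}=c^{-1}n^{\alpha}$ would demand $|\vec\eta_n-\vec\eta_0|=o(n^{-\alpha})$ almost surely. I would instead freeze the Euclidean parameter and prove a uniform strong law over a fixed neighbourhood $N$ of $\vec\eta_0$, namely
$$
\sup_{\vec\eta\in N}\,\sup_{t\in\R}\frac{1}{h_n}\big|(\P_n-P)\kappa_{t,\vec\eta}\big|\as0,\qquad \kappa_{t,\vec\eta}(x,y)=\kappa\Big(\frac{t-y+\alpha+\beta x}{h_n}\Big),
$$
and, separately, that the bias $\sup_t|h_n^{-1}P\kappa_{t,\vec\eta_n}-j(t,\vec\eta_0)|\to0$; the conclusion then follows by combining the two, since $\vec\eta_n\in N$ eventually. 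For the stochastic part, the class $\{\kappa_{t,\vec\eta}:t\in\R,\vec\eta\in N\}$ has polynomial uniform covering numbers (bounded uniformly in $n$), because $\kappa$ has bounded variation (A4(ii)) and the inner maps $(x,y)\mapsto y-\alpha-\beta x$ form a finite-dimensional, hence VC, family; a uniform-in-bandwidth exponential inequality of Talagrand type combined with Borel–Cantelli yields the almost sure bound $O(\sqrt{\log n/(nh_n)})$, which vanishes because $nh_n/\log n=c\,n^{1-\alpha}/\log n\to\infty$ for $\alpha\in(0,1/2)$ (A4(i)). For the bias, $h_n^{-1}P\kappa_{t,\vec\eta}=\int\kappa(u)\,j_{\vec\eta}(t-h_n u)\,\dd u$ with $j_{\vec\eta}(s)=\E\{f_{Y|X}(s+\alpha+\beta X\mid X)\}$ the density of $Y-\alpha-\beta X$; Assumption~A3 makes $j_{\vec\eta}$ bounded with bounded derivative and Lipschitz in $\vec\eta$ uniformly in $s$ (using $\E|X|<\infty$), so the bias splits into a smoothing error bounded by $\|j_{\vec\eta_0}'\|_\infty h_n\int|u|\kappa(u)\,\dd u\to0$ (finite first moment of $\kappa$, A4(ii)) and a parameter error bounded by $\sup_s|j_{\vec\eta_n}(s)-j_{\vec\eta_0}(s)|\as0$.

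Collecting the three pieces gives $\sup_t|f_n(t)-f(t)|\as0$, and the argument uses only Assumptions A1(i), A2, A3 and A4, as required. The essential point is that the uniform-in-parameter kernel strong law is what permits plugging the slow, rate-free estimator $\vec\eta_n$ into the bandwidth-$h_n$ kernel without invoking the eighth-moment assumption: it rests entirely on the VC/bounded-variation structure of the kernel class together with the smoothness of the family of residual densities $j_{\vec\eta}$.
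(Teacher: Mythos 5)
Your proposal is correct, and its skeleton coincides with the paper's proof: both first use $\pi_n \as \pi_0 > 0$ to reduce to the kernel term and the $f^*$-averaging term; both split the kernel term into a bias part, handled by the change of variables $u=(t-y+\alpha+\beta x)/h_n$ plus the mean value theorem under A3~(ii) and the finite first moment of $\kappa$ (A4~(ii)); and, crucially, both neutralize the rate-free plug-in of $\vec\eta_n$ by proving the strong law for the centered part \emph{uniformly over the Euclidean parameter}, exploiting exactly the bounded-variation/VC structure you describe — the paper's class $\FF$ is even indexed by all $(\alpha,\beta,t)\in\R^3$ and all $h\in(0,\infty)$, not just a neighbourhood of $\vec\eta_0$, via the Nolan--Pollard decomposition $\kappa=\kappa_1-\kappa_2$ into bounded monotone pieces. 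Your diagnosis that no almost-sure rate for $\vec\eta_n$ is available under A1~(i) alone, so that uniformization is forced, is precisely the paper's implicit rationale. The differences are in execution. For the stochastic term the paper does not use a variance-sensitive Talagrand/Gin\'e--Guillou bound: it applies the crude tail inequality of Theorem~2.14.9 in van der Vaart and Wellner to the whole class with constant envelope $1$, obtaining $\Pr^*(I_n'>\varepsilon)\leq c_1(\sqrt{n}h_n\varepsilon)^{c_2}\exp(-2nh_n^2\varepsilon^2)$, shows summability by Raabe's rule, and concludes by Borel--Cantelli; this requires $nh_n^2\to\infty$, i.e.\ exactly the restriction $\alpha\in(0,1/2)$ of A4~(i). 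Your $O(\sqrt{\log n/(nh_n)})$ route is sharper (it would tolerate any $\alpha\in(0,1)$) but invokes heavier machinery as a black box; to make it airtight you should record the variance bound $P\kappa_{t,\vec\eta}^2\leq \|\kappa\|_\infty P\kappa_{t,\vec\eta}=O(h_n)$, which uses A3~(i), after which your claim is the standard Gin\'e--Guillou/Einmahl--Mason almost-sure rate. For the averaging term the paper simply reruns the same BV-VC uniform argument with $f^*$ in place of $\kappa$ (A4~(iii)), whereas your split into a Lipschitz perturbation in $\vec\eta$ (via A3~(ii)) plus a fixed-parameter Glivenko--Cantelli class in $t$ is more elementary and equally valid. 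Your bias analysis and the final algebraic decomposition in $(g_n,k_n,\pi_n)$ match the paper's up to bookkeeping.
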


Finally, note that, in all our numerical experiments, the kernel part of $f_n$ was computed using the {\tt ks} R package~\cite{ks} in which the univariate plug-in selector proposed in~\cite{WanJon94} was used for the bandwidth $h_n$.

\subsection{A weighted bootstrap with application to confidence bands for $F$}
\label{bootstrap}

In applications, it may be of interest to carry out inference on $F$. The result stated in this section can be used for that purpose. It is based on the unconditional multiplier central limit theorem for empirical processes~\cite[see e.g.][Theorem 10.1 and Corollary 10.3]{Kos08} and can be used to obtain approximate independent copies of $\sqrt{n}(F_n - F)$. 

Given i.i.d.\ mean 0 variance 1 random variables $\xi_1,\dots,\xi_n$ with $\int_0^\infty \{ \Pr(|\xi_1| > x) \}^{1/2} \dd x < \infty$, and independent of the random sample $(X_i,Y_i)_{1\leq i\leq n}$, let
$$
\G_n' = \frac{1}{\sqrt{n}} \sum_{i=1}^n (\xi_i - \bar \xi) \delta_{X_i,Y_i},
$$
where $\bar \xi = n^{-1} \sum_{i=1}^n \xi_i$. Also, let $\left( \hat \psi_{\vec \gamma_n}^\alpha,\hat \psi_{\vec \gamma_n}^\beta,\hat \psi_{\vec \gamma_n}^\pi \right) = -\Psi_{\vec \gamma_n} \Gamma_n^{-1}\dot \varphi_{\vec\gamma_n}$ and, for any $t \in \R$, let
\begin{equation}
\label{hatpsiF}
\hat \psi_{t,\vec \gamma_n}^F =  \frac{1}{\pi_n} \psi_{t,\vec \eta_n}^J + f_n(t) \hat \psi_{\vec \gamma_n}^\alpha + f_n(t) \bar X \hat \psi_{\vec \gamma_n}^\beta  - \frac{1 - \pi_n}{\pi_n}  \psi_{t,\vec \eta_n}^K + \frac{\P_n \psi_{t,\vec \eta_n}^K  - \P_n \psi_{t,\vec \eta_n}^J}{\pi_n^2} \hat \psi_{\vec \gamma_n}^\pi
\end{equation}
be an estimated version of the influence function $\psi_{t,\vec \gamma_0}^F$ arising in Proposition~\ref{functional}, where $\vec \eta_n = (\alpha_n,\beta_n) = (g^\alpha,g^\beta)(\vec \gamma_n)$ and $\pi_n = g^\pi(\vec \gamma_n)$. 

The following proposition, proved in Appendix~\ref{proof_mult}, suggests, when $n$ is large, to interpret $t \mapsto \G_n' \hat \psi_{t,\vec \gamma_n}^F$ as an independent copy of $\sqrt{n}(F_n - F)$. 

\begin{prop}
\label{mult}
Assume that $\vec \gamma_0 \in \DD^{\alpha,\beta,\pi}$, and that Assumptions A1, A2, A3 and A4 hold. Then, the process $( t \mapsto \G_n  \psi_{t,\vec \gamma_0}^F, t \mapsto \G_n' \hat \psi_{t,\vec \gamma_n}^F  )$ converges weakly to $( t \mapsto  \G  \psi_{t,\vec \gamma_0}^F, t \mapsto  \G' \psi_{t,\vec \gamma_0}^F )$ in $\{ \ell^\infty (\overline{\R}) \}^2$, where $t \mapsto \G' \psi_{t,\vec \gamma_0}^F$ is an independent copy of $t \mapsto \G  \psi_{t,\vec \gamma_0}^F$.
\end{prop}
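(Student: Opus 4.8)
The plan is to treat Proposition~\ref{mult} as a standard application of the unconditional multiplier central limit theorem, combined with a stochastic-equicontinuity argument to cope with the fact that the second coordinate uses the \emph{estimated} influence function $\hat\psi_{t,\vec\gamma_n}^F$ from~\eqref{hatpsiF} rather than the deterministic $\psi_{t,\vec\gamma_0}^F$ from~\eqref{psiF}. First I would recall from the proofs of Propositions~\ref{euclidean} and~\ref{functional} that the classes $\FF^J$, $\FF^K$ and the three coordinate classes making up $\FF^{\alpha,\beta,\pi}$ are $P$-Donsker with square-integrable envelopes (the envelope for $\FF^{\alpha,\beta,\pi}$ being square-integrable precisely under Assumption~A1~(ii), since the components of $\dot\varphi_{\vec\gamma}$ grow polynomially in $(x,y)$); their union $\FF$ is then $P$-Donsker. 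Applying the unconditional multiplier CLT \cite[see e.g.][Corollary~10.3]{Kos08}, valid thanks to the moment condition $\int_0^\infty \{\Pr(|\xi_1|>x)\}^{1/2}\,\dd x<\infty$ imposed on the multipliers, yields the joint weak convergence of $(\G_n f,\G_n' f)_{f\in\FF}$ to $(\G f,\G' f)_{f\in\FF}$, where $\G'$ is an independent copy of the $P$-Brownian bridge $\G$; in particular both $\G_n$ and $\G_n'$ are asymptotically $\|\cdot\|_{L^2(P)}$-equicontinuous over $\FF$.

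Since $\psi_{t,\vec\gamma_0}^F$ is, for each $t$, a fixed linear combination of elements of $\FF$ with deterministic coefficients, reindexing by $t$ shows that $t\mapsto\G_n\psi_{t,\vec\gamma_0}^F$ and the ``true-influence-function'' multiplier process $t\mapsto\G_n'\psi_{t,\vec\gamma_0}^F$ converge jointly to the announced independent limits in $\{\ell^\infty(\overline\R)\}^2$; the first coordinate is exactly the leading term identified in Proposition~\ref{functional}. It therefore remains to prove the negligibility
$$
\sup_{t\in\R}\bigl|\G_n'\hat\psi_{t,\vec\gamma_n}^F-\G_n'\psi_{t,\vec\gamma_0}^F\bigr|=o_P(1),
$$
after which Slutsky's lemma delivers the claim.

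For this last display I would expand both influence functions term by term according to~\eqref{hatpsiF} and~\eqref{psiF} and treat the five groups of summands separately, exploiting that $f_n(t)$, $\bar X$, $\pi_n$ and the scalar $(\P_n\psi_{t,\vec\eta_n}^K-\P_n\psi_{t,\vec\eta_n}^J)/\pi_n^2$ multiply functions of $(x,y)$ that do \emph{not} depend on $t$ (namely $\hat\psi_{\vec\gamma_n}^\alpha$, $\hat\psi_{\vec\gamma_n}^\beta$, $\hat\psi_{\vec\gamma_n}^\pi$), so that $\G_n'$ factors through the scalar. For these groups I would invoke the a.s.\ consistency results collected earlier---$\vec\gamma_n\to\vec\gamma_0$ (hence $\pi_n\to\pi_0$, $\vec\eta_n\to\vec\eta_0$, $\Gamma_n^{-1}\to\Gamma_0^{-1}$, $\Psi_{\vec\gamma_n}\to\Psi_{\vec\gamma_0}$), $\bar X\to\E(X)$, $\sup_t|f_n(t)-f(t)|\to0$ from Proposition~\ref{pdf}, and the uniform-in-$t$ convergences $\P_n\psi_{t,\vec\eta_n}^{J}\to P\psi_{t,\vec\eta_0}^{J}$, $\P_n\psi_{t,\vec\eta_n}^{K}\to P\psi_{t,\vec\eta_0}^{K}$ (Glivenko--Cantelli)---together with $\|\hat\psi_{\vec\gamma_n}^\alpha-\psi_{\vec\gamma_0}^\alpha\|_{L^2(P)}\to_P0$ (and likewise for $\beta,\pi$), which follows because $\dot\varphi_{\vec\gamma_n}-\dot\varphi_{\vec\gamma_0}$ is a polynomial in $(x,y)$ with vanishing coefficients, square-integrable under A1~(ii). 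Asymptotic equicontinuity of $\G_n'$ then forces $\G_n'(\hat\psi_{\vec\gamma_n}^\alpha-\psi_{\vec\gamma_0}^\alpha)=o_P(1)$, and the bounded, uniformly convergent scalar factors preserve negligibility.

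The main obstacle is the genuinely $t$-indexed indicator group $\tfrac{1}{\pi_n}\psi_{t,\vec\eta_n}^J-\tfrac{1}{\pi_0}\psi_{t,\vec\eta_0}^J$ (and the analogous $\FF^K$ group), where $\vec\eta_n$ is random and $\psi_{t,\vec\eta}^J$ is discontinuous in $(x,y)$: one cannot argue pointwise and must instead use the asymptotic $\|\cdot\|_{L^2(P)}$-equicontinuity of the multiplier process over $\FF^J$. The crux is to verify that $\sup_{t\in\R}\|\psi_{t,\vec\eta_n}^J-\psi_{t,\vec\eta_0}^J\|_{L^2(P)}\to_P0$. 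Since $P(\psi_{t,\vec\eta_n}^J-\psi_{t,\vec\eta_0}^J)^2=P\,|\1(Y-\alpha_n-\beta_n X\le t)-\1(Y-\alpha_0-\beta_0 X\le t)|$, this reduces to the continuity of the law of $Y-\alpha-\beta X$ in $(\alpha,\beta)$, which holds because $Y$ admits a bounded conditional density given $X$ under Assumption~A3; a P\'olya-type uniformization then turns the resulting pointwise continuity into the required uniformity in $t$. Once this is in hand, equicontinuity yields $\sup_t|\G_n'(\psi_{t,\vec\eta_n}^J-\psi_{t,\vec\eta_0}^J)|=o_P(1)$, and combining the constant factor $1/\pi_n\to1/\pi_0$ closes the indicator group and hence the whole negligibility bound.
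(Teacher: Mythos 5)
Your proposal is correct and follows essentially the same route as the paper: joint weak convergence of $(\G_n,\G_n')$ via the functional multiplier central limit theorem for the true influence function $\psi_{t,\vec\gamma_0}^F$, followed by a term-by-term negligibility bound for $\sup_{t\in\R}|\G_n'(\hat\psi_{t,\vec\gamma_n}^F-\psi_{t,\vec\gamma_0}^F)|$, where your appeal to asymptotic $L^2(P)$-equicontinuity of $\G_n'$ over the Donsker classes is exactly the content of the paper's Lemma~\ref{extended} (the multiplier analogue of Theorem~2.1 in~\cite{vanWel07}), and your $L^2$ control of the indicator class via the bounded conditional density is the paper's Lemma~\ref{lemQuad} (where, incidentally, the mean-value bound is already uniform in $t$, so no separate P\'olya-type uniformization step is needed).
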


Let us now explain how the latter result can be used in practice to obtain an approximate confidence band for $F$. Let $N$ be a large integer and let $\xi_i^{(j)}$, $i \in \{1,\dots,n\}$, $j \in \{1,\dots,N\}$, be i.i.d.\ random variables with mean 0, variance 1, satisfying $\int_0^\infty \{ \Pr(|\xi_i^{(j)}| > x) \}^{1/2} \dd x < \infty$, and independent of the data $(X_i,Y_i)_{1\leq i\leq n}$. For any $j \in \{1,\dots,N\}$, let $\G_n^{(j)} = n^{-1/2} \sum_{i=1}^n (\xi_i^{(j)} - \bar \xi^{(j)}) \delta_{X_i,Y_i}$, where $\bar \xi^{(j)} = n^{-1} \sum_{i=1}^n \xi_i^{(j)}$. Then, a consequence of Propositions~\ref{functional} and~\ref{mult} is that
\begin{multline*}
\left( \sqrt{n}(F_n - F), t \mapsto \G_n^{(1)} \hat \psi_{t,\vec \gamma_n}^F,\dots, t \mapsto \G_n^{(N)} \hat \psi_{t,\vec \gamma_n}^F \right) \\ \leadsto \left( t \mapsto  \G  \psi_{t,\vec \gamma_0}^F, t \mapsto  \G^{(1)} \psi_{t,\vec \gamma_0}^F, \dots, t \mapsto  \G^{(N)} \psi_{t,\vec \gamma_0}^F \right)
\end{multline*}
in $\{ \ell^\infty (\overline{\R}) \}^{N+1}$, where $\G^{(1)},\dots,\G^{(N)}$ are independent copies of the $P$-Brownian bridge $\G$. From the continuous mapping theorem, it follows that 
\begin{multline*}
\left( \sup_{t \in \R} |\sqrt{n}(F_n - F)|, \sup_{t \in \R} |\G_n^{(1)} \hat \psi_{t,\vec \gamma_n}^F |,\dots, \sup_{t \in \R} |\G_n^{(N)} \hat \psi_{t,\vec \gamma_n}^F | \right) \\ \leadsto \left( \sup_{t \in \R} | \G  \psi_{t,\vec \gamma_0}^F |, \sup_{t \in \R} | \G^{(1)} \psi_{t,\vec \gamma_0}^F |, \dots, \sup_{t \in \R} | \G^{(N)} \psi_{t,\vec \gamma_0}^F | \right)
\end{multline*}
in $[0,\infty)^{N+1}$. The previous result suggests to estimate quantiles of $\sup_{t \in \R} |\sqrt{n}(F_n - F)|$ using the generalized inverse of the empirical c.d.f.\
\begin{equation}
\label{GnN}
G_{n,N}(x) = \frac{1}{N} \sum_{j=1}^N \1 \left\{ \sup_{t \in \R} |\G_n^{(j)} \hat \psi_{t,\vec \gamma_n}^F | \leq x \right\}.
\end{equation}
A large-sample confidence band of level $1-p$ for $F$ is thus given by $F_n \pm G_{n,N}^{-1}(1-p) / \sqrt{n}$. Examples of such confidence bands are given in Figures~\ref{WOMOSO},~\ref{tone} and~\ref{aphids}, and the finite-sample properties of the above construction are empirically investigated in Section~\ref{mc}. Note that in all our numerical experiments, the random variables $\xi_i^{(j)}$ were taken from the standard normal distribution, and that the supremum in the previous display was replaced by a maximum over 100 points $U_1,\dots,U_{100}$ uniformly spaced over the interval $[\min_{1 \leq i \leq n} (Y_i - \alpha_n - \beta_n X_i), \max_{1 \leq i \leq n} (Y_i - \alpha_n - \beta_n X_i)]$. 

Finally, notice that Proposition~\ref{functional} implies that, for any fixed $t \in \R$, 
the random variable $\G_n  \psi_{t,\vec \gamma_0}^F$ converges in distribution to $\G  \psi_{t,\vec \gamma_0}^F$. This suggests to estimate the variance of $\G  \psi_{t,\vec \gamma_0}^F$ as the variance of $\G_n  \psi_{t,\vec \gamma_0}^F$, which is equal to $\var \{ \psi_{t,\vec \gamma_0}^F(X,Y) \} = P (\psi_{t,\vec \gamma_0}^F)^2 - (P \psi_{t,\vec \gamma_0}^F)^2$. Should $\vec \gamma_0$ be known, a natural estimate of the latter would be the empirical variance of the random sample $\{\psi_{t,\vec \gamma_0}^F(X_i,Y_i)\}_{1 \leq i \leq n}$. As $\vec \gamma_0$ is unknown, the sample of ``pseudo-observations'' $\{\hat \psi_{t,\vec \gamma_n}^F(X_i,Y_i)\}_{1 \leq i \leq n}$ can be used instead. This suggests to estimate the standard error of $F_n(t)$ as 
\begin{equation}
\label{std}
n^{-1/2} \{ \P_n (\hat \psi_{t,\vec \gamma_n}^F)^2 - (\P_n \hat \psi_{t,\vec \gamma_n}^F)^2 \}^{1/2}.
\end{equation}
The finite-sample performance of this estimator is investigated in Section~\ref{mc} for several values of~$t$.

\section{Monte Carlo experiments}
\label{mc}

A large number of Monte Carlo experiments was carried out to investigate the influence on the estimators of various factors such as the degree of overlap of the mixed populations, the proportion of the unknown component $\pi_0$, or the shape of the noise $\varepsilon$ involved in the unknown regression model. Starting from~(\ref{model}), the following generic data generating models were considered:
\begin{align*}
\mbox{WO}:& \, \varepsilon^* \sim \NN(0,1), \, (\alpha_0,\beta_0)=(2,1),  \, X \sim \NN(2,3^2),  \, \E(\varepsilon^2)=1, \\
\mbox{MO}:& \, \varepsilon^* \sim \NN(0,1), \,  (\alpha_0,\beta_0)=(2,1),  \, X\sim \NN(2,3^2),  \, \E(\varepsilon^2)=4, \\
\mbox{SO}:&  \, \varepsilon^* \sim \NN(0,1), \, (\alpha_0,\beta_0)=(1,0.5),  \, X \sim \NN(1,2^2),  \, \E(\varepsilon^2)=4.
\end{align*}
The abbreviations WO, MO and SO stand respectively for ``Weak Overlap'', ``Medium Overlap'' and ``Strong Overlap''. Three possibilities were considered for the distribution of $\varepsilon$: the centered normal (the corresponding data generating models will be abbreviated by WOn, MOn and SOn), a gamma distribution with shape parameter equal to two and rate parameter equal to a half, shifted to have mean zero (the corresponding models will be abbreviated by WOg, MOg and SOg) and a standard exponential shifted to have mean zero (the corresponding models will be abbreviated by WOe, MOe and SOe). Depending on the model they are used in, all three error distributions are scaled so that $\varepsilon$ has the desired variance. 

Examples of datasets generated from WOn, MOg and SOe with $n=500$ and $\pi_0=0.7$ are represented in the first column of graphs of Figure~\ref{WOMOSO}. The solid (resp.\ dashed) lines represent the true (resp.\ estimated) regression lines. The graphs of the second column represent, for each of WOn, MOg and SOe, the true c.d.f.\ $F$ of $\varepsilon$ (solid line) and its estimate $F_n$ (dashed line) defined in~(\ref{Fn}). The dotted lines represent approximate confidence bands of level 0.95 for $F$ computed as explained in Subsection~\ref{bootstrap} with $N=10,000$. Finally, the graphs of the third column represent, for each of WOn, MOg and SOe, the true p.d.f.\ $f$ of $\varepsilon$ (solid line) and its estimate $f_n$ (dashed line) defined in~(\ref{fn}). 

\begin{figure}[t!]
  \begin{center}
    \includegraphics*[width=1\linewidth]{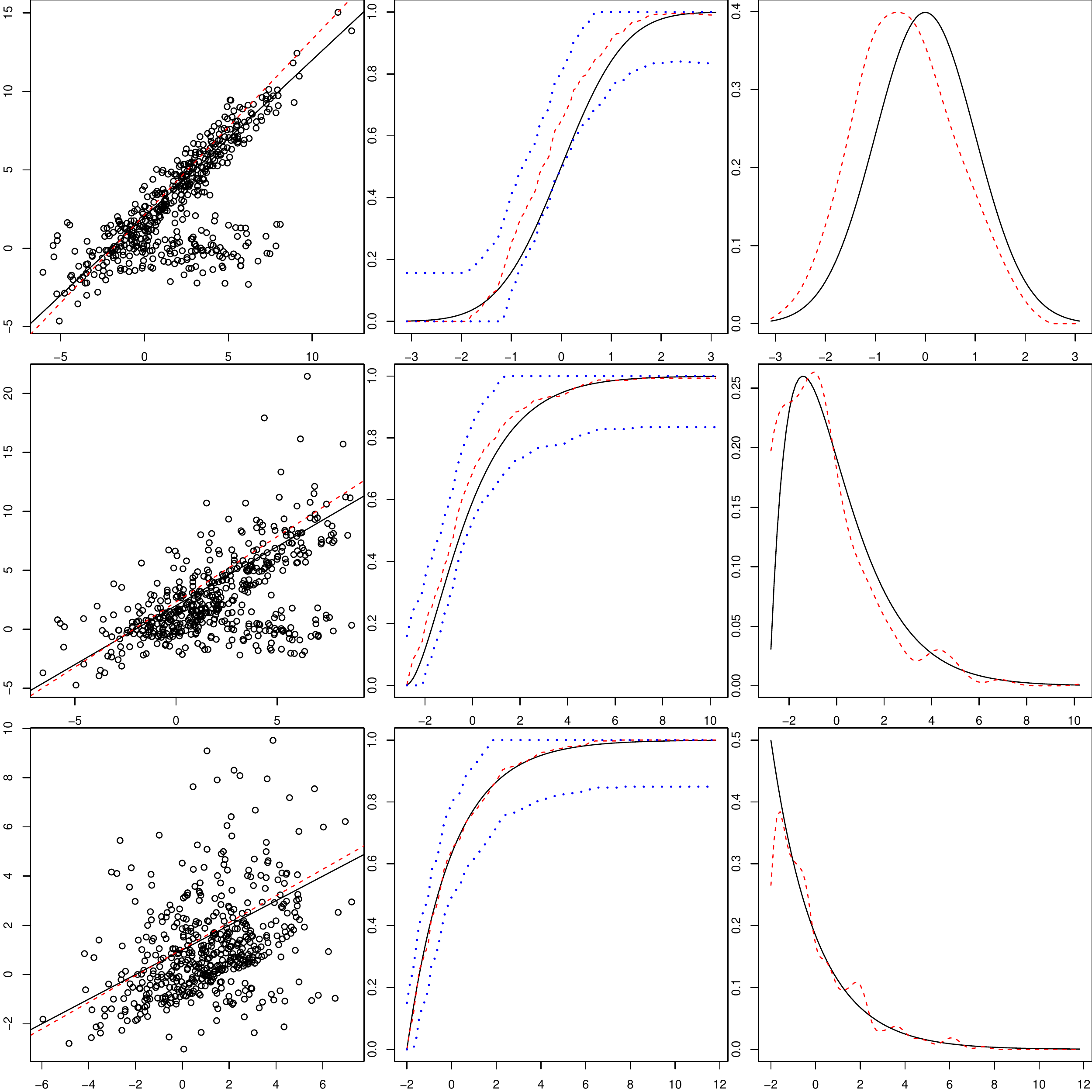}
  \end{center}
  \caption{First column, from top to bottom: datasets generated from WOn, MOg and SOe, respectively, with $n=500$ and $\pi_0=0.7$; the solid (resp.\ dashed) lines represent the true (resp.\ estimated) regression lines. Second column, from top to bottom: for WOn, MOg and SOe, respectively, the true c.d.f.\ $F$ of $\varepsilon$ (solid line) and its estimate $F_n$ (dashed line) defined in~(\ref{Fn}). The dotted lines represent approximate confidence bands of level 0.95 for $F$ computed as explained in Subsection~\ref{bootstrap} with $N=10,000$. Third column, from top to bottom: for WOn, MOg and SOe, respectively, the true p.d.f.\ $f$ of $\varepsilon$ (solid line) and its estimate $f_n$ defined in~(\ref{fn}) (dashed line).}
  \label{WOMOSO}
\end{figure}

For each of the three groups of data generating models, \{WOn, MOn, SOn\}, \{WOg, MOg, SOg\} and \{WOe, MOe, SOe\}, the values 0.4 and 0.7 were considered for $\pi_0$, and the values 100, 300, 1000 and 5000 were considered for $n$. For each of the nine data generating scenarios, each value of $\pi_0$, and each value of $n$, $M=1000$ random samples were generated. Tables~\ref{normnoise},~\ref{gammanoise} and~\ref{expnoise} report the number $m$ of samples out of $M$ for which $\pi_n \not \in (0,1]$, as well as the estimated bias and standard deviation of $\alpha_n$, $\beta_n$, $\pi_n$, $F_n\{F^{-1}(0.1)\}$, $F_n\{F^{-1}(0.5)\}$ and $F_n\{F^{-1}(0.9)\}$ computed from the $M-m$ valid estimates.

%\input{R/normnoise}

%\input{R/gammanoise}

%\input{R/expnoise}

% latex table generated in R 2.15.2 by xtable 1.7-0 package
% Mon Jan  7 13:52:42 2013
\begin{sidewaystable}[ht]
\begin{center}
\caption{For $M=1000$ random samples generated under scenarios WOn, MOn and SOn, number $m$ of samples out of $M$ for which $\pi_n \not \in (0,1]$, as well as estimated bias and standard deviation of $\alpha_n$, $\beta_n$, $\pi_n$, $F_n\{F^{-1}(0.1)\}$, $F_n\{F^{-1}(0.5)\}$ and $F_n\{F^{-1}(0.9)\}$ computed from the $M-m$ valid estimates.}
\label{normnoise}
\begin{tabular}{lrrrrrrrrrrrrrrr}
  \hline
   & & & & \multicolumn{2}{c}{$\alpha_n$} & \multicolumn{2}{c}{$\beta_n$} & \multicolumn{2}{c}{$\pi_n$} & \multicolumn{2}{c}{$F_n\{F^{-1}(0.1)\}$} & \multicolumn{2}{c}{$F_n\{F^{-1}(0.5)\}$} & \multicolumn{2}{c}{$F_n\{F^{-1}(0.9)\}$} \\ Scenario & $\pi_0$ & $n$ & $m$ & bias & sd & bias & sd & bias & sd & bias & sd & bias & sd & bias & sd \\ \hline
WOn & 0.4 & 100 & 15 & -0.049 & 0.689 & -0.008 & 0.340 & 0.038 & 0.139 & 0.140 & 0.144 & 0.051 & 0.160 & -0.070 & 0.119 \\ 
   &  & 300 & 0 & -0.032 & 0.392 & -0.008 & 0.220 & 0.015 & 0.079 & 0.078 & 0.092 & 0.022 & 0.129 & -0.048 & 0.098 \\ 
   &  & 1000 & 0 & -0.010 & 0.213 & -0.006 & 0.125 & 0.005 & 0.040 & 0.030 & 0.044 & 0.007 & 0.092 & -0.022 & 0.062 \\ 
   &  & 5000 & 0 & -0.005 & 0.096 & -0.002 & 0.058 & 0.002 & 0.019 & 0.008 & 0.014 & 0.000 & 0.049 & -0.007 & 0.030 \\ 
   & 0.7 & 100 & 38 & 0.015 & 0.357 & 0.019 & 0.181 & 0.003 & 0.101 & 0.060 & 0.080 & 0.035 & 0.122 & -0.024 & 0.084 \\ 
   &  & 300 & 2 & -0.011 & 0.205 & -0.002 & 0.118 & 0.010 & 0.065 & 0.025 & 0.039 & 0.009 & 0.086 & -0.018 & 0.061 \\ 
   &  & 1000 & 0 & -0.002 & 0.112 & 0.000 & 0.067 & 0.001 & 0.036 & 0.009 & 0.018 & 0.003 & 0.054 & -0.006 & 0.034 \\ 
   &  & 5000 & 0 & -0.003 & 0.050 & -0.001 & 0.030 & 0.001 & 0.017 & 0.002 & 0.006 & -0.001 & 0.027 & -0.002 & 0.015 \\ 
   \\[0.5ex]MOn & 0.4 & 100 & 34 & -0.095 & 0.827 & -0.020 & 0.376 & 0.056 & 0.153 & 0.054 & 0.088 & 0.039 & 0.099 & -0.022 & 0.068 \\ 
   &  & 300 & 0 & -0.008 & 0.456 & -0.005 & 0.237 & 0.018 & 0.089 & 0.026 & 0.054 & 0.020 & 0.068 & -0.011 & 0.049 \\ 
   &  & 1000 & 0 & -0.014 & 0.264 & -0.003 & 0.135 & 0.006 & 0.045 & 0.010 & 0.030 & 0.006 & 0.044 & -0.005 & 0.030 \\ 
   &  & 5000 & 0 & -0.004 & 0.115 & -0.004 & 0.061 & 0.002 & 0.019 & 0.002 & 0.013 & 0.001 & 0.020 & -0.002 & 0.014 \\ 
   & 0.7 & 100 & 64 & -0.008 & 0.473 & 0.020 & 0.224 & 0.008 & 0.119 & 0.018 & 0.051 & 0.023 & 0.074 & -0.005 & 0.048 \\ 
   &  & 300 & 4 & -0.014 & 0.274 & -0.005 & 0.147 & 0.012 & 0.082 & 0.011 & 0.031 & 0.006 & 0.046 & -0.005 & 0.034 \\ 
   &  & 1000 & 0 & -0.007 & 0.155 & -0.002 & 0.084 & 0.005 & 0.046 & 0.004 & 0.018 & 0.002 & 0.027 & -0.002 & 0.020 \\ 
   &  & 5000 & 0 & -0.004 & 0.069 & -0.001 & 0.038 & 0.001 & 0.021 & 0.001 & 0.007 & 0.000 & 0.012 & -0.001 & 0.009 \\ 
   \\[0.5ex]SOn & 0.4 & 100 & 251 & 0.666 & 3.963 & 0.110 & 0.393 & 0.013 & 0.222 & 0.006 & 0.153 & 0.057 & 0.122 & 0.019 & 0.053 \\ 
   &  & 300 & 90 & 0.042 & 0.522 & 0.022 & 0.230 & 0.048 & 0.183 & -0.018 & 0.047 & 0.021 & 0.051 & 0.007 & 0.028 \\ 
   &  & 1000 & 2 & -0.009 & 0.279 & 0.003 & 0.139 & 0.026 & 0.116 & -0.012 & 0.025 & 0.010 & 0.028 & 0.003 & 0.015 \\ 
   &  & 5000 & 0 & 0.005 & 0.122 & 0.002 & 0.063 & 0.003 & 0.046 & -0.002 & 0.011 & 0.002 & 0.012 & 0.001 & 0.007 \\ 
   & 0.7 & 100 & 310 & 0.199 & 0.627 & 0.112 & 0.222 & -0.057 & 0.192 & -0.016 & 0.051 & 0.021 & 0.067 & 0.014 & 0.036 \\ 
   &  & 300 & 166 & 0.090 & 0.346 & 0.040 & 0.149 & -0.019 & 0.152 & -0.011 & 0.028 & 0.008 & 0.033 & 0.006 & 0.020 \\ 
   &  & 1000 & 36 & 0.005 & 0.177 & 0.006 & 0.090 & 0.008 & 0.106 & -0.004 & 0.014 & 0.003 & 0.016 & 0.002 & 0.010 \\ 
   &  & 5000 & 0 & 0.000 & 0.084 & 0.000 & 0.043 & 0.005 & 0.053 & -0.001 & 0.006 & 0.001 & 0.007 & 0.000 & 0.005 \\ 
   \hline
\end{tabular}
\end{center}
\end{sidewaystable}

% latex table generated in R 2.15.2 by xtable 1.7-0 package
% Mon Jan  7 13:52:49 2013
\begin{sidewaystable}[ht]
\begin{center}
\caption{For $M=1000$ random samples generated under scenarios WOg, MOg and SOg, number $m$ of samples out of $M$ for which $\pi_n \not \in (0,1]$, as well as estimated bias and standard deviation of $\alpha_n$, $\beta_n$, $\pi_n$, $F_n\{F^{-1}(0.1)\}$, $F_n\{F^{-1}(0.5)\}$ and $F_n\{F^{-1}(0.9)\}$ computed from the $M-m$ valid estimates.}
\label{gammanoise}
\begin{tabular}{lrrrrrrrrrrrrrrr}
  \hline
   & & & & \multicolumn{2}{c}{$\alpha_n$} & \multicolumn{2}{c}{$\beta_n$} & \multicolumn{2}{c}{$\pi_n$} & \multicolumn{2}{c}{$F_n\{F^{-1}(0.1)\}$} & \multicolumn{2}{c}{$F_n\{F^{-1}(0.5)\}$} & \multicolumn{2}{c}{$F_n\{F^{-1}(0.9)\}$} \\ Scenario & $\pi_0$ & $n$ & $m$ & bias & sd & bias & sd & bias & sd & bias & sd & bias & sd & bias & sd \\ \hline
WOg & 0.4 & 100 & 21 & -0.083 & 0.651 & -0.022 & 0.342 & 0.044 & 0.134 & 0.186 & 0.167 & 0.004 & 0.164 & -0.065 & 0.108 \\ 
   &  & 300 & 0 & -0.053 & 0.381 & -0.007 & 0.225 & 0.018 & 0.082 & 0.119 & 0.127 & -0.008 & 0.134 & -0.035 & 0.083 \\ 
   &  & 1000 & 0 & -0.007 & 0.208 & -0.003 & 0.128 & 0.005 & 0.040 & 0.058 & 0.087 & -0.011 & 0.103 & -0.012 & 0.043 \\ 
   &  & 5000 & 0 & -0.004 & 0.094 & -0.002 & 0.056 & 0.002 & 0.017 & 0.016 & 0.041 & -0.006 & 0.055 & -0.003 & 0.018 \\ 
   & 0.7 & 100 & 36 & -0.014 & 0.360 & -0.009 & 0.186 & 0.018 & 0.106 & 0.098 & 0.115 & -0.008 & 0.132 & -0.024 & 0.072 \\ 
   &  & 300 & 4 & -0.009 & 0.211 & -0.005 & 0.119 & 0.008 & 0.069 & 0.056 & 0.080 & -0.010 & 0.100 & -0.013 & 0.047 \\ 
   &  & 1000 & 0 & -0.004 & 0.117 & -0.000 & 0.069 & 0.002 & 0.038 & 0.025 & 0.050 & -0.005 & 0.068 & -0.003 & 0.024 \\ 
   &  & 5000 & 0 & -0.002 & 0.051 & -0.002 & 0.031 & 0.001 & 0.017 & 0.004 & 0.023 & -0.003 & 0.031 & -0.001 & 0.010 \\ 
   \\[0.5ex]MOg & 0.4 & 100 & 45 & -0.067 & 0.846 & 0.002 & 0.400 & 0.047 & 0.156 & 0.106 & 0.122 & 0.008 & 0.112 & -0.008 & 0.056 \\ 
   &  & 300 & 0 & -0.049 & 0.458 & -0.015 & 0.249 & 0.024 & 0.095 & 0.061 & 0.079 & -0.001 & 0.079 & -0.006 & 0.035 \\ 
   &  & 1000 & 0 & -0.025 & 0.248 & -0.012 & 0.141 & 0.008 & 0.045 & 0.024 & 0.044 & -0.008 & 0.052 & -0.003 & 0.020 \\ 
   &  & 5000 & 0 & -0.006 & 0.115 & -0.002 & 0.064 & 0.002 & 0.020 & 0.006 & 0.019 & -0.002 & 0.026 & -0.000 & 0.009 \\ 
   & 0.7 & 100 & 69 & -0.011 & 0.511 & 0.007 & 0.222 & 0.018 & 0.124 & 0.049 & 0.081 & -0.001 & 0.084 & 0.000 & 0.037 \\ 
   &  & 300 & 7 & -0.031 & 0.299 & -0.004 & 0.153 & 0.016 & 0.089 & 0.029 & 0.049 & -0.005 & 0.059 & -0.002 & 0.023 \\ 
   &  & 1000 & 0 & -0.008 & 0.163 & -0.003 & 0.087 & 0.006 & 0.049 & 0.011 & 0.027 & -0.003 & 0.036 & -0.001 & 0.012 \\ 
   &  & 5000 & 0 & 0.002 & 0.071 & 0.001 & 0.040 & 0.000 & 0.022 & 0.003 & 0.011 & -0.000 & 0.017 & 0.000 & 0.006 \\ 
   \\[0.5ex]SOg & 0.4 & 100 & 305 & 1.339 & 12.672 & 0.155 & 0.455 & 0.012 & 0.224 & 0.062 & 0.190 & 0.024 & 0.138 & 0.021 & 0.049 \\ 
   &  & 300 & 145 & 0.076 & 0.619 & 0.055 & 0.274 & 0.041 & 0.182 & 0.018 & 0.087 & 0.001 & 0.060 & 0.010 & 0.024 \\ 
   &  & 1000 & 21 & -0.011 & 0.314 & -0.000 & 0.168 & 0.035 & 0.132 & 0.005 & 0.042 & -0.000 & 0.032 & 0.003 & 0.013 \\ 
   &  & 5000 & 0 & -0.004 & 0.152 & -0.000 & 0.079 & 0.011 & 0.062 & 0.002 & 0.018 & -0.000 & 0.014 & 0.001 & 0.006 \\ 
   & 0.7 & 100 & 386 & 1.222 & 22.682 & 0.169 & 0.326 & -0.085 & 0.207 & 0.043 & 0.117 & 0.020 & 0.079 & 0.009 & 0.036 \\ 
   &  & 300 & 244 & 0.101 & 0.379 & 0.069 & 0.189 & -0.028 & 0.167 & 0.017 & 0.051 & 0.005 & 0.037 & 0.003 & 0.017 \\ 
   &  & 1000 & 75 & 0.021 & 0.206 & 0.018 & 0.117 & 0.003 & 0.126 & 0.005 & 0.028 & 0.001 & 0.021 & 0.002 & 0.010 \\ 
   &  & 5000 & 0 & -0.003 & 0.100 & -0.000 & 0.055 & 0.007 & 0.067 & 0.001 & 0.012 & 0.000 & 0.009 & 0.000 & 0.004 \\ 
   \hline
\end{tabular}
\end{center}
\end{sidewaystable}

% latex table generated in R 2.15.2 by xtable 1.7-0 package
% Mon Jan  7 13:52:46 2013
\begin{sidewaystable}[ht]
\begin{center}
\caption{For $M=1000$ random samples generated under scenarios WOe, MOe and SOe, number $m$ of samples out of $M$ for which $\pi_n \not \in (0,1]$, as well as estimated bias and standard deviation of $\alpha_n$, $\beta_n$, $\pi_n$, $F_n\{F^{-1}(0.1)\}$, $F_n\{F^{-1}(0.5)\}$ and $F_n\{F^{-1}(0.9)\}$ computed from the $M-m$ valid estimates.}
\label{expnoise}
\begin{tabular}{lrrrrrrrrrrrrrrr}
  \hline
   & & & & \multicolumn{2}{c}{$\alpha_n$} & \multicolumn{2}{c}{$\beta_n$} & \multicolumn{2}{c}{$\pi_n$} & \multicolumn{2}{c}{$F_n\{F^{-1}(0.1)\}$} & \multicolumn{2}{c}{$F_n\{F^{-1}(0.5)\}$} & \multicolumn{2}{c}{$F_n\{F^{-1}(0.9)\}$} \\ Scenario & $\pi_0$ & $n$ & $m$ & bias & sd & bias & sd & bias & sd & bias & sd & bias & sd & bias & sd \\ \hline
WOe & 0.4 & 100 & 26 & -0.040 & 0.715 & -0.027 & 0.336 & 0.045 & 0.138 & 0.224 & 0.185 & -0.008 & 0.179 & -0.060 & 0.106 \\ 
   &  & 300 & 0 & -0.017 & 0.380 & -0.005 & 0.218 & 0.013 & 0.074 & 0.154 & 0.152 & -0.021 & 0.151 & -0.031 & 0.077 \\ 
   &  & 1000 & 0 & -0.009 & 0.215 & -0.003 & 0.125 & 0.004 & 0.040 & 0.084 & 0.115 & -0.025 & 0.118 & -0.011 & 0.041 \\ 
   &  & 5000 & 0 & -0.003 & 0.092 & 0.001 & 0.055 & 0.001 & 0.017 & 0.028 & 0.073 & -0.010 & 0.066 & -0.002 & 0.015 \\ 
   & 0.7 & 100 & 47 & 0.000 & 0.372 & 0.007 & 0.189 & 0.013 & 0.108 & 0.145 & 0.149 & -0.017 & 0.149 & -0.021 & 0.071 \\ 
   &  & 300 & 1 & -0.017 & 0.203 & -0.001 & 0.126 & 0.010 & 0.071 & 0.085 & 0.113 & -0.021 & 0.116 & -0.011 & 0.046 \\ 
   &  & 1000 & 0 & -0.006 & 0.111 & -0.004 & 0.070 & 0.003 & 0.037 & 0.036 & 0.079 & -0.017 & 0.079 & -0.004 & 0.022 \\ 
   &  & 5000 & 0 & -0.002 & 0.051 & 0.000 & 0.031 & 0.001 & 0.017 & 0.009 & 0.049 & -0.004 & 0.039 & -0.000 & 0.009 \\ 
   \\[0.5ex]MOe & 0.4 & 100 & 44 & -0.020 & 1.104 & -0.005 & 0.390 & 0.047 & 0.153 & 0.148 & 0.146 & -0.008 & 0.128 & -0.011 & 0.052 \\ 
   &  & 300 & 0 & -0.040 & 0.463 & -0.005 & 0.259 & 0.019 & 0.090 & 0.092 & 0.109 & -0.017 & 0.097 & -0.005 & 0.034 \\ 
   &  & 1000 & 0 & -0.012 & 0.255 & -0.005 & 0.146 & 0.007 & 0.046 & 0.043 & 0.073 & -0.013 & 0.067 & -0.001 & 0.019 \\ 
   &  & 5000 & 0 & -0.005 & 0.115 & -0.003 & 0.065 & 0.002 & 0.021 & 0.010 & 0.042 & -0.004 & 0.034 & -0.001 & 0.008 \\ 
   & 0.7 & 100 & 82 & -0.021 & 0.498 & 0.014 & 0.242 & 0.015 & 0.127 & 0.081 & 0.120 & -0.018 & 0.100 & -0.000 & 0.036 \\ 
   &  & 300 & 4 & -0.012 & 0.289 & -0.002 & 0.155 & 0.012 & 0.086 & 0.048 & 0.082 & -0.013 & 0.073 & -0.001 & 0.022 \\ 
   &  & 1000 & 0 & -0.002 & 0.162 & -0.001 & 0.090 & 0.004 & 0.050 & 0.022 & 0.057 & -0.006 & 0.048 & -0.001 & 0.012 \\ 
   &  & 5000 & 0 & -0.002 & 0.069 & -0.002 & 0.040 & 0.001 & 0.022 & 0.002 & 0.030 & -0.002 & 0.021 & -0.000 & 0.006 \\ 
   \\[0.5ex]SOe & 0.4 & 100 & 325 & 0.972 & 7.133 & 0.191 & 0.533 & 0.008 & 0.220 & 0.104 & 0.205 & -0.000 & 0.146 & 0.015 & 0.053 \\ 
   &  & 300 & 194 & 0.049 & 0.600 & 0.044 & 0.276 & 0.051 & 0.192 & 0.047 & 0.109 & -0.013 & 0.074 & 0.007 & 0.027 \\ 
   &  & 1000 & 36 & -0.014 & 0.342 & 0.005 & 0.177 & 0.045 & 0.147 & 0.029 & 0.074 & -0.011 & 0.050 & 0.004 & 0.015 \\ 
   &  & 5000 & 0 & -0.001 & 0.160 & 0.002 & 0.087 & 0.009 & 0.066 & 0.010 & 0.042 & -0.002 & 0.025 & 0.001 & 0.007 \\ 
   & 0.7 & 100 & 399 & 0.432 & 1.880 & 0.213 & 0.437 & -0.097 & 0.211 & 0.090 & 0.155 & 0.016 & 0.096 & 0.006 & 0.036 \\ 
   &  & 300 & 299 & 0.133 & 0.398 & 0.091 & 0.213 & -0.043 & 0.170 & 0.048 & 0.094 & 0.007 & 0.054 & 0.001 & 0.018 \\ 
   &  & 1000 & 97 & 0.031 & 0.230 & 0.019 & 0.121 & 0.004 & 0.135 & 0.021 & 0.061 & 0.000 & 0.034 & 0.001 & 0.010 \\ 
   &  & 5000 & 1 & -0.004 & 0.110 & -0.001 & 0.061 & 0.011 & 0.077 & 0.004 & 0.031 & -0.001 & 0.016 & 0.001 & 0.005 \\ 
   \hline
\end{tabular}
\end{center}
\end{sidewaystable}

A first general comment concerning the results reported in Tables~\ref{normnoise},~\ref{gammanoise} and~\ref{expnoise} is that the number $m$ of samples for which $\pi_n \not \in (0,1]$ is the highest for the SO scenarios followed by the MO scenarios and then the WO scenarios. Also, for a fixed amount of overlap between the two mixed populations, it is when the distribution of $\varepsilon$ is exponential that $m$ tends to be the highest followed by the gamma and the normal cases. Hence, as expected, the SO scenarios are the hardest and, for a given degree of overlap, the most difficult problems are those involving exponential errors for the unknown regression component.

{\it Influence of the shape of the p.d.f.\ of $\varepsilon$.}  A surprising result, when observing Tables~\ref{normnoise},~\ref{gammanoise} and~\ref{expnoise}, is that the nature of the distribution of $\varepsilon$ appears to have very little influence on the performance of the estimators $\alpha_n$, $\beta_n$ and $\pi_n$. Under weak and moderate overlap in particular, the estimated bias and standard deviations of the estimators are almost unaffected by the distribution of the error of the unknown component. 

{\it The effect of the degree of overlap.} As expected, the performance of the estimators $\alpha_n$, $\beta_n$ and $\pi_n$ is strongly affected by the degree of overlap. Notice however that the results obtained under the WO and MO data generating scenarios are rather comparable, while the performance of the estimators gets significantly worse when switching to the SO scenarios, especially for $\pi_n$.
%This remark  is particularly interesting since in contrast, when looking at Graph (???),  the gap in terms of complexity seems to be  higher between WO and MO than between  MO and SO. 
Notice also that, overall, the biases of $\alpha_n$ and $\beta_n$ are negative under WO and MO and positive under SO, while, for all the scenarios under consideration, $\pi_n$ tends to have a positive bias.

{\it The influence of $\pi_0$.} For a given degree of overlap and sample size, the parameter that seems to affect the most the performance of the estimators   is the proportion $\pi_0$ of the unknown component. On one hand, the number of samples for which $\pi_n \notin (0,1]$ is lower for $\pi_0=0.4$ than for $\pi_0=0.7$. On the other hand, when considering the samples for which $\pi_n \in (0,1]$, the finite-sample behavior of $\alpha_n$ and $\beta_n$ improves very clearly when $\pi_0$ switches from $0.4$ to $0.7$. %We are not able to explain these apparently contradictory observations.

{\it Performance of the functional estimator.}  The study of $F_n\{F^{-1}(p)\}$ for $p \in \{0.1, 0.5, 0.9\}$ clearly shows that, for a given degree of overlap between the two mixed populations, the performance of the functional estimator is the best when the distribution of $\varepsilon$ is normal followed by the gamma and the exponential settings. In addition, it appears that $F_n\{F^{-1}(p)\}$, $p \in \{0.1, 0.5\}$, behaves the best under the MO scenarios, and that, somehow surprisingly, $F_n\{F^{-1}(0.9)\}$ achieves its best results under the SO scenarios.

{\it Asymptotics.} The results reported in Tables~\ref{normnoise},~\ref{gammanoise} and~\ref{expnoise} are in accordance with the asymptotic theory stated in the previous section. In particular, as expected, the estimated biases and standard deviations of all the estimators tend to zero as $n$ increases. Notice for instance that under SOg and SOe with $\pi_0 = 0.4$ (two of the most difficult scenarios), the estimated standard deviation of $\alpha_n$ is greater than 7 for $n=100$, drops below $0.7$ for $n=300$, and becomes very reasonable for $n=1000$ and $5000$.

{\it Comparison with the method proposed in~\cite{Van13}.} The results reported in Table~\ref{normnoise} for models WOn, MOn and SOn, and for $n \in \{100,300\}$, can be directly compared with those reported in~\cite[Table 2]{Van13}. The scenarios with gamma and exponential errors considered in this work have however no analogue in~\cite{Van13} as the method therein was derived under zero-symmetry assumptions for the errors. A comparison of Table~\ref{normnoise} with Table~2 in~\cite{Van13} reveals that the standard deviations of our estimators of $\alpha_0$, $\beta_0$ and $\pi_0$ are between 1.5 and 3 times larger, while the two sets of estimators are rather comparable in terms of bias. It is however important to recall that the results reported in~\cite{Van13} were obtained after a careful adjustment of the tuning parameters of the estimation method while the approach derived in this work is free of tuning parameters. Indeed, in the Monte Carlo experiments reported in~\cite{Van13}, the underlying gradient optimization method is initialized at the true value of the parameter vector $(\alpha_0,\beta_0,\pi_0)$ and the choice of the weight distribution function involved in the definition of the contrast function is carefully hand-tuned to avoid numerical instability~\cite[see][Section 4.2]{Van13}. %Note for instance, that in Vandekerkhove (2013, Table 1), the estimation of $(p_*,\beta_*)=(0.3,1)$ obtained by considering $Q={\mathcal N}(0,\sigma_V^2)$ has a standard deviation 1.5 to 1.8 times bigger when  $\sigma_V=2$ compared to $\sigma_V=4$ which is in the range of the differences previously described.

Let us now present the results of the Monte Carlo experiments used to investigate the finite-sample performance of the estimators of the standard errors of $\alpha_n$, $\beta_n$, $\pi_n$ and $F_n\{F^{-1}(p)\}$, $p \in \{0.1, 0.5, 0.9\}$, mentioned below Proposition~\ref{euclidean} and in~\eqref{std}, respectively. The setting is the same as previously with the exception that $n \in \{100,300,1000,5000,25000\}$. The results are partially reported in Table~\ref{stdall} which gives, for scenarios WOn, MOg and SOe and each of the aforementioned estimators, the standard deviation of the estimates multiplied by $\sqrt{n}$ and the mean of the estimated standard errors multiplied by $\sqrt{n}$. As can be seen, for all estimators and all scenarios, the standard deviation of the estimates and the mean of the estimated standard errors are always very close for $n=25,000$. The convergence to zero of the difference between these two quantities appears however slower for $F_n\{F^{-1}(p)\}$, $p \in \{0.1, 0.5, 0.9\}$, than for $\alpha_n$, $\beta_n$ and $\pi_n$, the worst results being obtained for $F_n\{F^{-1}(0.1)\}$. The results also confirm that the SO scenarios are the hardest. Notice finally that the estimated standard errors of $\alpha_n$ and $\beta_n$ seem to underestimate on average the variability of $\alpha_n$ and $\beta_n$, and that the variability of $\pi_n$ and $F_n\{F^{-1}(p)\}$, $p \in \{0.1, 0.5, 0.9\}$ appears to be underestimated on average for the WO and MO scenarios, and overestimated on average for the SO scenarios.

%\input{R2/stdall}

% latex table generated in R 2.15.2 by xtable 1.7-0 package
% Mon Jan 14 16:54:42 2013
\begin{sidewaystable}[ht]
\begin{center}
\caption{For $M=1000$ random samples generated under scenarios WOn, MOg and SOe, number $m$ of samples out of $M$ for which $\pi_n \not \in (0,1]$, and, for each of the estimators $\alpha_n$, $\beta_n$, $\pi_n$, $F_n\{F^{-1}(0.1)\}$, $F_n\{F^{-1}(0.5)\}$ and $F_n\{F^{-1}(0.9)\}$, standard deviation of the $M-m$ valid estimates times $\sqrt{n}$, and mean of the estimated standard errors times $\sqrt{n}$. The quantities $t_1$, $t_2$ and $t_3$ in the table are equal to $F^{-1}(0.1)$, $F^{-1}(0.5)$ and $F^{-1}(0.9)$, respectively. }
\label{stdall}
\begin{tabular}{lrrrrrrrrrrrrrrr}
  \hline
   & & & & \multicolumn{2}{c}{$\alpha_n$} & \multicolumn{2}{c}{$\beta_n$} & \multicolumn{2}{c}{$\pi_n$} & \multicolumn{2}{c}{$F_n(t_1)$} & \multicolumn{2}{c}{$F_n(t_2)$} & \multicolumn{2}{c}{$F_n(t_3)$} \\ Scenario & $\pi_0$ & $n$ & $m$ & sd & $\overline{\mbox{se}}$ & sd & $\overline{\mbox{se}}$ & sd & $\overline{\mbox{se}}$ & sd & $\overline{\mbox{se}}$ & sd & $\overline{\mbox{se}}$ & sd & $\overline{\mbox{se}}$  \\ \hline
WOn & 0.4 & 100 & 16 & 6.66 & 6.67 & 3.51 & 2.92 & 1.37 & 1.23 & 1.43 & 1.15 & 1.57 & 1.36 & 1.18 & 1.11 \\ 
   &  & 300 & 0 & 7.10 & 6.49 & 3.88 & 3.43 & 1.42 & 1.23 & 1.55 & 1.18 & 2.22 & 1.90 & 1.72 & 1.50 \\ 
   &  & 1000 & 0 & 6.63 & 6.56 & 4.09 & 3.79 & 1.30 & 1.22 & 1.46 & 1.09 & 2.88 & 2.62 & 1.97 & 1.81 \\ 
   &  & 5000 & 0 & 6.42 & 6.61 & 4.00 & 3.92 & 1.19 & 1.24 & 0.95 & 0.86 & 3.31 & 3.23 & 1.88 & 1.93 \\ 
   &  & 25000 & 0 & 6.74 & 6.62 & 3.98 & 3.96 & 1.25 & 1.24 & 0.78 & 0.75 & 3.55 & 3.44 & 1.94 & 1.92 \\ 
   & 0.7 & 100 & 33 & 3.49 & 3.50 & 1.86 & 1.61 & 1.04 & 1.05 & 0.73 & 0.60 & 1.16 & 1.02 & 0.87 & 0.75 \\ 
   &  & 300 & 2 & 3.56 & 3.54 & 2.08 & 1.89 & 1.19 & 1.12 & 0.71 & 0.56 & 1.49 & 1.34 & 1.07 & 0.93 \\ 
   &  & 1000 & 0 & 3.77 & 3.58 & 2.17 & 2.08 & 1.23 & 1.17 & 0.56 & 0.50 & 1.82 & 1.65 & 1.17 & 1.05 \\ 
   &  & 5000 & 0 & 3.60 & 3.63 & 2.16 & 2.18 & 1.18 & 1.20 & 0.45 & 0.43 & 1.89 & 1.88 & 1.08 & 1.09 \\ 
   &  & 25000 & 0 & 3.60 & 3.61 & 2.12 & 2.17 & 1.18 & 1.19 & 0.41 & 0.41 & 1.94 & 1.92 & 1.04 & 1.07 \\ 
   \\[0.3ex]MOg & 0.4 & 100 & 57 & 7.96 & 7.91 & 3.92 & 3.33 & 1.53 & 1.46 & 1.15 & 1.03 & 1.11 & 1.08 & 0.54 & 0.62 \\ 
   &  & 300 & 2 & 7.99 & 7.69 & 4.41 & 3.93 & 1.60 & 1.39 & 1.43 & 1.09 & 1.38 & 1.32 & 0.61 & 0.64 \\ 
   &  & 1000 & 0 & 8.37 & 7.83 & 4.64 & 4.34 & 1.50 & 1.40 & 1.46 & 1.10 & 1.74 & 1.63 & 0.64 & 0.65 \\ 
   &  & 5000 & 0 & 8.39 & 8.04 & 4.69 & 4.54 & 1.52 & 1.43 & 1.38 & 1.13 & 1.96 & 1.86 & 0.65 & 0.64 \\ 
   &  & 25000 & 0 & 8.30 & 8.04 & 4.57 & 4.58 & 1.52 & 1.44 & 1.28 & 1.19 & 1.96 & 1.91 & 0.65 & 0.64 \\ 
   & 0.7 & 100 & 66 & 4.55 & 4.70 & 2.47 & 2.07 & 1.27 & 1.26 & 0.86 & 0.65 & 0.82 & 0.77 & 0.37 & 0.39 \\ 
   &  & 300 & 8 & 5.06 & 4.80 & 2.71 & 2.42 & 1.51 & 1.40 & 0.89 & 0.70 & 1.03 & 0.95 & 0.41 & 0.41 \\ 
   &  & 1000 & 0 & 5.05 & 4.95 & 2.73 & 2.64 & 1.57 & 1.48 & 0.86 & 0.70 & 1.15 & 1.10 & 0.43 & 0.42 \\ 
   &  & 5000 & 0 & 5.00 & 5.01 & 2.72 & 2.73 & 1.55 & 1.52 & 0.79 & 0.73 & 1.17 & 1.17 & 0.41 & 0.42 \\ 
   &  & 25000 & 0 & 4.93 & 5.03 & 2.71 & 2.76 & 1.52 & 1.53 & 0.79 & 0.78 & 1.19 & 1.19 & 0.42 & 0.42 \\ 
   \\[0.3ex]SOe & 0.4 & 100 & 294 & 76.74 & 60.97 & 6.19 & 4.65 & 2.24 & 3.59 & 1.94 & 2.30 & 1.36 & 1.94 & 0.51 & 0.80 \\ 
   &  & 300 & 171 & 11.91 & 10.92 & 5.13 & 4.92 & 3.40 & 4.35 & 2.13 & 1.64 & 1.40 & 1.55 & 0.46 & 0.60 \\ 
   &  & 1000 & 31 & 11.20 & 10.24 & 6.05 & 5.52 & 4.65 & 4.65 & 2.47 & 1.79 & 1.62 & 1.58 & 0.49 & 0.53 \\ 
   &  & 5000 & 0 & 11.47 & 10.87 & 6.17 & 5.93 & 4.64 & 4.38 & 2.91 & 2.47 & 1.70 & 1.68 & 0.48 & 0.48 \\ 
   &  & 25000 & 0 & 10.96 & 11.23 & 6.06 & 6.16 & 4.27 & 4.37 & 3.68 & 3.49 & 1.64 & 1.72 & 0.46 & 0.47 \\ 
   & 0.7 & 100 & 410 & 8.91 & 8.82 & 3.37 & 3.43 & 2.06 & 3.00 & 1.48 & 1.19 & 0.87 & 1.11 & 0.36 & 0.44 \\ 
   &  & 300 & 262 & 7.58 & 7.51 & 4.07 & 4.00 & 3.06 & 4.02 & 1.75 & 1.36 & 0.96 & 1.13 & 0.33 & 0.39 \\ 
   &  & 1000 & 121 & 7.41 & 7.55 & 4.09 & 4.23 & 4.44 & 5.04 & 1.92 & 1.54 & 1.07 & 1.19 & 0.31 & 0.36 \\ 
   &  & 5000 & 1 & 8.06 & 7.83 & 4.38 & 4.35 & 5.58 & 5.43 & 2.33 & 2.11 & 1.20 & 1.19 & 0.34 & 0.34 \\ 
   &  & 25000 & 0 & 8.00 & 8.00 & 4.36 & 4.45 & 5.44 & 5.50 & 2.80 & 2.76 & 1.22 & 1.22 & 0.33 & 0.34 \\ 
   \hline
\end{tabular}
\end{center}
\end{sidewaystable}

We end this section by an investigation of the finite-sample properties of the confidence band construction proposed in Subsection~\ref{bootstrap}. Table~\ref{cball} reports the proportion of samples for which 
$$
\max_{t \in \{U_1,\dots,U_{100}\}} | F_n(t) - F(t) | > n^{-1/2} G_{n,N}^{-1}(0.95), 
$$
where $G_{n,N}$ is defined as in~(\ref{GnN}) with $N=1000$, and $U_1,\dots,U_{100}$ are uniformly spaced over the interval $[\min_{1 \leq i \leq n} (Y_i - \alpha_n - \beta_n X_i), \max_{1 \leq i \leq n} (Y_i - \alpha_n - \beta_n X_i)]$. As could have been partly expected from the results reported in Table~\ref{stdall}, the confidence bands are too narrow on average for the WO and MO scenarios, the worse results being obtained when the error of the unknown component is exponential. The results are, overall, more satisfactory for the SO scenarios. In all cases, the estimated coverage probability appears to converge to 0.95, although the convergence appears to be slow.

%\input{R/cball}

% latex table generated in R 2.15.2 by xtable 1.7-0 package
% Tue Jan 15 09:54:16 2013
\begin{table}[ht]
\begin{center}
\caption{For $M=1000$ random samples generated under each of the nine scenarios considered in Section~\ref{mc}, number $m$ of samples out of $M$ for which $\pi_n \not \in (0,1]$, and proportion $p$ out of the $M-m$ remaining samples for which $F_n$ is not in the approximate confidence band computed as explained in Subsection~\ref{bootstrap}.}
\label{cball}
\begin{tabular}{lrrrrrrrr}
  \hline
  Generic & & & \multicolumn{2}{c}{$\varepsilon\sim$~Normal} & \multicolumn{2}{c}{$\varepsilon\sim$~Gamma} & \multicolumn{2}{c}{$\varepsilon\sim$~Exp} \\ scenario & $\pi_0$ & $n$ & $m$ & $p$ & $m$ & $p$ & $m$ & $p$ \\ \hline
WO & 0.4 & 100 & 22 & 0.306 &   27 & 0.362 &   24 & 0.444 \\ 
   &  & 300 & 0 & 0.238 &    0 & 0.251 &    2 & 0.334 \\ 
   &  & 1000 & 0 & 0.126 &    0 & 0.182 &    0 & 0.226 \\ 
   &  & 5000 & 0 & 0.082 &    0 & 0.080 &    0 & 0.133 \\ 
   &  & 25000 & 0 & 0.064 &    0 & 0.055 &    0 & 0.092 \\ 
   & 0.7 & 100 & 32 & 0.169 &   32 & 0.195 &   24 & 0.290 \\ 
   &  & 300 & 2 & 0.138 &    5 & 0.160 &    3 & 0.231 \\ 
   &  & 1000 & 0 & 0.092 &    0 & 0.108 &    0 & 0.168 \\ 
   &  & 5000 & 0 & 0.073 &    0 & 0.074 &    0 & 0.090 \\ 
   &  & 25000 & 0 & 0.056 &    0 & 0.041 &    0 & 0.081 \\ 
   \\[0.5ex]MO & 0.4 & 100 & 45 & 0.088 &   42 & 0.177 &   48 & 0.334 \\ 
   &  & 300 & 0 & 0.114 &    2 & 0.205 &    1 & 0.296 \\ 
   &  & 1000 & 0 & 0.103 &    0 & 0.127 &    0 & 0.207 \\ 
   &  & 5000 & 0 & 0.073 &    0 & 0.095 &    0 & 0.126 \\ 
   &  & 25000 & 0 & 0.050 &    0 & 0.073 &    0 & 0.085 \\ 
   & 0.7 & 100 & 76 & 0.088 &   60 & 0.117 &   67 & 0.247 \\ 
   &  & 300 & 7 & 0.102 &   13 & 0.146 &   12 & 0.215 \\ 
   &  & 1000 & 0 & 0.084 &    0 & 0.082 &    0 & 0.140 \\ 
   &  & 5000 & 0 & 0.054 &    0 & 0.067 &    0 & 0.096 \\ 
   &  & 25000 & 0 & 0.049 &    0 & 0.065 &    0 & 0.070 \\ 
   \\[0.5ex]SO & 0.4 & 100 & 259 & 0.003 &  327 & 0.030 &  316 & 0.072 \\ 
   &  & 300 & 103 & 0.006 &  128 & 0.057 &  182 & 0.117 \\ 
   &  & 1000 & 4 & 0.027 &   14 & 0.067 &   29 & 0.142 \\ 
   &  & 5000 & 0 & 0.029 &    0 & 0.077 &    0 & 0.123 \\ 
   &  & 25000 & 0 & 0.042 &    0 & 0.045 &    0 & 0.087 \\ 
   & 0.7 & 100 & 328 & 0.001 &  413 & 0.036 &  405 & 0.099 \\ 
   &  & 300 & 166 & 0.005 &  249 & 0.037 &  280 & 0.094 \\ 
   &  & 1000 & 32 & 0.028 &   91 & 0.043 &  119 & 0.083 \\ 
   &  & 5000 & 0 & 0.036 &    2 & 0.062 &    2 & 0.088 \\ 
   &  & 25000 & 0 & 0.044 &    0 & 0.061 &    0 & 0.071 \\ 
   \hline
\end{tabular}
\end{center}
\end{table}

\section{Illustrations}
\label{illus}

We first applied the proposed method to a dataset initially reported in~\cite{Coh80} and subsequently analyzed in~\cite{DeV89} and~\cite{HunYou12}, among others. As we shall see, the model studied in this work and stated in~\eqref{model} appears as a rather natural candidate for this dataset. For other datasets for which it is less natural to assume that one of the two components is known, the derived method can be used to assess the relevance of the results of EM-type algorithms for estimating two-component mixtures of linear regressions. Two such datasets will be analyzed: the aphids dataset initially considered in~\cite{BoiSinSinTaiTur98}, and the NimbleGen high density array dataset studied in~\cite{MarMarBer08}.

\subsection{The tone dataset}

The dataset consists of $n=150$ observations $(x_i,\tilde y_i)$ where the $x_i$ are actual tones and the $\tilde y_i$ are the corresponding perceived tones by a trained musician. The detailed description of the dataset given in~\cite{HunYou12} suggests that it is natural to consider that the equation of the tilted component is $y=x$. The transformation $y_i = \tilde y_i - x_i$ was then applied to obtain a dataset $(x_i,y_i)$ that fits into the setting considered in this work. The original dataset and the transformed dataset are represented in the upper left and upper right plots of Figure~\ref{tone}, respectively.

\begin{figure}[t!]
  \begin{center}
    \includegraphics*[width=1\linewidth]{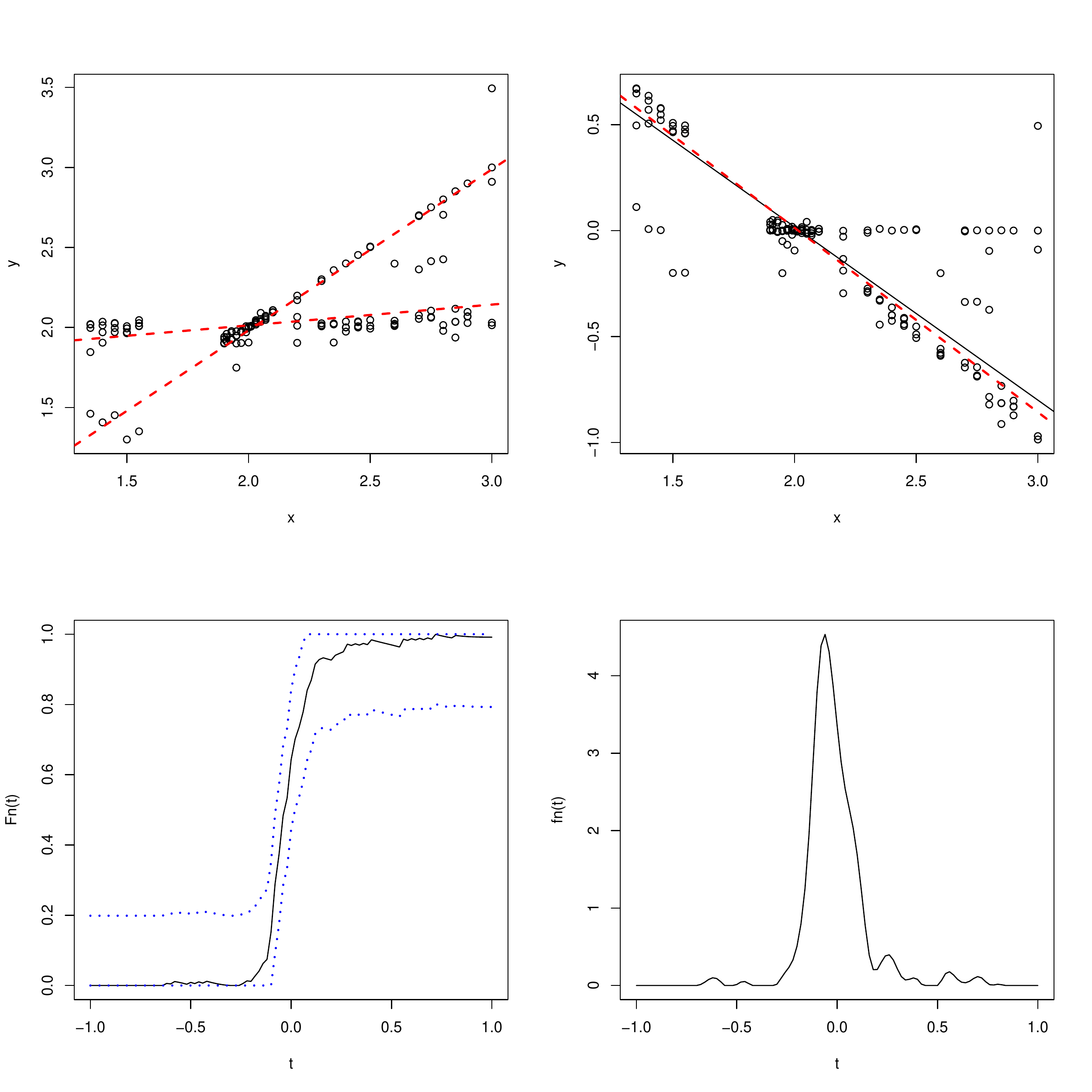}
  \end{center}
  \caption{Upper left plot: the original tone data; the dashed lines represent the regression lines obtained in~\cite{HunYou12} using a semiparametric EM-like algorithm without zero-symmetry assumptions. Upper right plot: the transformed data; the solid line represents the estimated regression line; the dashed line represents the corresponding (transformed) regression line obtained in~\cite{HunYou12}. Lower left plot: the estimate $(F_n \vee 0) \wedge 1$ (solid line) of the unknown c.d.f.\ $F$ of $\varepsilon$ as well as an approximate confidence band (dotted lines) of level 0.95 for $F$ computed as explained in Subsection~\ref{bootstrap} with $N=10,000$. Lower right plot: the estimate $f_n \vee 0$ of the unknown p.d.f.\ $f$ of $\varepsilon$.}
  \label{tone}
\end{figure}

The approach proposed in this paper was applied under the assumption that the distribution of $\varepsilon^*$ in~(\ref{model}) is normal with standard deviation 0.079. The latter value was obtained by considering the upper right plot of Figure~\ref{tone} and by computing the sample standard deviation of the $y_i$ such that $y_i \in (-0.25, 0.25)$ and $x_i < 1.75$ or $x_i > 2.25$. 

The estimate $(1.652, -0.817,  0.790)$ was obtained for the parameter vector $(\alpha_0,\beta_0,\pi_0)$ with $(0.217,  0.108, 0.104)$ as the vector of estimated standard errors. The estimated regression line is represented by a solid line in the upper right plot of Figure~\ref{tone}. The dashed line represents the corresponding (transformed) regression line obtained in~\cite{HunYou12} using a semiparametric EM-like algorithm without zero-symmetry assumptions (see Table~1 in the latter paper for more results). The estimate $(F_n \vee 0) \wedge 1$ (resp.\ $f_n \vee 0$) of the unknown c.d.f.\ $F$ (resp.\ p.d.f.\ $f$) of $\varepsilon$ is represented in the lower left (resp.\ right) plot of Figure~\ref{tone}. The dotted lines in the lower left plot represent an approximate confidence band of level 0.95 for $F$ computed as explained in Subsection~\ref{bootstrap} using $N=10,000$. Note that, from the results of the previous section, the latter is probably too narrow. Numerical integration using the R function {\tt integrate} gave $\int_{-1}^1 (f_n \vee 0) \approx 1.01$. %The results reported in Figure~\ref{tone} suggest that a normal assumption for the error of the second component is most likely not appropriate.

\subsection{The aphids dataset}

We next considered a dataset initially analyzed in~\cite{BoiSinSinTaiTur98} and available in the {\tt mixreg} R package~\cite{mixreg}. The data were obtained from 51 experiments. Each experiment consisted of releasing a certain number of green peach aphids (flying insects) in a chamber containing 81 tobacco plants arranged in a $9 \times 9$ grid. Among these plants, 12 were infected by a certain virus and 69 were healthy. After 24 hours the chambers were fumigated to kill the aphids, and the previously healthy plants were moved and monitored to detect symptoms of infection. The number of infected plants was recorded. The dataset thus consists of $n=51$ observations $(x_i,\tilde y_i)$ where the $x_i$ are the number of released aphids and the $\tilde y_i$ are the corresponding number of infected plants. The resulting scatterplot is represented in the upper left plot of Figure~\ref{aphids}. The dashed lines represent the regression lines obtained in~\cite[Table 1]{Tur00} using a standard EM algorithm with normal errors. With the notation of~\eqref{twocomp} and the convention that $\sigma_0^*$ and $\sigma_0^{**}$ are the standard deviations of $\varepsilon^*$ and $\varepsilon^{**}$, respectively, the author obtained the estimate $(0.859, 0.002, 1.125)$ for $(\alpha_0^*,\beta_0^*,\sigma_0^*)$, the estimate $(3.47, 0.055, 3.115)$ for $(\alpha_0^{**},\beta_0^{**},\sigma_0^{**})$ and the estimate 0.5 for $\pi_0$~\cite[see also][Table 4]{YouHun10}. 

\begin{figure}[t!]
  \begin{center}
    \includegraphics*[width=1\linewidth]{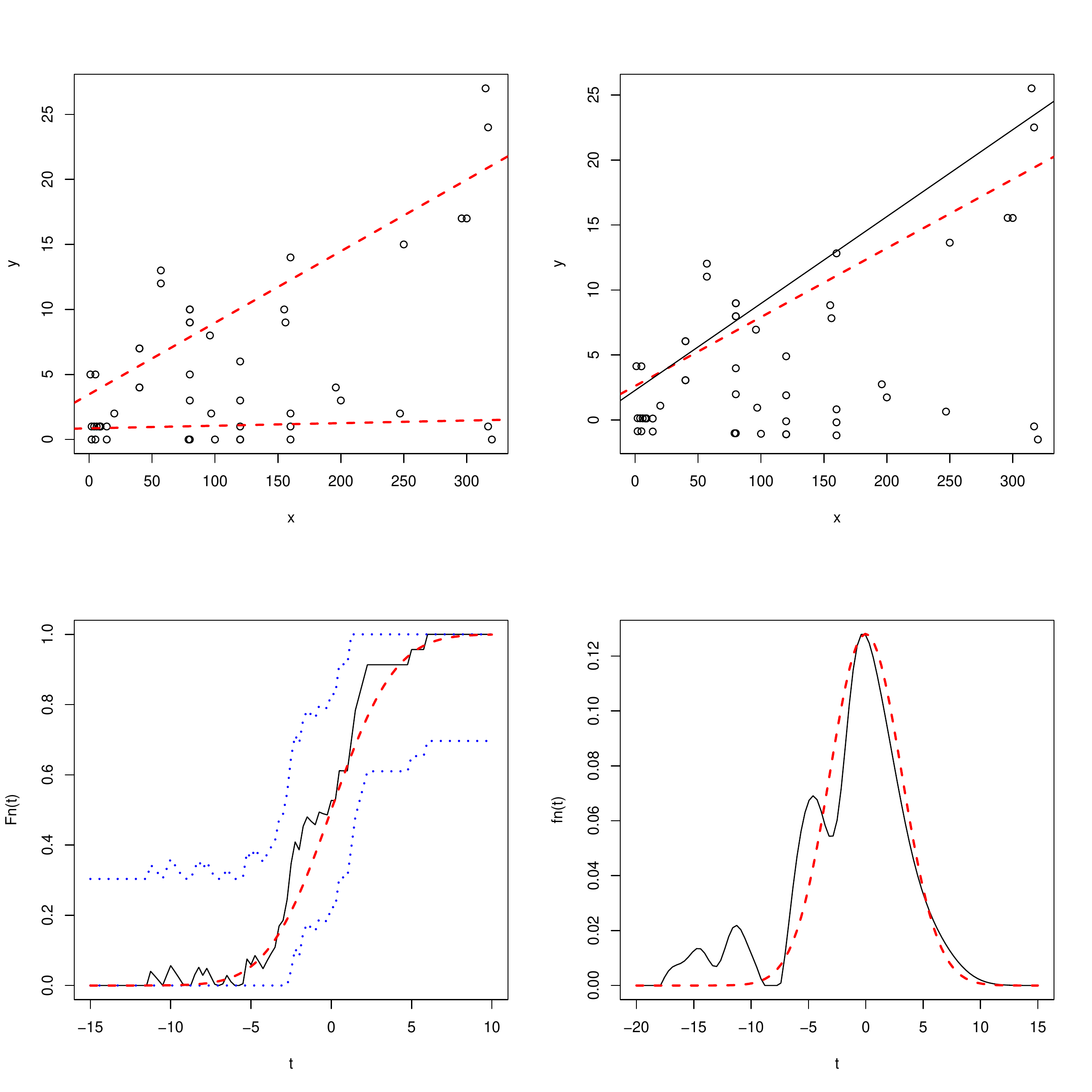}
  \end{center}
  \caption{Upper left plot: the original aphid data; the dashed lines represent the regression lines reported in~\cite[Table 1]{Tur00} and obtained using a standard EM approach with normal errors. Upper right plot: the transformed aphid data; the solid line represents the estimated regression line; the dashed line represents the corresponding (transformed) regression line obtained in~\cite{Tur00}. Lower left plot: the estimate $(F_n \vee 0) \wedge 1$ (solid line) of the unknown c.d.f.\ $F$ of $\varepsilon$, the c.d.f.\ of the parametric estimation of $\varepsilon^{**}$ obtained in~\cite{Tur00} (dashed line) as well as an approximate confidence band (dotted lines) of level 0.95 for $F$ computed as explained in Subsection~\ref{bootstrap} with $N=10,000$. Lower right plot: the estimate $f_n \vee 0$ (solid line) of the unknown p.d.f.\ $f$ of $\varepsilon$ and the p.d.f.\ of the parametric estimation of $\varepsilon^{**}$ obtained in~\cite{Tur00} (dashed line).}
  \label{aphids}
\end{figure}

To show how the semiparametric approach studied in this work could be used to assess the relevance of the results reported in~\cite{Tur00}, we arbitrarily made the assumption that the almost horizontal component in the upper left plot of Figure~\ref{aphids} was perfectly estimated, i.e., that the known component has equation $y=0.859 + 0.002x$ and that the distribution of the corresponding error is normal with standard deviation 1.125 as estimated in~\cite{Tur00}. The transformation $y_i = \tilde y_i - 0.859 - 0.002 x_i$ was then applied to obtain a dataset $(x_i,y_i)$ that fits into the setting considered in this work.  The resulting scatterplot is represented in the upper right plot of Figure~\ref{aphids}.

The estimate $(2.281, 0.067, 0.454)$ was obtained for the parameter $(\alpha_0,\beta_0,\pi_0)$ with $(2.538, 0.016, 0.120)$ as the vector of estimated standard errors. The estimated regression line is represented by a solid line in the upper right plot of Figure~\ref{aphids}. The dashed line represents the corresponding regression line obtained in~\cite{Tur00}. The estimate $(F_n \vee 0) \wedge 1$ (resp.\ $f_n \vee 0$) of the unknown c.d.f.\ $F$ (resp.\ p.d.f.\ $f$) of $\varepsilon$ is represented in the lower left (resp.\ right) plot of Figure~\ref{aphids}. The dashed curve in the lower left (resp.\ right) plot represents the c.d.f.\ (resp.\ p.d.f.) of the parametric estimation of $\varepsilon^{**}$ obtained in~\cite{Tur00}, which is normal with standard deviation equal to $\sigma_0^{**} = 3.115$. The dotted lines in that the lower left plot represent an approximate confidence band of level 0.95 for $F$ computed as explained in Subsection~\ref{bootstrap} using $N=10,000$. Note again that, from the results of the previous section, the latter is probably too narrow. Numerical integration using the R function {\tt integrate} gave $\int_{-20}^{15} (f_n \vee 0) \approx 1.07$. The results reported in Figure~\ref{aphids} show no evidence against a normal assumption for the error of the second component.

\subsection{The NimbleGen high density array dataset}

As a final application, we considered the NimbleGen high density array dataset analyzed initially in~\cite{MarMarBer08}. The dataset, produced by a two color ChIP-chip experiment, consists of $n=176,343$ observations $(x_i,\tilde y_i)$. The corresponding scatter plot is represented in the upper left plot of Figure~\ref{chipmix}. A parametric mixture of linear regressions with two unknown components was fitted to the data in~\cite{MarMarBer08} under the assumption of normal errors using a standard EM algorithm. The estimates are reported in~\cite[Section 4.4]{Van13} in the homoscedastic and heteroscedastic cases, and the regression lines obtained in the heteroscedastic case are represented by dashed lines in the upper left plot of Figure~\ref{chipmix}. As for the aphids dataset, we used the approach derived in this work to assess the relevance of the latter results. We arbitrarily considered that the component with the smallest slope was precisely estimated, i.e., that it has equation $y = 1.48 + 0.81 x$, and that the distribution of the corresponding error is normal with a standard deviation of 0.56 as obtained in~\cite{MarMarBer08} and reported in~\cite[Section 4.4]{Van13}. The transformation $y_i = \tilde y_i - (1.48 + 0.81 x_i)$ was then performed to obtain a dataset $(x_i,y_i)$ that fits into the setting considered in this work. The transformed dataset is represented in the upper right plot of Figure~\ref{chipmix}.

\begin{figure}[t!]
  \begin{center}
    \includegraphics*[width=1\linewidth]{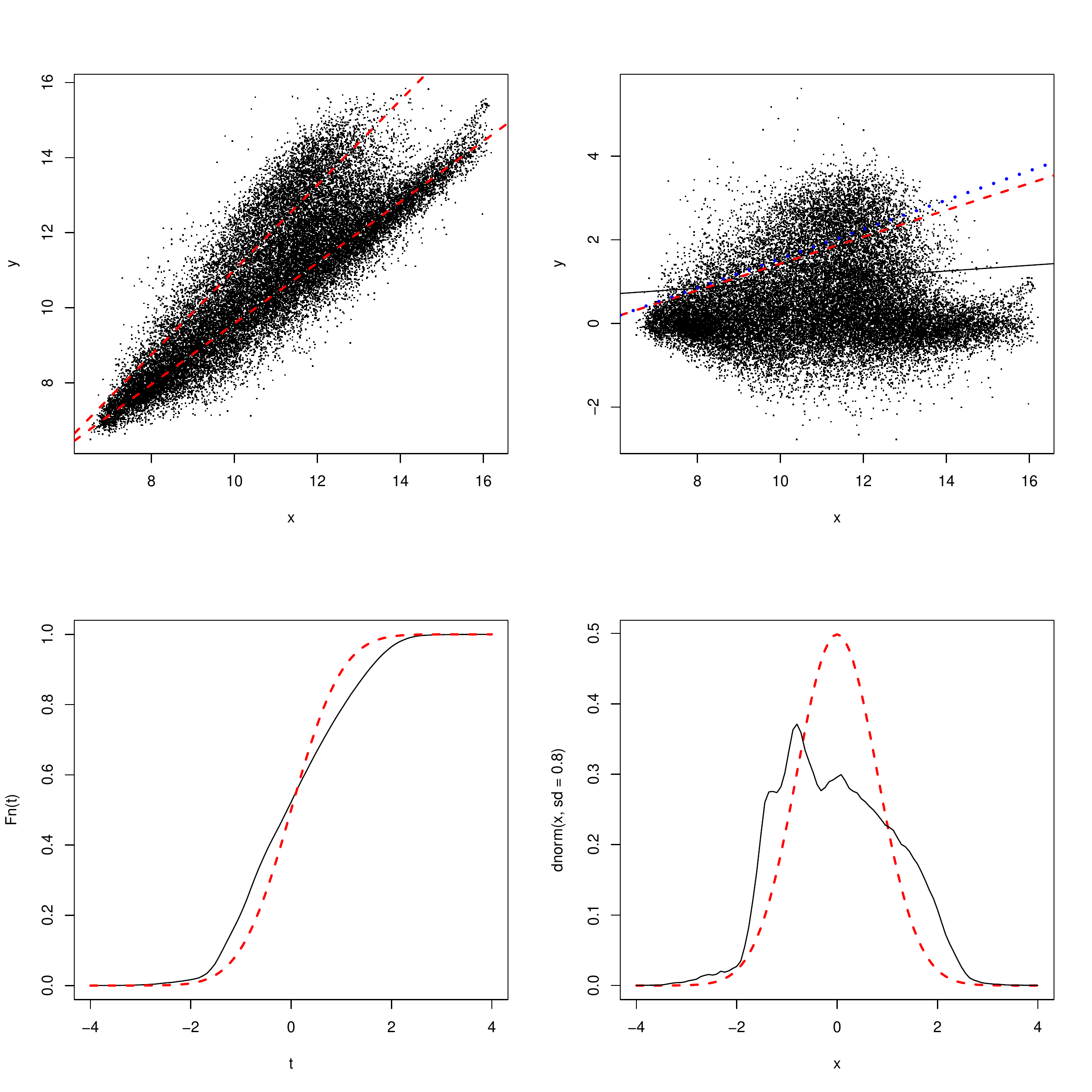}
  \end{center}
  \caption{Upper left plot: the original ChIPmix data and the regression lines obtained in~\cite{MarMarBer08} using a standard EM algorithm with normal errors. Upper right plot: the transformed data; the solid line represents the regression line estimated by the method derived in this work, while the dashed (resp.\ dotted) line is the corresponding regression line estimated in~\cite{MarMarBer08} (resp.\ in~\cite{Van13}). Lower left plot: the estimate $(F_n \vee 0) \wedge 1$ (solid line) of the unknown c.d.f.\ $F$ of $\varepsilon$ and the c.d.f.\ of the parametric estimation of the corresponding error obtained in~\cite{MarMarBer08} (dashed line). Lower right plot: the estimate $f_n \vee 0$ (solid line) of the unknown p.d.f.\ $f$ of $\varepsilon$ and the p.d.f.\ of the parametric estimation of the corresponding error obtained in~\cite{MarMarBer08} (dashed line).}
\label{chipmix}
\end{figure}

The estimate $(0.297, 0.068, 0.536)$ was obtained for the parameter $(\alpha_0,\beta_0,\pi_0)$ with $(0.021, 0.002, 0.009)$ as the vector of estimated standard errors. The estimated regression line is represented by a solid line in the upper right plot of Figure~\ref{chipmix} while the dashed line represents the corresponding (transformed) regression line estimated in~\cite{MarMarBer08} under the assumption of normal errors. Note that the estimate of $\pi_0$ obtained therein is 0.32. The regression line obtained in~\cite[Section 4.4]{Van13} from a subsample of $n=30,000$ observations and under zero-symmetry assumptions for the errors is represented as a dotted line. The estimate $(F_n \vee 0) \wedge 1$ (resp.\ $f_n \vee 0$) of the unknown c.d.f.\ $F$ (resp.\ p.d.f.\ $f$) of $\varepsilon$ is represented in the lower left (resp.\ right) plot of Figure~\ref{chipmix} as a solid line. The computed approximate confidence band of level 0.95 for $F$ is not displayed because it cannot be distinguished from the estimated c.d.f.\ (which could have been expected given the huge sample size). The dashed curve in the lower left (resp.\ right) plot represents the c.d.f.\ (resp.\ p.d.f.) of the parametric estimation of the corresponding error, which is normal with standard deviation equal to $0.8$ as reported in~\cite[Section 4.4]{Van13}. The latter parametric c.d.f.\ lies clearly outside the confidence band. The estimation of $(\alpha_0,\beta_0,\pi_0,f,F)$, implemented in R, took less than 30 seconds on one 2.4 GHz processor while more than two days of computation on a similar processor were necessary in~\cite{Van13} to estimate the same parameters from a subsample of size $n=30,000$. Based on Figure~\ref{chipmix} and given the huge sample size, it seems sensible to reject both the assumptions of normality considered in~\cite{MarMarBer08} and the assumption of symmetry on which the method in~\cite{Van13} is based.

\section{Conclusion and possible extensions of the model}
\label{extension}

The identifiability of the model stated in~\eqref{model} was investigated and estimators of the Euclidean and functional parameters were proposed. The asymptotics of the latter were studied under weak conditions not involving zero-symmetry assumptions for the errors. In addition, a consistent procedure for computing an approximate confidence band for the c.d.f.\ of the error was proposed using a weighted bootstrap.

As mentioned by a referee, the model considered in this work is very specific. It is the constraint that the first component is assumed to be entirely known that enabled us to propose a relatively simple and numerically efficient estimation procedure. It is that same constraint that made it possible to obtain, unlike for EM-type algorithms, asymptotic results allowing to quantify the estimation uncertainty. The latter advantages clearly come at the price of a restricted applicability. As we shall see in the next subsection, it is possible in principle to improve this situation by introducing an unknown scale parameter for the first component. In the second subsection, we briefly discuss another extension of the model adapted to the situation where there is more than one explanatory variable.

\subsection{An additional unknown scale parameter for the first component}

From the illustrations presented in the previous section, we see that the price to pay for no parametric constraints on the second component is a complete specification of the first component. As mentioned in Section~\ref{PN}, from a theoretical perspective, it is possible to improve this situation by introducing an unknown scale parameter for the first component. Using the notation of Sections~\ref{PN} and~\ref{ident}, the extended model that we have in mind can be written as
\begin{equation}
\label{model2}
 Y = \left\{
\begin{array}{lll}
\sigma_0^* \bar \varepsilon^* &\mbox{ if } & Z=0, \\
\alpha_0+\beta_0 X+\varepsilon &\mbox{ if } & Z=1,
\end{array}\right. 
\end{equation} 
where $\bar \varepsilon^*$ is assumed to have variance one and known c.d.f.\ $\bar F$ while $\sigma_0^*$ is unknown. With respect to the model given in~(\ref{model}), this simply amounts to writing $\varepsilon^*$ as $\sigma_0^* \bar \varepsilon^*$ and the c.d.f.\ $F^*$ of $\varepsilon^*$ as $F^* = \bar F(\cdot/\sigma_0^*)$. The Euclidean parameter vector of this extended model is therefore $(\alpha_0,\beta_0,\pi_0,\sigma_0^*)$ and the functional parameter is $F$, the c.d.f.\ of $\varepsilon$.

The model given in~(\ref{model2}) is identifiable provided $\XX$, the set of possible values of $X$, contains four points ${x_1,x_2,x_3,x_4}$ such that the vectors $\{(1,x_i,x^2_i,x_i^3)\}_{1 \leq i \leq 4}$ are linearly independent. This can be verified by using, in addition to~(\ref{M1}) and~(\ref{M2}), the fact that
\begin{equation}
\label{M3}
\E(Y^3|X) = \pi_0\alpha_0(\alpha_0^2+3\sigma_0^2) +3\pi_0\beta_0(\alpha_0^2+\sigma_0^2) X +3\pi_0\alpha_0\beta_0^2X^2+\pi_0\beta_0^3X^3\quad \mbox{a.s.} 
\end{equation}
By proceeding as in Section~\ref{ident}, one can for instance show that 
\begin{equation}
\label{sigma*}
(\sigma_0^*)^2 = \frac{\lambda_{0,3}\lambda_{0,5}-\lambda_{0,7}\lambda_{0,2}}{\lambda_{0,5}-\lambda_{0,2}^2}, 
\end{equation}
where $\lambda_{0,2}$ is the coefficient of $X$ in~(\ref{M1}), $\lambda_{0,3}$ and $\lambda_{0,5}$ are the coefficients of $1$ and $X^2$, respectively, in~(\ref{M2}), and $\lambda_{0,7}$ is the coefficient of $X^2$ in~(\ref{M3}).

From a practical perspective however, using relationship~(\ref{sigma*}) for estimation (or a similar equation resulting from~(\ref{M1}),~(\ref{M2}) and~(\ref{M3})) turned out to be highly unstable. The reason why estimation of $\sigma_0^*$ by the method of moments does not work satisfactorily seems to be due to the fact that $(\sigma_0^*)^2$ is always the difference of two positive quantities. The estimation of each quantity is not precise enough to ensure that their difference is close to $(\sigma_0^*)^2$, and the difference is often negative. As an alternative estimation method, an iterative EM-type algorithm could be used to estimate all the unknown parameters of the extended model. Unfortunately, a weakness of such algorithms is that, up to now, the asymptotics of the resulting estimators are not known.

\subsection{More than one explanatory variable}

Assuming that there are $d$ explanatory random variables $X_1,\dots, X_d$ for some integer $d \geq 1$, and using the notation of Sections~\ref{PN} and~\ref{ident}, an immediate extension of the model stated in~\eqref{model} is
\begin{equation}
 Y = \left\{
\begin{array}{lll}
\varepsilon^* &\mbox{ if } & Z=0, \\
\vec \beta^\top \vec X+\varepsilon &\mbox{ if } & Z=1,
\end{array}\right. 
\end{equation}
where $\vec \beta = (\beta_0,\dots,\beta_d) \in \R^{d+1}$ is the Euclidean parameter of the unknown component and $\vec X = (1, X_1,\dots, X_d)$. Then, with the convention that $X_0 = 1$, we have
$$
\E(Y|X) = \sum_{i=0}^d \pi_0 \beta_i X_i \qquad \mbox{a.s.}
$$
and
$$
\E(Y^2|X) = (1- \pi_0) ( \sigma_0^* )^2 + \pi_0 (\beta_0^2 + \sigma_0^2) + \sum_{0 \leq i < j \leq d} 2 \pi_0 \beta_i \beta_j X_i X_j + \sum_{i=1}^d \pi_0 \beta_i^2 X_i^2 \quad \mbox{a.s.}
$$
Now, let
\begin{equation}
\label{system_extension}
\left\{
\begin{array}{ll}
\varrho_i = \pi_0 \beta_i, & i \in \{0,\dots,d\}, \\
\varsigma = (1- \pi_0) ( \sigma_0^* )^2 + \pi_0 (\beta_0^2 + \sigma_0^2)  & \\
\mu_{\{i,j\}} = 2 \pi_0 \beta_i \beta_j, & \{i,j\} \subset \{0,\dots,d\}, \\
\nu_i = \pi_0 \beta_i^2, & i \in \{1,\dots,d\}.
\end{array}
\right.
\end{equation}
Adapting {\em mutatis mutandis} the approach described in Section~\ref{ident}, we have that $\varrho_i$, $i \in \{0,\dots,d\}$, can be identified provided that the set $\XX \subset \R^d$ of possible values of $(X_1,\dots,X_d)$ is such that the space spaned by $\{(1,\vec x) : \vec x \in \XX\}$ is of dimension $d+1$. A similar (but painful to write) sufficient condition on $\XX$ can be stated ensuring that $\varsigma$, $\mu_{\{i,j\}}$, $\{i,j\} \subset \{0,\dots,d\}$, and $\nu_i$, $i \in \{1,\dots,d\}$, can be identified. Then, it can be verified that the system in~\eqref{system_extension} can be solved provided $\pi_0 \in (0,1]$ and there exists $k \in \{1,\dots,d\}$ such that $\beta_k \neq 0$. In that case, we obtain $\beta_k = \nu_k / \varrho_k$ and $\beta_j = \mu_{\{j,k\}} / 2 \varrho_k$ for any $j \in \{0,\dots,d\}$, $j \neq k$. In other words, a necessary condition to be able to identify $\vec \beta$ is that $\pi_0 \in (0,1]$ and there exists $k \in \{1,\dots,d\}$ such that $\beta_k \neq 0$. 

As far as estimation of $\vec \beta$ and $\pi_0$ is concerned, the system in~\eqref{system_extension} suggests a large number of possible estimators. Many additional estimators could be obtained by generalizing the approach used in the second half of Section~\ref{estimation}.

\section*{Acknowledgments}

The authors would like to thank two anonymous referees for their very insightful and constructive comments which helped to improve the paper.

%%%%%%%%%%%%%%%%%%%%%%%%%%%%%%%%%%%%%%%%%%%%%%%%%%%%%%%%%%%%%%%%%%%%%%%%%%%%
%%%%%%%%%%%%%%%%%%%%%%%%%%%%%%%%%%%%%%%%%%%%%%%%%%%%%%%%%%%%%%%%%%%%%%%%%%%%

\appendix

\section{Proof of Proposition~\ref{euclidean}}
\label{proof_euclidean}

\begin{proof}
Let us prove~(i). From Assumption A1~(i) and~(\ref{Y}), we have that $\E(X^pY^q)$ is finite for all integers $p,q \in \{0,1,2\}$. It follows that all the components of the vector of expectations $\E\{\dot \varphi_{\vec\gamma_0}(X,Y) \} = P \dot \varphi_{\vec\gamma_0}$ are finite. The strong law of large numbers then implies that $\P_n \dot \varphi_{\vec\gamma_0} \as P \dot \varphi_{\vec\gamma_0}$. Using the fact that $\vec \gamma_0$ is a zero of $\vec\gamma \mapsto P \dot\varphi_{\vec\gamma}$, that $\P_n \dot \varphi_{\vec\gamma_0} = \Gamma_n \vec\gamma_0 - \vec \theta_n$, and that $\P_n \dot \varphi_{\vec\gamma_n} = \Gamma_n \vec\gamma_n - \vec \theta_n = 0$, we obtain that $\Gamma_n(\vec\gamma_n-\vec\gamma_0) \as 0$. The strong law of large numbers also implies that $\Gamma_n \as \Gamma_0$. Matrix inversion being continuous with respect to any usual topology on the space of square matrices, Assumption A2 implies that $\Gamma_n^{-1} \as \Gamma_0^{-1}$. The continuous mapping theorem then implies that $\Gamma_n^{-1} \Gamma_n(\vec\gamma_n-\vec\gamma_0) = \vec\gamma_n-\vec\gamma_0 \as 0$. Since $\vec \gamma_0 \in \DD^{\alpha,\beta,\pi}$, the strong consistency of $(\alpha_n,\beta_n,\pi_n)$ is finally again a consequence of the continuous mapping theorem as the function  
\begin{equation}
\label{phi}
\vec \gamma \mapsto \left( g^\alpha, g^\beta, g^\pi \right) (\vec \gamma)  = (\alpha,\beta,\pi)
\end{equation}
from $\R^8$ to $\R^3$ is continuous on $\DD^{\alpha,\beta,\pi}$.

Let us now prove~(ii). Using the fact that $P \dot \varphi_{\vec\gamma_0} =0$ and $\P_n \dot \varphi_{\vec\gamma_n} = 0$, we have 
$$
\P_n \dot \varphi_{\vec\gamma_0} - P \dot \varphi_{\vec\gamma_0} =- (\P_n \dot \varphi_{\vec\gamma_n} - \P_n \dot \varphi_{\vec\gamma_0}) =- \P_n (\dot \varphi_{\vec\gamma_n} - \dot \varphi_{\vec\gamma_0})  = - \Gamma_n (\vec \gamma_n - \vec \gamma_0),
$$
which implies that $\G_n \dot \varphi_{\vec\gamma_0} = -\Gamma_n \sqrt{n}(\vec\gamma_n-\vec\gamma_0)$. From Assumption A1~(ii) and~(\ref{Y}), we have that the covariance matrix of the random vector $\dot \varphi_{\vec\gamma_0}(X,Y)$ is finite. The multivariate central limit theorem then implies that $\G_n \dot \varphi_{\vec\gamma_0}$ converges in distribution to a centered multivariate normal random vector $\G \dot \varphi_{\vec\gamma_0}$ with covariance matrix $P \dot \varphi_{\vec\gamma_0} \dot \varphi_{\vec\gamma_0}^\top$. Since $(\G_n \dot \varphi_{\vec\gamma_0},\Gamma_n) \leadsto (\G \dot \varphi_{\vec\gamma_0},\Gamma_0)$ and under Assumption~A2, we obtain, from the continuous mapping theorem, that  
$$
\sqrt{n}(\vec\gamma_n-\vec\gamma_0) = -\Gamma_n^{-1} \G_n \dot \varphi_{\vec\gamma_0} \leadsto -\Gamma_0^{-1} \G \dot \varphi_{\vec\gamma_0}.
$$
The map defined in~(\ref{phi}) is differentiable at $\vec \gamma_0$ since $\vec \gamma_0 \in \DD^{\alpha,\beta,\pi}$. We can thus apply the delta method with that map to obtain that
$$
\sqrt{n}(\alpha_n-\alpha_0,\beta_n-\beta_0,\pi_n-\pi_0) = - \Psi_{\vec \gamma_0} \Gamma_n^{-1} \G_n \dot \varphi_{\vec\gamma_0} + o_P(1).
$$
Since $\Gamma_n^{-1} \as \Gamma_0^{-1}$ under Assumption A2, we obtain that 
$$
\sqrt{n}(\alpha_n-\alpha_0,\beta_n-\beta_0,\pi_n-\pi_0) = - \Psi_{\vec \gamma_0} \Gamma_0^{-1} \G_n \dot \varphi_{\vec\gamma_0} + o_P(1).
$$
It remains to prove that $\Sigma_n \as \Sigma$. Under Assumption A1~(ii), the strong law of large numbers implies that $\P_n \dot \varphi_{\vec\gamma_0} \dot \varphi_{\vec\gamma_0}^\top \as P \dot \varphi_{\vec\gamma_0} \dot \varphi_{\vec\gamma_0}^\top$. The fact that $\P_n \dot \varphi_{\vec\gamma_n} \dot \varphi_{\vec\gamma_n}^\top = \P_n \dot \varphi_{\vec\gamma_0} \dot \varphi_{\vec\gamma_0}^\top  + \P_n (\dot \varphi_{\vec\gamma_n} \dot \varphi_{\vec\gamma_n}^\top - \dot \varphi_{\vec\gamma_0} \dot \varphi_{\vec\gamma_0}^\top ) \as P \dot \varphi_{\vec\gamma_0} \dot \varphi_{\vec\gamma_0}^\top$ is then a consequence of the fact that $\vec \gamma_n \as \vec \gamma_0$ and the continuous mapping theorem. Similarly, since $\vec \gamma_0 \in \DD^{\alpha,\beta,\gamma}$, we additionally have that $\Psi_{\vec \gamma_n} \as \Psi_{\vec \gamma_0}$. Combined with the fact that, under Assumption A2, $\Gamma_n^{-1} \as \Gamma_0^{-1}$, we obtain that  $\Sigma_n \as \Sigma$ from the continuous mapping theorem.
\end{proof}

%%%%%%%%%%%%%%%%%%%%%%%%%%%%%%%%%%%%%%%%%%%%%%%%%%%%%%%%%%%%%%%%%%%%%%%%%%%%%

\section{Proof of Proposition~\ref{functional}}
\label{proof_functional}

The proof of Proposition~\ref{functional} is based on three lemmas.

\begin{lem}
\label{lemDonsker}
The classes of functions $\FF^J$ and $\FF^K$ are $P$-Donsker. So is the class $\FF^{\alpha,\beta,\pi}$ provided Assumptions A1~(ii) and A2 hold, and $\vec \gamma_0 \in \DD^{\alpha,\beta,\pi}$. 
\end{lem}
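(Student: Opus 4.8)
The plan is to prove each assertion using the theory of Vapnik--Chervonenkis (VC) classes together with bracketing bounds for finite-dimensional Lipschitz families, as developed in~\cite{Kos08,van98,vanWel96}. First I would treat $\FF^J$. Since $\1(y - \alpha - \beta x \leq t)$ depends on $(t,\alpha,\beta)$ only through $\beta$ and $c = t + \alpha$, the class coincides with the indicators of the sets $\{(x,y) : y - \beta x \leq c\}$, $(\beta,c) \in \R^2$. These sets are defined by the sign of functions $y - \beta x - c$ lying in the three-dimensional vector space spanned by $1$, $x$ and $y$, so by Lemma~2.6.15 of~\cite{vanWel96} (equivalently, they are a subfamily of the half-planes of $\R^2$) they form a VC class of sets. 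The corresponding indicators are then a VC-subgraph class with constant envelope $1$; being uniformly bounded and VC, $\FF^J$ satisfies the uniform-entropy condition and is $P$-Donsker for every $P$ by Theorems~2.6.7 and~2.5.2 of~\cite{vanWel96}. No moment assumptions enter here.

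For $\FF^K$ the same reparametrization gives $F^*(c + \beta x)$, the composition of the fixed, nondecreasing, $[0,1]$-valued function $F^*$ with the members of the two-dimensional vector space $\{c + \beta x : (\beta,c) \in \R^2\}$. The latter family is VC-subgraph (Lemma~2.6.15 of~\cite{vanWel96}), and composition with a fixed monotone map preserves the VC-subgraph property (Lemma~2.6.18 of~\cite{vanWel96}). As $F^*$ is bounded, $\FF^K$ is again a uniformly bounded VC-subgraph class, hence $P$-Donsker for every $P$. I would stress that this argument uses only the monotonicity and boundedness of $F^*$, so Assumption~A3 is \emph{not} needed for this class.

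The class $\FF^{\alpha,\beta,\pi}$ is the hard part, and here I would exploit that it is a finite-dimensional family indexed by $\vec \gamma$ in the bounded set $\DD^{\alpha,\beta,\pi}_{\vec \gamma_0}$, treating it as a Lipschitz-in-parameter (parametric) class and invoking Theorem~2.7.11 of~\cite{vanWel96} (equivalently Example~19.7 of~\cite{van98}). Working componentwise (an $\R^3$-valued map is $P$-Donsker as soon as each coordinate class is), write $G(\vec \gamma;x,y) = -\Psi_{\vec \gamma} \Gamma_0^{-1}\dot \varphi_{\vec\gamma}(x,y)$. Since $\vec \gamma_0 \in \DD^{\alpha,\beta,\pi}$ and $g^\alpha,g^\beta,g^\pi$ are differentiable there, $\Psi_{\vec \gamma}$ and its $\vec \gamma$-derivatives are continuous; choosing $\DD^{\alpha,\beta,\pi}_{\vec \gamma_0}$ to be, say, a closed ball around $\vec \gamma_0$ with closure inside $\DD^{\alpha,\beta,\pi}$, they are uniformly bounded there and the segment joining any two parameter points stays in the domain. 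A mean-value bound then gives $\| G(\vec \gamma_1;x,y) - G(\vec \gamma_2;x,y) \| \leq m(x,y)\,\| \vec \gamma_1 - \vec \gamma_2 \|$, where $m(x,y)$ dominates $\sup_{\vec \gamma} \| \nabla_{\vec \gamma} G(\vec \gamma;x,y) \|$. Because the coefficients of $\dot \varphi_{\vec\gamma}$ are affine in $\vec \gamma$ while $\Psi_{\vec \gamma}$, its derivatives and $\Gamma_0^{-1}$ are bounded on $\DD^{\alpha,\beta,\pi}_{\vec \gamma_0}$, the function $m$ is a polynomial in $(|x|,|y|)$ whose leading terms are of order $x^2 y^2$ and $x^4$.

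It then remains to verify $P m^2 < \infty$, and likewise that the envelope $\sup_{\vec \gamma} \| G(\vec \gamma; \cdot) \|$ is square-integrable; both reduce to the finiteness of moments such as $\E(X^8)$ and $\E(X^4 Y^4)$, which follow from Assumption~A1~(ii) together with the mixture structure~\eqref{Y} --- this is exactly the computation already carried out in the proof of Proposition~\ref{euclidean}~(ii) to show that $\dot \varphi_{\vec\gamma_0}$ has a finite covariance matrix. With a square-integrable Lipschitz function and a bounded index set in $\R^8$, Theorem~2.7.11 of~\cite{vanWel96} yields bracketing numbers $N_{[\,]}(\epsilon \| m \|_{P,2}, \FF^{\alpha,\beta,\pi}, L_2(P)) \lesssim \epsilon^{-8}$, hence a finite bracketing integral and the $P$-Donsker property of each coordinate, and therefore of $\FF^{\alpha,\beta,\pi}$. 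The main obstacle is precisely this last class: obtaining the uniform control of $\Psi_{\vec \gamma}$ and its derivatives over the parameter set (which is why $\vec \gamma_0 \in \DD^{\alpha,\beta,\pi}$ and a bounded $\DD^{\alpha,\beta,\pi}_{\vec \gamma_0}$ are assumed) and checking the square-integrability of the Lipschitz function, where Assumption~A1~(ii) is exactly what is required.
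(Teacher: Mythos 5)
Your proof is correct, and for the first two classes it is essentially the paper's argument: both reduce $\FF^J$ and $\FF^K$ to uniformly bounded VC-subgraph classes (the paper identifies $\FF^J$ with indicators of half-planes, citing \cite[Exercise~14, p.~152]{vanWel96} and Lemma~9.8 of \cite{Kos08}, and handles $\FF^K$ by noting it is a uniformly bounded collection of monotone functions; your reparametrization through $c=t+\alpha$ combined with Lemmas~2.6.15 and~2.6.18 of \cite{vanWel96} is an equivalent justification, and your remark that Assumption~A3 plays no role here matches the statement). Where you genuinely diverge is $\FF^{\alpha,\beta,\pi}$: you treat it as a parametric Lipschitz-in-$\vec\gamma$ class and invoke the bracketing bound of Theorem~2.7.11 in \cite{vanWel96}, whereas the paper stays entirely within the VC framework, observing that each component of $-\Psi_{\vec \gamma} \Gamma_0^{-1}\dot \varphi_{\vec\gamma}$ is a linear combination, with coefficients depending only on $\vec\gamma$, of the fixed functions $1,x,x^2,x^3,x^4,y,xy,y^2,x^2y^2$, so that the component classes sit inside a finite-dimensional vector space of measurable functions and are therefore VC-subgraph, with a square-integrable envelope furnished by the boundedness of $\DD^{\alpha,\beta,\pi}_{\vec \gamma_0}$ together with the moments guaranteed by A1~(ii). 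The paper's route is shorter and requires neither differentiability of $\vec\gamma \mapsto \Psi_{\vec\gamma}$ nor convexity of the index set, both of which your mean-value step needs (hence your replacement of the bounded set by a closed ball with closure inside $\DD^{\alpha,\beta,\pi}$ --- a legitimate move, since that set is at the authors' disposal, and your version in fact makes explicit a point the paper leaves implicit: a bounded index set yields a bounded envelope only if $\Psi_{\vec\gamma}$ remains bounded on it, i.e.\ the closure should stay inside $\DD^{\alpha,\beta,\pi}$). The price of your route is the additional moment verification $Pm^2<\infty$, resting on $\E(X^8)$ and $\E(X^4Y^4)$, which, as you note, is the same computation as in the proof of Proposition~\ref{euclidean}~(ii); the compensation is that your bracketing argument is more robust, as it would apply verbatim to parameter-indexed classes that are not linear in a fixed finite set of spanning functions.
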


\begin{proof}
The class $\FF^J$ is the  class of indicator functions $(x,y) \mapsto \1\{(x,y) \in C_{t,\vec \eta}\}$, where $C_{t,\vec \eta}= \{(x,y) \in \R^2 : y \leq t + \alpha + \beta x\}$. The collection $\CC =\{C_{t,\vec \eta} : t \in \R, \vec \eta = (\alpha,\beta) \in \R^2 \}$ is the set of all half-spaces in $\R^2$. From~\cite[Exercise~14, p 152]{vanWel96}, it is a Vapnik-\v Chervonenkis ($VC$) class with $VC$ dimension 4. By Lemma~9.8 in~\cite{Kos08}, $\FF^J$ has the same $VC$ dimension as $\CC$. Being a set of indicator functions, $\FF^J$ clearly possesses a square integrable envelope function and is therefore $P$-Donsker.

The class $\FF^K$ is a collection of monotone functions, and it is easy to verify that it has $VC$ dimension 1. Furthermore, it clearly possesses a square integrable envelope function because the elements of $\FF^K$ are bounded. It is therefore $P$-Donsker. 

The components classes of class $\FF^{\alpha,\beta,\pi}$ are well defined since Assumption A2 holds and $\vec \gamma_0 \in \DD^{\alpha,\beta,\pi}$. It is easy to see that they are linear combinations of a finite collection of functions that, from Assumption A1~(ii), is $P$-Donsker. The components classes of $\FF^{\alpha,\beta,\pi}$ are therefore $VC$ classes. They possess square integrable envelope functions because $\DD^{\alpha,\beta,\pi}_{\vec \gamma_0}$ is a bounded set. The class $\FF^{\alpha,\beta,\pi}$ is therefore $P$-Donsker.
\end{proof}

%%%%%%%%%%%%%%%%%%%%%%%%%%%%%%%%%%%%%%%%%%%%%%%%%%%%%%%%%%%%%%%%%%%%%%%%%%%%%

\begin{lem} 
\label{lemQuad}
Under Assumptions A1~(i) and A3~(i), 
$$
\sup_{t\in\R} P(\psi_{t,\vec \eta}^J-\psi_{t, \vec \eta_0}^J)^2 \to 0 \qquad \mbox{and} \qquad 
\sup_{t\in\R} P(\psi_{t,\vec \eta}^K-\psi_{t, \vec \eta_0}^K)^2 \to 0 \qquad \mbox{as} \qquad \vec \eta \to \vec \eta_0.
$$
\end{lem}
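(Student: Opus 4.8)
The plan is to bound both quadratic quantities by expressions that are independent of $t$ and that vanish as $\vec\eta=(\alpha,\beta)\to\vec\eta_0=(\alpha_0,\beta_0)$; uniformity in $t$ then comes for free. The common mechanism is that, although the map $\vec\eta\mapsto\psi^J_{t,\vec\eta}$ involves indicators (which are not Lipschitz), the boundedness of the relevant densities granted by Assumption~A3~(i) turns each difference into something controlled linearly by $|(\alpha-\alpha_0)+(\beta-\beta_0)x|$, after which the moment conditions of Assumption~A1~(i) permit integration in $x$.

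For the $K$ term I would first note that, since $f^*$ is bounded, $F^*$ is Lipschitz with constant $\|f^*\|_\infty$, whence $|\psi^K_{t,\vec\eta}(x,y)-\psi^K_{t,\vec\eta_0}(x,y)|=|F^*(t+\alpha+\beta x)-F^*(t+\alpha_0+\beta_0 x)|\le\|f^*\|_\infty\,|(\alpha-\alpha_0)+(\beta-\beta_0)x|$. Squaring and using $(a+b)^2\le 2a^2+2b^2$ then gives a bound $2\|f^*\|_\infty^2\{(\alpha-\alpha_0)^2+(\beta-\beta_0)^2x^2\}$ that is free of both $y$ and $t$, so that $P(\psi^K_{t,\vec\eta}-\psi^K_{t,\vec\eta_0})^2\le 2\|f^*\|_\infty^2\{(\alpha-\alpha_0)^2+(\beta-\beta_0)^2\E(X^2)\}$. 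Since $\E(X^2)<\infty$ by Assumption~A1~(i) and the right-hand side does not depend on $t$, taking the supremum over $t$ and letting $\vec\eta\to\vec\eta_0$ yields the second claim.

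The $J$ term is the delicate one, and I would handle it as follows. Because they are indicators, $(\psi^J_{t,\vec\eta}-\psi^J_{t,\vec\eta_0})^2$ equals the indicator of the symmetric difference of the two half-planes, i.e.\ the indicator that $Y$ lies between $t+\alpha_0+\beta_0 X$ and $t+\alpha+\beta X$. Conditioning on $X=x$ and integrating therefore gives $P(\psi^J_{t,\vec\eta}-\psi^J_{t,\vec\eta_0})^2=\int_\R|F_{Y|X}(t+\alpha+\beta x|x)-F_{Y|X}(t+\alpha_0+\beta_0 x|x)|\,\dd F_X(x)$. By~\eqref{fYX} and Assumption~A3~(i), the conditional density $f_{Y|X}(\cdot|x)$ is bounded by $M:=\|f^*\|_\infty+\|f\|_\infty$ uniformly in $x$, so $F_{Y|X}(\cdot|x)$ is $M$-Lipschitz and the integrand is at most $M\,|(\alpha-\alpha_0)+(\beta-\beta_0)x|$. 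Integrating yields the $t$-free bound $P(\psi^J_{t,\vec\eta}-\psi^J_{t,\vec\eta_0})^2\le M\{|\alpha-\alpha_0|+|\beta-\beta_0|\,\E|X|\}$, and since $\E|X|<\infty$ the first claim follows exactly as before.

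The only real subtlety, and the step I expect to be the main obstacle to state cleanly, is the $J$ part: one cannot differentiate or Lipschitz-bound the indicators directly, and the required smoothing must be obtained by taking the expectation, exploiting that the conditional law of $Y$ given $X$ has a bounded density. Everything else---the Lipschitz bound for $F^*$, the elementary inequality $(a+b)^2\le 2a^2+2b^2$, and the passage to the supremum over $t$---is routine once the $t$-independence of the bounds is observed.
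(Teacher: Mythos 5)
Your proof is correct and follows essentially the same route as the paper: both arguments reduce the $J$ term to $\int_\R |F_{Y|X}(t+\alpha+\beta x|x)-F_{Y|X}(t+\alpha_0+\beta_0 x|x)|\,\dd F_X(x)$ and exploit the boundedness of $f^*$ and $f$ (Assumption A3~(i)) to obtain a $t$-free bound linear in $|\alpha-\alpha_0|$ and $|\beta-\beta_0|$, integrable thanks to A1~(i). The only cosmetic difference is in the $K$ term, where the paper first bounds the square by the absolute difference (since $F^*$ takes values in $[0,1]$) and so needs only $\E|X|<\infty$, while you square the Lipschitz bound and invoke $\E(X^2)<\infty$ via $(a+b)^2\le 2a^2+2b^2$ --- both moments are available under A1~(i), so either variant is fine.
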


\begin{proof}
For class $\FF^J$, for any $t\in\R$, we have
{\small \begin{align*}
P&(\psi^J_{t,\vec \eta}-\psi^J_{t,\vec \eta_0})^2 = | P(\psi^J_{t,\vec \eta}+\psi^J_{t,\vec \eta_0}-2\psi^J_{t,\vec \eta}\psi^J_{t,\vec \eta_0}) | \\
=&  P\{(\psi^J_{t,\vec \eta}-\psi^J_{t,\vec \eta_0}) \1(\alpha_0+\beta_0 x<\alpha+\beta x)\} + P\{(\psi^J_{t,\vec \eta_0}-\psi^J_{t,\vec \eta}) \1(\alpha_0+\beta_0 x>\alpha+\beta x)\} \\
=&  \int_{\R}\left\{F_{Y|X}(t+\alpha+\beta x|x)-F_{Y|X}(t+\alpha_0+\beta_0 x|x)\right\} \1(\alpha_0+\beta_0 x<\alpha+\beta x) \dd F_X(x) \\
&+  \int_{\R}\left\{F_{Y|X}(t+\alpha_0+\beta_0 x|x)-F_{Y|X}(t+\alpha+\beta x|x)\right\} \1(\alpha_0+\beta_0 x>\alpha+\beta x) \dd F_X(x) \\
\leq& \int_{\R}\left | F_{Y|X}(t+\alpha_0+\beta_0 x|x)-F_{Y|X}(t+\alpha+\beta x|x)\right | \dd F_X(x),
\end{align*}}
where $F_{Y|X}$ is defined in~(\ref{FYX}). Since $f_{Y|X}(\cdot|x)$ defined in~(\ref{fYX}) exists for all $x\in \XX$, the mean value theorem enables us to write, for any $t \in \R$ and $x\in \XX$, 
\begin{multline*}
F_{Y|X}(t + \alpha + \beta x | x)-F_{Y|X}(t + \alpha_0 + \beta_0 x | x) = f_{Y|X}(t + \tilde \alpha_{x,t} + \tilde \beta_{x,t} x | x) \\ \times \left\{ (\alpha - \alpha_0) + x (\beta - \beta_0) \right\},
\end{multline*}
where $\tilde \alpha_{x,t} + \tilde \beta_{x,t} x$ is between $\alpha + \beta x$ and $\alpha_0 + \beta_0 x$. It follows that
\begin{align*}
\sup_{t\in\R} P(\psi_{t,\vec \eta}^J&-\psi_{t, \vec \eta_0}^J)^2 \\ 
&\leq \sup_{t\in\R}\int_{\R} f_{Y|X}(t + \tilde \alpha_{x,t} + \tilde \beta_{x,t} x | x) \left| (\alpha - \alpha_0) + x (\beta - \beta_0) \right| \dd F_X(x) \\ 
&\leq \left\{ \sup_{t\in\R} f^*(t) + \sup_{t\in\R} f(t) \right\}\left\{ | \alpha - \alpha_0 | + \E(|X|) |\beta - \beta_0| \right\}.
\end{align*}
Under Assumption A3~(i), the supremum on the right of the previous display is finite and, under Assumption A1~(i), so is $\E(|X|)$. We therefore obtain the desired result.

For class $\FF^K$, we have 
\begin{align*}
\sup_{t\in\R} P(\psi_{t,\vec \eta}^K-\psi_{t, \vec \eta_0}^K)^2 &=  \int_\R \{F^*(t+ \alpha+\beta x) -F^*(t+ \alpha_0+\beta_0 x) \}^2 \dd F_X(x) \\
&\leq  \int_\R | F^*(t+ \alpha+\beta x) -F^*(t+ \alpha_0+\beta_0 x) | \dd F_X(x),
\end{align*}
from the convexity of $x \mapsto x^2$ on $[0,1]$. Proceeding as previously, by the mean value theorem, we obtain that
$$
\sup_{t\in\R} P(\psi_{t,\vec \eta}^K-\psi_{t, \vec \eta_0}^K)^2 \leq \left\{ \sup_{t\in\R} f^*(t) \right\}\left\{ | \alpha - \alpha_0 | + \E(|X|) |\beta - \beta_0| \right\}.
$$
Under Assumptions A1~(i) and A3~(i), the right-hand side of the previous inequality tends to zero as $\vec \eta \to \vec \eta_0$.
\end{proof}

%%%%%%%%%%%%%%%%%%%%%%%%%%%%%%%%%%%%%%%%%%%%%%%%%%%%%%%%%%%%%%%%%%%%%%%%%%%%%

\begin{lem}
\label{lemJK}
Under Assumptions A1~(ii), A2 and A3~(ii), for any $t \in \R$,
\begin{multline*}
\sqrt{n} \{J_n(\vec \eta_n,t) - J(\vec \eta_0,t)\} = \sqrt{n} \left( \P_n \psi_{t,\vec \eta_n}^J - P \psi_{t,\vec \eta_0}^J \right) \\ = \G_n \left( \psi_{t,\vec \eta_0}^J + \left[ (1-\pi_0) \E \{ f^*(t+\alpha_0 + \beta_0 X) \} + \pi_0 f(t) \right] \psi_{\vec \gamma_0}^\alpha \right. \\ \left. + \left[ (1 - \pi_0) \E \{ X f^*(t+\alpha_0 + \beta_0 X) \} + \pi_0 f(t) \E (X) \right] \psi_{\vec \gamma_0}^\beta \right) + R_{n,t}^J, 
\end{multline*}
and
\begin{multline*}
\sqrt{n} \{K_n(\vec \eta_n,t) - K(\vec \eta_0,t)\} = \sqrt{n} \left( \P_n \psi_{t,\vec \eta_n}^K - P \psi_{t,\vec \eta_0}^K \right) \\ = \G_n \left( \psi_{t,\vec \eta_0}^K + \E\{f^*(t+\alpha_0 + \beta_0 X)\} \psi_{\vec \gamma_0}^\alpha  + \E\{X f^*(t+\alpha_0 + \beta_0 X)\} \psi_{\vec \gamma_0}^\beta \right) + R_{n,t}^K,
\end{multline*}
where $\sup_{t \in \R} |R_{n,t}^J| \p 0$ and $\sup_{t \in \R} |R_{n,t}^K| \p 0$.
\end{lem}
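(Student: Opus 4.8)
The plan is to treat $J_n(\cdot,\vec\eta_n)$ and $K_n(\cdot,\vec\eta_n)$ as plug-in empirical functionals and to separate the two sources of randomness: the fluctuation of the empirical measure and the estimation of $\vec\eta_0$. I would carry out the argument in full for $J_n$, the one for $K_n$ being identical and in fact simpler because $F^*$ is a fixed smooth c.d.f. First I would write the exact decomposition
\[
\sqrt{n}\left(\P_n\psi^J_{t,\vec\eta_n}-P\psi^J_{t,\vec\eta_0}\right) = \G_n\psi^J_{t,\vec\eta_n} + \sqrt{n}\left(P\psi^J_{t,\vec\eta_n}-P\psi^J_{t,\vec\eta_0}\right),
\]
and note that, since $P\psi^J_{t,\vec\eta}=\Pr(Y\le t+\alpha+\beta X)=J(t,\vec\eta)$ by~\eqref{J}, the second term is the deterministic drift $\sqrt{n}\{J(t,\vec\eta_n)-J(t,\vec\eta_0)\}$.

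For the stochastic term, the goal is to show $\sup_{t\in\R}|\G_n\psi^J_{t,\vec\eta_n}-\G_n\psi^J_{t,\vec\eta_0}|\p 0$. Here I would invoke the asymptotic equicontinuity inherent in the $P$-Donsker property of $\FF^J$ (Lemma~\ref{lemDonsker}): for any $\delta_n\downarrow0$, the supremum of $|\G_n(f-g)|$ over $f,g\in\FF^J$ with $P(f-g)^2<\delta_n^2$ is $o_P(1)$. Combining Lemma~\ref{lemQuad} with the consistency $\vec\eta_n\as\vec\eta_0$ from Proposition~\ref{euclidean}(i) shows that $\sup_t P(\psi^J_{t,\vec\eta_n}-\psi^J_{t,\vec\eta_0})^2\p0$; this uniform $L^2(P)$ control is precisely what reduces the supremum over $t$ to a single modulus-of-continuity statement for the Donsker class and yields the claim.

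For the drift, I would Taylor-expand $J(t,\cdot)$ about $\vec\eta_0$. Differentiating $J(t,\vec\eta)=\int_\R F_{Y|X}(t+\alpha+\beta x\mid x)\dd F_X(x)$ under the integral sign and using~\eqref{FYX}--\eqref{fYX}, the gradient at $\vec\eta_0$ is $\partial_\alpha J(t,\vec\eta_0)=(1-\pi_0)\E\{f^*(t+\alpha_0+\beta_0 X)\}+\pi_0 f(t)$ and $\partial_\beta J(t,\vec\eta_0)=(1-\pi_0)\E\{Xf^*(t+\alpha_0+\beta_0 X)\}+\pi_0 f(t)\E(X)$, exactly the coefficients of $\psi^\alpha_{\vec\gamma_0}$ and $\psi^\beta_{\vec\gamma_0}$ in the statement. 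The Hessian of $J(t,\cdot)$ involves $(f^*)'$, $f'$ and the moments $\E(X)$, $\E(X^2)$; under Assumptions~A3~(ii) and A1~(ii) it is bounded uniformly in $t$ and in $\vec\eta$ near $\vec\eta_0$, so the second-order remainder is $O(|\vec\eta_n-\vec\eta_0|^2)$ uniformly in $t$, and since $\sqrt{n}(\vec\eta_n-\vec\eta_0)=O_P(1)$ by Proposition~\ref{euclidean}(ii), $\sqrt{n}$ times this remainder is $o_P(1)$ uniformly. Finally I would substitute $\sqrt{n}(\alpha_n-\alpha_0)=\G_n\psi^\alpha_{\vec\gamma_0}+o_P(1)$ and $\sqrt{n}(\beta_n-\beta_0)=\G_n\psi^\beta_{\vec\gamma_0}+o_P(1)$ from Proposition~\ref{euclidean}(ii); the gradient coefficients being deterministic and, by A3~(i) together with A1~(i), bounded uniformly in $t$, linearity of $\G_n$ collects everything into the single process $\G_n(\psi^J_{t,\vec\eta_0}+[\cdots]\psi^\alpha_{\vec\gamma_0}+[\cdots]\psi^\beta_{\vec\gamma_0})$ with the errors absorbed into $R^J_{n,t}$, $\sup_t|R^J_{n,t}|\p0$. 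The same scheme applied to $K(t,\vec\eta)=\int_\R F^*(t+\alpha+\beta x)\dd F_X(x)$ via~\eqref{K} produces the gradient $\E\{f^*(t+\alpha_0+\beta_0X)\}$, $\E\{Xf^*(t+\alpha_0+\beta_0X)\}$ and hence the stated expansion for $K_n$.

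The main obstacle, and the reason the two preparatory lemmas are needed, is securing the uniformity in $t$ of both remainder terms. For the stochastic part this is the passage from pointwise stochastic equicontinuity to a uniform-in-$t$ statement, which hinges on the \emph{uniform} $L^2(P)$ bound of Lemma~\ref{lemQuad} rather than a pointwise one; for the drift it is the uniform-in-$t$ control of the Taylor remainder, which is exactly what the boundedness of $(f^*)'$ and $f'$ in Assumption~A3~(ii) provides.
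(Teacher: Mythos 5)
Your proposal is correct and follows essentially the same route as the paper's proof: the same decomposition $\sqrt{n}(\P_n \psi^J_{t,\vec\eta_n} - P\psi^J_{t,\vec\eta_0}) = \G_n(\psi^J_{t,\vec\eta_n}-\psi^J_{t,\vec\eta_0}) + \G_n \psi^J_{t,\vec\eta_0} + \sqrt{n}P(\psi^J_{t,\vec\eta_n}-\psi^J_{t,\vec\eta_0})$, with the stochastic term handled by combining the Donsker property of $\FF^J$ (Lemma~\ref{lemDonsker}), the uniform $L_2(P)$ bound of Lemma~\ref{lemQuad} and the consistency of $\vec\eta_n$ (your equicontinuity argument is exactly what the paper delegates to Theorem~2.1 of van der Vaart and Wellner, 2007), and the drift handled by a second-order expansion in $\vec\eta$ with remainder controlled uniformly in $t$ via Assumption A3~(ii) before substituting the linearization of $\sqrt{n}(\vec\eta_n-\vec\eta_0)$ from Proposition~\ref{euclidean}~(ii). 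The only cosmetic difference is that the paper applies the second-order mean value theorem to $F_{Y|X}$ inside the integral against $F_X$ rather than differentiating $J(t,\cdot)$ under the integral sign, which produces the same gradient coefficients and the same $O_P(1)\cdot o_P(1)$ bound on $\sqrt{n}$ times the quadratic remainder.
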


\begin{proof}
We only prove the first statement as the proof of the second statement is similar. For any $t \in \R$, we have
$$
\sqrt{n} \left( \P_n \psi_{t,\vec \eta_n}^J - P \psi_{t,\vec \eta_0}^J \right) = \G_n \left( \psi_{t,\vec \eta_n}^J - \psi_{t,\vec \eta_0}^J \right) + \G_n \psi_{t,\vec \eta_0}^J + \sqrt{n}P\left( \psi_{t,\vec \eta_n}^J - \psi_{t,\vec \eta_0}^J \right).
$$
Using the fact that $\vec \eta_n \as \vec \eta_0$, Lemma~\ref{lemDonsker}, and Lemma~\ref{lemQuad}, we can apply Theorem 2.1 in~\cite{vanWel07} to obtain that 
$$
\sup_{t \in \R} \left| \G_n \left( \psi_{t,\vec \eta_n}^J - \psi_{t,\vec \eta_0}^J \right) \right| \p 0.
$$
Furthermore, for any $t \in \R$, we have 
\begin{multline*}
\sqrt{n}P\left( \psi_{t,\vec \eta_n}^J - \psi_{t,\vec \eta_0}^J \right) \\ = \sqrt{n} \int_\R \left\{ F_{Y|X}(t+\alpha_n + \beta_n x | x) -  F_{Y|X}(t+\alpha_0 + \beta_0 x | x) \right\} \dd F_X(x),
\end{multline*}
where $F_{Y|X}$ is defined in~(\ref{FYX}). Since $f_{Y|X}'(\cdot|x)$, the derivative of $f_{Y|X}(\cdot|x)$,  exists for all $x \in \XX$ from Assumption A3~(ii) and~(\ref{fYX}), we can apply the second-order mean value theorem to obtain
\begin{multline*}
\sqrt{n}P\left( \psi_{t,\vec \eta_n}^J - \psi_{t,\vec \eta_0}^J \right) \\ =  \sqrt{n} \int_\R f_{Y|X}(t+\alpha_0 + \beta_0 x | x) \{(\alpha_n-\alpha_0) +(\beta_n - \beta_0) x \} \dd F_X(x) + R^J_{n,t},
\end{multline*}
where 
$$
R^J_{n,t} = \frac{\sqrt{n}}{2} \int_\R f_{Y|X}'(t+ \tilde \alpha_{x,t,n} + \tilde \beta_{x,t,n} x | x)  \{ (\alpha_n-\alpha_0) + (\beta_n - \beta_0) x \}^2 \dd F_X(x),
$$
and $\tilde \alpha_{x,t,n} + \tilde \beta_{x,t,n} x$ is between $\alpha_0 + \beta_0 x$ and $\alpha_n + \beta_n x$. Now, from~(\ref{fYX}),
\begin{multline*}
\sup_{t \in \R} |R^J_{n,t}| \leq \sqrt{n} \left\{ \sup_{t \in \R} (f^*)'(t) + \sup_{t \in \R} f'(t) \right\} \\ \times \left\{ (\alpha_n-\alpha_0)^2 + (\beta_n - \beta_0)^2 \E(X^2) + 2 |\alpha_n - \alpha_0| |\beta_n - \beta_0| \E(|X|) \right\}. 
\end{multline*}
The supremum on the right of the previous inequality is finite from Assumption A3~(ii), and so are $\E(|X|)$ and $\E(X^2)$ from Assumption A1~(ii). Furthermore, under Assumptions A1~(ii) and A2, we know from Proposition~\ref{euclidean} that $\sqrt{n} (\alpha_n-\alpha_0, \beta_n-\beta_0)$ converges in distribution while $(\alpha_n, \beta_n) \as (\alpha_0,\beta_0)$. It follows that $\sup_{t \in \R} |R^J_{n,t}|\p 0$. Hence, we obtain that
\begin{multline*}
\sqrt{n}P\left( \psi_{t,\vec \eta_n}^J - \psi_{t,\vec \eta_0}^J \right) = \E\{f_{Y|X}(t+\alpha_0 + \beta_0 X | X)\} \sqrt{n} (\alpha_n - \alpha_0) \\ + \E\{X f_{Y|X}(t+\alpha_0 + \beta_0 X | X)\} \sqrt{n} (\beta_n - \beta_0) + R^J_{n,t}, \qquad t \in \R.
\end{multline*}
The desired result finally follows from the expression of $f_{Y|X}$ given in~(\ref{fYX}) and Proposition~\ref{euclidean}.
\end{proof}

\begin{proof}[\bf Proof of Proposition~\ref{functional}]
Under Assumptions  A1~(ii) and A2, and since $\vec \gamma_0 \in \DD^{\alpha,\beta,\pi}$, we know, from Lemma~\ref{lemDonsker}, that the classes $\FF^J$, $\FF^K$ and $\FF^{\alpha,\beta,\pi}$ are $P$-Donsker. It follows that
$$
 \left( t \mapsto \G_n \psi_{t,\vec \eta_0}^J, t \mapsto \G_n \psi_{t,\vec \eta_0}^K, \G_n \psi_{\vec \gamma_0}^\alpha,  \G_n \psi_{\vec \gamma_0}^\beta, \G_n \psi_{\vec \gamma_0}^\pi  \right)
$$
converges weakly in $\{\ell^\infty(\overline{\R})\}^2 \times \R^3$. Assumption A3~(i) then implies that the functions $t \mapsto \E\{f_{Y|X}(t+\alpha_0 + \beta_0 X | X)\}$, $t \mapsto \E\{X f_{Y|X}(t+\alpha_0 + \beta_0 X | X)\}$, $t \mapsto \E\{f^*(t+\alpha_0 + \beta_0 X)\}$, and $t \mapsto \E\{X f^*(t+\alpha_0 + \beta_0 X)\}$ are bounded. By the continuous mapping theorem, we thus obtain that
{\small$$
\left( 
\begin{array}{c} t \mapsto \G_n \left( \psi_{t,\vec \eta_0}^J + \E\{f_{Y|X}(t+\alpha_0 + \beta_0 X | X)\} \psi_{\vec \gamma_0}^\alpha + \E\{X f_{Y|X}(t+\alpha_0 + \beta_0 X | X) \} \psi_{\vec \gamma_0}^\beta \right) \\ 
t \mapsto \G_n \left( \psi_{t,\vec \eta_0}^K + \E\{f^*(t+\alpha_0 + \beta_0 X)\} \psi_{\vec \gamma_0}^\alpha + \E\{X f^*(t+\alpha_0 + \beta_0 X)\} \psi_{\vec \gamma_0}^\beta \right) \\ 
\G_n \psi_{\vec \gamma_0}^\pi 
\end{array} \right)
$$}
converges weakly in $\{ \ell^\infty(\overline{\R}) \}^2 \times \R$. It follows from Proposition~\ref{euclidean} and Lemma~\ref{lemJK} that
$$
\sqrt{n} \left( J_n(\vec \eta_n, \cdot) - J(\vec \eta_0, \cdot), K_n(\vec \eta_n, \cdot) - K(\vec \eta_0, \cdot),\pi_n - \pi_0 \right),
$$
converges weakly in $\{ \ell^\infty(\overline{\R}) \}^2 \times \R$. The desired result is finally a consequence of~(\ref{Fn}) and the functional delta method applied with the map $(J,K,\pi) \mapsto \left\{ J  - (1-\pi) K \right\} / \pi$.
\end{proof}

%%%%%%%%%%%%%%%%%%%%%%%%%%%%%%%%%%%%%%%%%%%%%%%%%%%%%%%%%%%%%%%%%%%%%%%%%%%%%

\section{Proof of Proposition~\ref{pdf}}
\label{proof_pdf}
\begin{proof}
The assumptions of Proposition~\ref{euclidean} being verified, we have that $\pi_n \as \pi_0 \neq 0$. Then, as can be verified from~(\ref{fn}), to show the desired result, it suffices to show that
\begin{multline*}
\sup_{t \in \R} \left| \frac{1}{nh_n} \sum_{i=1}^n \kappa \left(\frac{t-Y_i+\alpha_n+\beta_n X_i}{h_n}\right) \right. \\ \left. -\frac{(1-\pi_0)}{n} \sum_{i=1}^n f^*(t+\alpha_n+\beta_n X_i) - \pi_0 f(t) \right| \as 0.
\end{multline*}
The previous supremum is smaller than $I_n + (1 - \pi_0) II_n$, where
\begin{multline*}
I_n = \sup_{t \in \R} \left| \frac{1}{nh_n} \sum_{i=1}^n \kappa \left(\frac{t-Y_i+\alpha_n+\beta_n X_i}{h_n}\right) \right. \\ \left. - (1- \pi_0)\int_\R f^*(t + \alpha_0 + \beta_0 x) f_X(x) \dd x - \pi_0 f(t) \right|,
\end{multline*}
and 
$$
II_n = \sup_{t \in \R} \left| \frac{1}{n} \sum_{i=1}^n f^*(t+\alpha_n+\beta_n X_i) -  \int_\R f^*(t + \alpha_0 + \beta_0 x) f_X(x) \dd x \right|. 
$$
Let us first show that $I_n \as 0$. Consider the class $\FF$ of measurable functions from $\R^2$ to $\R$ defined by
\begin{multline*}
\FF = \left\{ (x,y) \mapsto \psi_{\vec \eta,t,h}(x) = \kappa \left( \frac{t-y+\alpha+\beta x}{h} \right) : \vec \eta = (\alpha,\beta) \in \R^2, \right. \\ \left. t \in \R, h \in (0, \infty)\right\},
\end{multline*}
and notice that
$$
\P_n \psi_{\vec \eta_n,t,h_n} = \frac{1}{n} \sum_{i=1}^n \kappa \left(\frac{t-Y_i+\alpha_n+\beta_n X_i}{h_n}\right), \qquad t \in \R,
$$
where $\vec \eta_n = (\alpha_n,\beta_n)$. Then, $I_n \leq I_n' + I_n''$, where
\begin{equation}
\label{In'}
I_n' = \frac{1}{h_n} \sup_{t \in \R} \left|  \P_n \psi_{\vec \eta_n,t,h_n} - P \psi_{\vec \eta_n,t,h_n} \right| = \frac{1}{h_n \sqrt{n}} \sup_{t \in \R} \left|  \G_n \psi_{\vec \eta_n,t,h_n} \right|, 
\end{equation}
and
$$
I_n'' = \sup_{t \in \R} \left| \frac{1}{h_n} P \psi_{\vec \eta_n,t,h_n} - g(t) \right|,
$$
with
$$
g(t) = (1- \pi_0)\int_\R f^*(t + \alpha_0 + \beta_0 x) f_X(x) \dd x + \pi_0 f(t), \qquad t \in \R.
$$
Let us first deal with $I_n''$.  From~(\ref{fYX}), notice that 
$$
g(t) = \int_\R f_{Y|X}(t + \alpha_0 + \beta_0 x | x) f_X(x) \dd x, \qquad t \in \R.
$$ 
Also, for any $t \in \R$,
$$
P \psi_{\vec \eta_n,t,h_n} = \int_\R  \left\{ \int_\R \kappa \left( \frac{t-y+\alpha_n+\beta_n x}{h_n} \right) f_{Y | X} (y | x) \dd y \right\} f_X(x) \dd x ,
$$
which, using the change of variable $u = (t-y+\alpha_n+\beta_n x)/h_n$ in the inner integral, can be rewritten as
$$
P \psi_{\vec \eta_n,t,h_n} = h_n \int_\R  \left\{ \int_\R  \kappa (u) f_{Y | X} (t + \alpha_n + \beta_n x - u h_n | x) \dd u \right\} f_X(x) \dd x.
$$
Since $\kappa$ is a p.d.f.\ from Assumption A4~(ii), it follows that, for any $t \in \R$, 
\begin{multline*}
\frac{1}{h_n} P \psi_{\vec \eta_n,t,h_n} - g(t) = \\ \int_\R  \left[ \int_\R \kappa (u) \left\{ f_{Y | X} (t + \alpha_n + \beta_n x - u h_n | x) - f_{Y|X}(t + \alpha_0 + \beta_0 x | x) \right\} \dd u \right] f_X(x) \dd x.
\end{multline*}
As $f_{Y|X}'(\cdot|x)$, the derivative of $f_{Y|X}(\cdot|x)$, exists for all $x \in \XX$ under Assumption A3~(ii), the mean value theorem enables us to write
\begin{multline*}
I_n''  \leq \left\{ \sup_{t \in \R} (f^*)'(t) + \sup_{t \in \R} f'(t) \right\} \\ \times \int_\R  \left[ \int_\R \kappa (u) \left\{ |\alpha_n - \alpha_0| + |\beta_n - \beta_0| |x| + |u| h_n  \right\} \dd u \right] f_X(x) \dd x.
\end{multline*}
Hence,
\begin{multline*}
I_n''  \leq \left\{ \sup_{t \in \R} (f^*)'(t) + \sup_{t \in \R} f'(t) \right\}  \\ \times \left\{ |\alpha_n - \alpha_0| + |\beta_n - \beta_0| \E(|X|) + h_n \int_\R |u| \kappa(u) \dd u  \right\},
\end{multline*}
which, from Assumptions A1~(i), A3~(ii), A4~(ii), and Proposition~\ref{euclidean}~(i), implies that $I_n'' \as 0$.

Let us now show that $I_n' \as 0$. Since $\kappa$ has bounded variations from Assumption A4~(ii), it can be written as $\kappa_1-\kappa_2$, where both $\kappa_1$ and $\kappa_2$ are bounded nondecreasing functions on~$\R$. Without loss of generality, we shall assume that $\kappa$, $\kappa_1$ and $\kappa_2$ are bounded by 1. Then, for $j=1,2$, we define
$$
\FF_j=\left\{ (x,y) \mapsto \kappa_j\left(\frac{t-y+\alpha+\beta x}{h}\right) : (\alpha,\beta,t) \in \R^3, h \in (0, \infty)\right\}.
$$ 
Proceeding as in~\cite[proof of Lemma~22]{NolPol87}, let us first show that $\FF_j$ is a $VC$ class for $j=1,2$. Let $\kappa^-_j$ be the generalized inverse of $\kappa_j$ defined by $\kappa^-_j(c)=\inf\{x\in\R : \kappa_j(x)\geq c\}$, $c \in \R$. We consider the partition $\{C_1,C_2\}$ of $\R$ defined by
$$
\{ x \in \R : \kappa_j(x)>c \}= 
\left\{
\begin{array}{lll}
(\kappa_j^-(c), \infty ) & \mbox{ if } & c \in C_1,\\
\left[ \kappa_j^-(c), \infty ) \right.& \mbox{ if } & c \in C_2.
\end{array}
\right.
$$
Given $(\alpha,\beta,t) \in \R^3$ and $h \in (0, \infty)$, the set
\begin{equation}
\label{set}
\left\{ (x,y,c) \in \R^3 : \kappa_j \left( \frac{t-y+\alpha+\beta x}{h} \right) > c \right\} 
\end{equation}
can therefore be written as the union of 
$$
\left\{ (x,y,c) \in \R^2 \times C_1 : t - y + \alpha + \beta x - h \kappa_j^-(c) > 0 \right\}
$$
and
$$
\left\{ (x,y,c) \in \R^2 \times C_2 : t - y + \alpha + \beta x - h\kappa_j^-(c) \geq 0 \right\}.
$$
Now, let $f_{\alpha,\beta,t,h}(x,y,c) = t - y + \alpha + \beta x - h\kappa_j^-(c)$. The functions $f_{\alpha,\beta,t,h}$, with $(\alpha,\beta,t) \in \R^3$ and $h \in (0, \infty)$, span a finite-dimensional vector space. Hence, from Lemma~18~(ii) in~\cite{NolPol87}, the collections of all sets $\{(x,y,c) \in \R^2 \times C_1 : f_{\alpha,\beta,t,h}(x,y,c) > 0\}$ and $\{(x,y,c) \in \R^2 \times C_2 : f_{\alpha,\beta,t,h}(x,y,c) \geq 0\}$ are $VC$ classes. It follows that the collection of subgraphs of $\FF_j$ defined by~(\ref{set}), and indexed by $(\alpha,\beta,t) \in \R^3$ and $h \in (0, \infty)$, is also $VC$, which implies that $\FF_j$ is a $VC$ class of functions.

Given a probability distribution $Q$ on $\R^2$, recall that $L_2(Q)$ is the norm defined by $(Q f^2)^{1/2}$, with $f$ a measurable function from $\R^2$ to $\R$. Given a class $\GG$ of measurable functions from $\R^2$ to $\R$, the {\em covering number} $N ( \varepsilon, \GG, L_2(Q) )$ is the minimal number of $L_2(Q)$-balls of radius $\varepsilon > 0$ needed to cover the set $\GG$. From Lemma~16 in~\cite{NolPol87}, since $\FF = \FF_1 - \FF_2$, and since $\FF_1$ and $\FF_2$ have for an envelope the constant function 1 on $\R^2$, we have 
$$
\sup_{Q}N( 2\varepsilon,\FF, L_2(Q))\leq \sup_{Q}N(\varepsilon,\FF_1, L_2(Q))\times \sup_{Q}N(\varepsilon,\FF_2, L_2(Q)),
$$
for  probability measures $Q$ on $\R^2$. Using the fact that both $\FF_1$ and $\FF_2$ are $VC$ classes of functions with constant envelope 1, from Theorem~2.6.7 in~\cite{vanWel96} (see also the discussion on the top of page 246), we obtain that there exist constants $u > 0$ and $v > 0$ that depend on $\FF_1$ and $\FF_2$ such that
$$
\sup_{Q}N(\varepsilon,\FF, L_2(Q))\leq \left(\frac{u}{\varepsilon}\right)^{v}, \qquad\mbox{for every } 0<\varepsilon<u.
$$  
Then, by Theorem~2.14.9 in~\cite{vanWel96}, there exists constants $c_1 > 0$ and $c_2 > 0$ such that, for $\varepsilon>0$ large enough, 
$$
\Pr^*\left( \sup_{f \in \FF} |\G_n f| > \varepsilon \right) \leq c_1 \varepsilon^{c_2}\exp(-2\varepsilon^2).
$$
Starting from~(\ref{In'}), we thus obtain that, for every $\varepsilon>0$ and $n$ large enough,
\begin{multline*}
\Pr^*(I_n' > \varepsilon) = \Pr^* \left( \sup_{t \in \R} \left|  \G_n \psi_{\vec \eta_n,t,h_n} \right| >  \sqrt{n}h_n\varepsilon \right) \\ \leq \Pr^*\left( \sup_{f\in\FF} |\G_n f | >\sqrt{n}h_n\varepsilon\right) \leq c_1(\sqrt{n}h_n \varepsilon)^{c_2}\exp(-2nh_n^2\varepsilon^2)=a_n.
\end{multline*}
From Assumption A4~(i), it can be verified that $a_{n+1}/a_n \to 1$ and that $n(a_{n+1}/a_n-1)\to -\infty$. It follows from Raabe's rule that the series with general term $a_n$ converges. The Borel-Cantelli lemma enables us to conclude that $I_n' \as 0$, and we therefore obtain that $I_n \as 0$.

Since $f^*$ has bounded variations from Assumption A4~(iii), one can proceed along the same lines to show that $II_n \as 0$.
\end{proof}

%%%%%%%%%%%%%%%%%%%%%%%%%%%%%%%%%%%%%%%%%%%%%%%%%%%%%%%%%%%%%%%%%%%%%%%%%%%%%

\section{Proof of Proposition~\ref{mult}}
\label{proof_mult}

The proof of Proposition~\ref{mult} is based on the following lemma.

\begin{lem}
\label{extended}
Let $\Theta \subset \R^p$ and $H_0 \subset \R^q$ for some integers $p,q > 0$, let $\FF = \{f_{\theta,\zeta} : \theta \in \Theta\, , \zeta \in H_0 \}$ be a class of measurable functions from $\R^2$ to $\R$, and let $\zeta_n$ be an estimator of $\zeta_0 \in H_0$ such that $\Pr(\zeta_n \in H_0) \to 1$. If $\FF$ is $P$-Donsker and 
$$
\sup_{\theta \in \Theta} P(f_{\theta,\zeta_n} -f_{\theta,\zeta_0} )^2 \p 0 ,
$$
then,
$$
\sup_{\theta \in \Theta} \left| \G_n' (f_{\theta,\zeta_n} -f_{\theta,\zeta_0} ) \right| \p 0.
$$
\end{lem}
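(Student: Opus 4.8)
The plan is to follow the same route as in the proof of Lemma~\ref{lemJK}, but with the empirical process $\G_n$ replaced by the multiplier process $\G_n'$, the key input now being the \emph{unconditional} multiplier central limit theorem rather than the ordinary functional CLT. First I would invoke Theorem~10.1 and Corollary~10.3 in~\cite{Kos08}: since $\FF$ is $P$-Donsker and the multipliers $\xi_1,\dots,\xi_n$ have mean $0$, variance $1$, satisfy $\int_0^\infty\{\Pr(|\xi_1|>x)\}^{1/2}\,\dd x<\infty$ and are independent of the data, the process $\{\G_n' f : f \in \FF\}$ converges weakly in $\ell^\infty(\FF)$ to a tight Gaussian process, the convergence being understood on the joint probability space carrying both the observations and the multipliers.

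Second, I would turn this weak convergence into an asymptotic equicontinuity statement. By the characterization of weak convergence to a tight limit in $\ell^\infty(\FF)$ (see e.g.\ the discussion around Theorem~1.5.7 in~\cite{vanWel96}), the sequence $\{\G_n'\}$ is asymptotically uniformly $\rho_P$-equicontinuous in probability, where $\rho_P(f,g)=\{P(f-g)^2-(P(f-g))^2\}^{1/2}$. Concretely, for every $\varepsilon>0$,
$$
\lim_{\delta\downarrow 0}\ \limsup_{n\to\infty}\ \Pr^*\left(\sup_{\substack{f,g\in\FF\\ \rho_P(f,g)<\delta}}|\G_n'(f-g)|>\varepsilon\right)=0.
$$
Since $\rho_P(f,g)\le\{P(f-g)^2\}^{1/2}$, this continues to hold when the inner supremum is restricted to pairs with $P(f-g)^2<\delta^2$.

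Third, I would transfer this bound to the random index $\zeta_n$. Fix $\varepsilon,\eta>0$ and choose $\delta>0$ so that the $\limsup$ above is below $\eta$. On the event
$$
A_n=\{\zeta_n\in H_0\}\cap\Big\{\sup_{\theta\in\Theta}P(f_{\theta,\zeta_n}-f_{\theta,\zeta_0})^2<\delta^2\Big\},
$$
every pair $(f_{\theta,\zeta_n},f_{\theta,\zeta_0})$ lies in $\FF\times\FF$ and satisfies $\rho_P(f_{\theta,\zeta_n},f_{\theta,\zeta_0})<\delta$, so that $\sup_{\theta}|\G_n'(f_{\theta,\zeta_n}-f_{\theta,\zeta_0})|$ is dominated by the inner supremum appearing in the equicontinuity bound. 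Hence
$$
\Pr^*\Big(\sup_{\theta\in\Theta}|\G_n'(f_{\theta,\zeta_n}-f_{\theta,\zeta_0})|>\varepsilon\Big)\le\Pr^*(A_n^{\,c})+\Pr^*\Big(\sup_{\rho_P(f,g)<\delta}|\G_n'(f-g)|>\varepsilon\Big).
$$
The two hypotheses $\Pr(\zeta_n\in H_0)\to1$ and $\sup_{\theta}P(f_{\theta,\zeta_n}-f_{\theta,\zeta_0})^2\p0$ give $\Pr^*(A_n^{\,c})\to0$, while the last term has $\limsup$ below $\eta$ by the choice of $\delta$. Letting $\eta\downarrow0$ yields $\sup_{\theta}|\G_n'(f_{\theta,\zeta_n}-f_{\theta,\zeta_0})|\p0$.

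The main obstacle, and the only place requiring genuine care, is the first step: one must use the \emph{unconditional} (rather than conditional) multiplier CLT, so that the resulting weak convergence, the asymptotic equicontinuity, and all the outer probabilities above are understood on the joint probability space of the data and the multipliers; this is precisely what legitimizes combining the stochastic equicontinuity with the data-dependent estimator $\zeta_n$. The remainder is a routine adaptation of the empirical-process argument used in Lemma~\ref{lemJK}.
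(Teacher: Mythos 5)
Your proposal is correct and takes essentially the same route as the paper: the paper's proof invokes the unconditional multiplier central limit theorem \cite[Theorem 10.1]{Kos08} to get weak convergence of $\G_n'$ in $\ell^\infty(\FF)$ and then declares the result a ``straightforward adaptation'' of the proof of Theorem 2.1 in \cite{vanWel07}, which is precisely the asymptotic $\rho_P$-equicontinuity argument you carry out. Your write-up simply makes explicit the adaptation the paper leaves implicit, including the correct emphasis on the unconditional (joint-space) version of the multiplier CLT needed to handle the data-dependent index $\zeta_n$.
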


\begin{proof}
The result is the analogue of Theorem 2.1 in~\cite{vanWel07} in which $\G_n$ is replaced by $\G_n'$. The proof of Theorem 2.1 relies on the fact that $\G_n \leadsto \G$ in $\ell^\infty(\FF)$ and on the uniform continuity of the sample paths of the $P$-Brownian bridge $\G$; see~\cite[proof of Theorem 19.26]{van98} and~\cite{van02}. From the functional multiplier central limit theorem~\cite[see e.g.][Theorem 10.1]{Kos08}, we know that $(\G_n,\G_n')$ converges weakly in $\{ \ell^\infty(\FF) \}^2$ to $(\G,\G')$, where $\G'$ is an independent copy of the $\G$. The desired result therefore follows from a straightforward adaptation of the proof of Theorem 2.1 in~\cite{vanWel07}.
\end{proof}

\begin{proof}[\bf Proof of Proposition~\ref{mult}]
Since Assumptions A1~(ii) and A2 hold, we have from Lemma~\ref{lemDonsker} that $\FF^J$, $\FF^K$ and $\FF^{\alpha,\beta,\pi}$ are $P$-Donsker. Furthermore, $\E(X)$ is finite from Assumption A1~(i), the function $f$ is bounded from Assumption A3~(i), and so is the function $t \mapsto P (\psi_{t,\vec \eta_0}^K  - \psi_{t,\vec \eta_0}^J)$ from the definitions of $J$ and $K$ given in~(\ref{J}) and~(\ref{K}). Hence, from the functional multiplier central limit theorem~\cite[see e.g.][Theorem 10.1]{Kos08} and the continuous mapping theorem, we obtain that
$$
\left( t \mapsto \G_n  \psi_{t,\vec \gamma_0}^F, t \mapsto \G_n' \psi_{t,\gamma_0}^F \right) \leadsto \left( t \mapsto \G  \psi_{t,\vec \gamma_0}^F, t \mapsto \G' \psi_{t,\vec \gamma_0}^F \right)
$$
in $\{ \ell^\infty (\overline{\R}) \}^2$, where $\psi_{t,\vec \gamma_0}^F$ is defined in~(\ref{psiF}) and $t \mapsto \G' \psi_{t,\vec \gamma_0}^F$ is an independent copy of $t \mapsto \G  \psi_{t,\vec \gamma_0}^F$. It remains to show that 
$$
\sup_{t \in \R} \left| \G_n' \left( \hat \psi_{t,\vec \gamma_n}^F - \psi_{t,\vec \gamma_0}^F\right) \right| \p 0 . 
$$
From~(\ref{psiF}) and~(\ref{hatpsiF}), for any $t \in \R$, we can write
\begin{multline}
\label{diff}
\left| \G_n' \left( \hat \psi_{t,\vec \gamma_n}^F - \psi_{t,\vec \gamma_0}^F \right) \right| \leq  \left| \G_n' \left( \frac{1}{\pi_n} \psi_{t,\vec \eta_n}^J - \frac{1}{\pi_0} \psi_{t,\vec \eta_0}^J \right) \right| \\ + \left| \G_n' \left( f_n(t) \hat \psi_{\vec \gamma_n}^\alpha  - f(t) \psi_{\vec \gamma_0}^\alpha \right) \right| +  \left| \G_n' \left( f_n(t) \bar X \hat \psi_{\vec \gamma_n}^\beta - f(t) \E(X) \psi_{\vec \gamma_0}^\beta \right) \right| \\ + \left| \G_n' \left(  \frac{1 - \pi_n}{\pi_n}  \psi_{t,\vec \eta_n}^K - \frac{1 - \pi_0}{\pi_0}  \psi_{t,\vec \eta_0}^K \right) \right| \\ + \left| \G_n' \left( \frac{\P_n \psi_{t,\vec \eta_n}^K  - \P_n \psi_{t,\vec \eta_n}^J}{\pi_n^2} \hat \psi_{\vec \gamma_n}^\pi - \frac{P \psi_{t,\vec \eta_0}^K  - P \psi_{t,\vec \eta_0}^J}{\pi_0^2} \psi_{\vec \gamma_0}^\pi  \right) \right|.
\end{multline}
The last absolute value on the right of the previous display is smaller than
\begin{multline}
\label{last}
\left|  \frac{\P_n \psi_{t,\vec \eta_n}^K  - \P_n \psi_{t,\vec \eta_n}^J}{\pi_n^2} - \frac{P \psi_{t,\vec \eta_0}^K  - P \psi_{t,\vec \eta_0}^J}{\pi_0^2}  \right| \left|  \G_n' \psi_{\vec \gamma_0}^\pi \right| \\ + \left| \frac{P \psi_{t,\vec \eta_0}^K  - P \psi_{t,\vec \eta_0}^J}{\pi_0^2} \right| \left| \G_n' \left(  \hat \psi_{\vec \gamma_n}^\pi - \psi_{\vec \gamma_0}^\pi  \right) \right|. 
\end{multline}
Now,
\begin{multline}
\label{tmp}
\sup_{t \in \R} \left|  \P_n \psi_{t,\vec \eta_n}^K  - \P_n \psi_{t,\vec \eta_n}^J- P \psi_{t,\vec \eta_0}^K  + P \psi_{t,\vec \eta_0}^J  \right| \\ \leq n^{-1/2} \sup_{t \in \R} \left|  \G_n \left( \psi_{t,\vec \eta_n}^K  - \psi_{t,\vec \eta_n}^J- \psi_{t,\vec \eta_0}^K  + \psi_{t,\vec \eta_0}^J \right) \right| \\ + n^{-1/2} \sup_{t \in \R} \left|
\G_n \left( \psi_{t,\vec \eta_0}^K  - \psi_{t,\vec \eta_0}^J \right) \right| + \sup_{t \in \R} \left| P \left(\psi_{t,\vec \eta_n}^K  - \psi_{t,\vec \eta_n}^J - \psi_{t,\vec \eta_0}^K  + \psi_{t,\vec \eta_0}^J \right) \right|.
\end{multline}
Applying the mean value theorem as in the proof of Lemma~\ref{lemQuad}, we obtain that, 
$$
 \sup_{t \in \R} \left| P \left(\psi_{t,\vec \eta}^K  - \psi_{t,\vec \eta}^J - \psi_{t,\vec \eta_0}^K  + \psi_{t,\vec \eta_0}^J \right) \right| \to 0 \qquad \mbox{as} \qquad \vec \eta \to \vec \eta_0,
$$
which, combined with the fact that $\vec \eta_n \as \vec \eta_0$ implies that the last term on the right of~(\ref{tmp}) converges to zero in probability. From Lemma~\ref{lemQuad} and Theorem 2.1 in~\cite{vanWel07}, we obtain that the first term on the right of~(\ref{tmp}) converges to zero in probability. The second term on the right of~(\ref{tmp}) converges to zero in probability because the classes $\FF^J$ and $\FF^K$ are $P$-Donsker. The convergence to zero in probability of the term on the left of~(\ref{tmp}) combined with the fact that $\pi_n \as \pi_0$ and that $| \G_n' \psi_{\vec \gamma_0}^\pi |$ is bounded in probability implies that the first product in~(\ref{last}) converges to zero in probability uniformly in~$t \in \R$. Furthermore, $\FF^{\alpha,\beta,\pi}$ being $P$-Donsker, and since $P \| \Psi_{\vec \gamma_n} \Gamma_n^{-1}\dot \varphi_{\vec\gamma_n} -  \Psi_{\vec \gamma_0} \Gamma_0^{-1}\dot \varphi_{\vec\gamma_0} \|^2 \p 0$ under Assumptions A1~(ii) and A2, we have from Lemma~\ref{extended} that $\G_n' (\hat \psi_{\vec \gamma_n}^\pi - \psi_{\vec \gamma_0}^\pi) \p 0$, which implies that the second product in~(\ref{last}) converges to zero in probability uniformly in~$t \in \R$.

One can similarly show that the other terms on the right of~(\ref{diff}) converge to zero in probability uniformly in~$t \in \R$ using, among other arguments, the fact that, from Lemma~\ref{extended},
\begin{multline*}
\sup_{t \in \R} \left| \G_n' \left( \psi_{t,\vec \eta_n}^J - \psi_{t,\vec \eta_0}^J \right) \right|, \qquad \sup_{t \in \R} \left| \G_n' \left( \psi_{t,\vec \eta_n}^K - \psi_{t,\vec \eta_0}^K\right) \right|, \\ \G_n' (\hat \psi_{\vec \gamma_n}^\alpha - \psi_{\vec \gamma_0}^\alpha), \qquad \mbox{and} \qquad \G_n' (\hat \psi_{\vec \gamma_n}^\beta - \psi_{\vec \gamma_0}^\beta)
\end{multline*}
converge to zero in probability, as well as $\sup_{t \in \R} |f_n(t) - f(t)|$ since the assumptions of Proposition~\ref{pdf} are satisfied. 
\end{proof}

\bibliographystyle{plainnat}
\bibliography{biblio}

\end{document}